\newtheorem{notation}{Notation}
\DeclareMathOperator{\phase}      {phase}
\DeclareMathOperator{\proj}       {proj}
\DeclareMathOperator{\mixsseq}    {mixsseq}
\DeclareMathOperator{\maxmixsseq} {maxmixsseq}
\DeclareMathOperator{\sseq}       {sseq}
\DeclareMathOperator{\maxsseq}    {maxsseq}
\DeclareMathOperator{\fseq}       {fseq}
\DeclareMathOperator{\scenarios}  {scenarios}
\DeclareMathOperator{\maxscenarios}{maxscenarios}
\DeclareMathOperator{\scenario}   {scenario}
\DeclareMathOperator{\finreach}   {finreachable}
\DeclareMathOperator{\reach}      {reachable}
\DeclareMathOperator{\pree}       {pre}
\DeclareMathOperator{\postt}      {post}
\DeclareMathOperator{\MARK}       {markings}
\DeclareMathOperator{\phcMARK}    {phcmarkings}
\DeclareMathOperator{\STEPS}      {steps}
\DeclareMathOperator{\enabled}    {enabled}
\DeclareMathOperator{\Csan}       {csan}
\DeclareMathOperator{\syncycles}  {syncycles}
\newcommand{\drop}[1] { }
\newcommand{\pset}[1] {\mathbb{P}(#1)}
\newcommand{\post}[1] {#1^\bullet}
\newcommand{\pre}[1]  {{}^\bullet{#1}}
\newcommand{\POST}[2] {\postt_{#1}{(#2)}}
\newcommand{\PRE}[2]  {\pree_{#1}{(#2)}}
\newcommand{\STEP}[2] {\boldsymbol{[}#1\boldsymbol{\rangle}_{\!#2}\,}
\newcommand{\mixSTEP}[2]
     {\boldsymbol{[}#1\boldsymbol{\rangle}\!\boldsymbol{\rangle}_{\!#2}\,}
\newcommand{\an}      {\textit{acnet}}
\newcommand{\han}     {\textit{hanet}}
\newcommand{\lan}     {\textit{lanet}}
\newcommand{\bdan}    {\textit{bdanet}}
\newcommand{\on}      {\textit{ocnet}}
\newcommand{\net}     {\textit{net}}
\newcommand{\bdcsan}  {\textit{bdcsan}}
\newcommand{\hcsan}   {\textit{hcsan}}
\newcommand{\lcsan}   {\textit{lcsan}}
\newcommand{\hcson}   {\textit{hcson}}
\newcommand{\lcson}   {\textit{lcson}}
\newcommand{\csan}    {\textit{csan}}
\newcommand{\cson}    {\textit{cson}}
\newcommand{\bsan}    {\textit{bsan}}
\newcommand{\bson}    {\textit{bson}}
\newcommand{\init}    {\textit{init}}
\newcommand{\fin}     {\textit{fin}}
\newcommand{\es}      {\varnothing}
\newcommand{\EOD}     {\hfill $\diamondsuit$}
\newcommand{\id}      {\textit{id}}
\newcommand{\AN}      {\textit{AN}}
\newcommand{\CSAN}    {\textit{CSAN}}
\newcommand{\CSON}    {\textit{CSON}}
\newcommand{\BDCSAN}  {\textit{BDCSAN}}
\newcommand{\WFCSAN}  {\textit{WFCSAN}}
\newcommand{\BSAN}    {\textit{BSAN}}
\newcommand{\BSON}    {\textit{BSON}}
\newcommand{\WFBSAN}  {\textit{WFBSAN}}
\newcommand{\ON}      {\textit{ON}}
\newcommand{\BDAN}    {\textit{BDAN}}
\newcommand{\WFAN}    {\textit{WFAN}}
\newcommand{\DiagramBig}[1][0.5]
     {
     \begin{tikzpicture}[node distance=1.3cm,>=arrow30,%
     line  width=0.3mm,scale=#1,bend angle=45]
     \tikzstyle{box}=[draw,regular polygon,thick,%
     regular polygon sides=4,minimum size=22mm, inner sep = -3pt]
     \tikzstyle{ybox}=[draw,regular polygon,thick,%
     regular polygon sides=4,minimum size=22mm, inner sep = -3pt,fill=yellow]
     \tikzstyle{rbox}=[draw,regular polygon,thick,%
     regular polygon sides=4,minimum size=22mm, inner sep = -3pt,fill=red]
     \tikzstyle{gbox}=[draw,regular polygon,thick,%
     regular polygon sides=4,minimum size=22mm, inner sep = -3pt,fill=green]

     \tikzstyle{dybox}=[draw,regular polygon,thick,style=dashed,%
     regular polygon sides=4,minimum size=22mm, inner sep = -3pt,fill=yellow]
     \tikzstyle{drbox}=[draw,regular polygon,thick,style=dashed,%
     regular polygon sides=4,minimum size=22mm, inner sep = -3pt,fill=red]
     \tikzstyle{dgbox}=[draw,regular polygon,thick,style=dashed,%
     regular polygon sides=4,minimum size=22mm, inner sep = -3pt,fill=green]
     }
\newcommand{\Diagram}[1][0.5]
     {
     \begin{tikzpicture}[node distance=1.3cm,>=arrow30,%
     line  width=0.3mm,scale=#1,bend angle=45]
     \tikzstyle{box}=[draw,regular polygon,thick,rounded corners,%
     regular polygon sides=4,minimum size=14mm, inner sep = -3pt]
     \tikzstyle{ybox}=[draw,regular polygon,thick,rounded corners,%
     regular polygon sides=4,minimum size=14mm, inner sep = -3pt,fill=yellow!40]
     \tikzstyle{rbox}=[draw,regular polygon,thick,rounded corners,%
     regular polygon sides=4,minimum size=14mm, inner sep = -3pt,fill=red!40]
     \tikzstyle{gbox}=[draw,regular polygon,thick,rounded corners,%
     regular polygon sides=4,minimum size=14mm, inner sep = -3pt,fill=green!40]
     \tikzstyle{bbox}=[draw,regular polygon,thick,rounded corners,%
     regular polygon sides=4,minimum size=14mm, inner sep = -3pt,fill=cyan!40]

     \tikzstyle{dybox}=[draw,regular polygon,thick,style=dashed,rounded corners,%
     regular polygon sides=4,minimum size=14mm, inner sep = -3pt,fill=yellow!40]
     \tikzstyle{drbox}=[draw,regular polygon,thick,style=dashed,rounded corners,%
     regular polygon sides=4,minimum size=14mm, inner sep = -3pt,fill=red!40]
     \tikzstyle{dgbox}=[draw,regular polygon,thick,style=dashed,rounded corners,%
     regular polygon sides=4,minimum size=14mm, inner sep = -3pt,fill=green!40]
     \tikzstyle{dbbox}=[draw,regular polygon,thick,style=dashed,rounded corners,%
     regular polygon sides=4,minimum size=14mm, inner sep = -3pt,fill=cyan!40]
     }
\newcommand{\Relation}[2]
     {
     \begin{tikzpicture}[baseline=-0.5cm, node distance=.8cm,>=arrow30,line width=0.3mm]
     \tikzstyle{tlcorner}=[xshift=#2,yshift=#1]
     \tikzstyle{brcorner}=[xshift=#1,yshift=#2]
     \tikzstyle{bgcolor}=[fill=brown!20]
     }
\newcommand{\StandardNet}[1][0.5]
     {
     \begin{tikzpicture}[node distance=1.3cm,>=latex',line  width=0.3mm,scale=#1,auto,bend angle=45]
     \tikzstyle{place}=[draw,circle,thick,minimum size=4.5mm]
     \tikzstyle{transition}=
                [draw,regular polygon,thick,
                 regular polygon sides=4,minimum size=8mm, inner sep = -2pt]
     }
\newcommand{\StandardTS}[1][0.5]
     {
     \begin{tikzpicture}[node distance=0.5cm,>=stealth',bend angle=45,scale=#1]
     }
\newcommand{\dirDOTS}     [2]{\path (#1) edge [dotted](#2);}
\newcommand{\diredge}     [2]{\path (#1) edge [->] (#2);}
\newcommand{\PlacN}[5]{\node(#1)at(#2,#3)[place,tokens=0,label=above:$#5$]      {$#4$};}
\newcommand{\PlacS}[5]{\node(#1)at(#2,#3)[place,tokens=0,label=below:$#5$]      {$#4$};}
\newcommand{\bPlacE}[5]{\node(#1)at(#2,#3)[place,tokens=0,label=right:$#5$,ultra thick]      {$#4$};}
\newcommand{\bPlacW}[5]{\node(#1)at(#2,#3)[place,tokens=0,label=left :$#5$,ultra thick]      {$#4$};}
\newcommand{\Whitetran}[4]{\node (#1) at (#2,#3)[transition] {$#4$};}
\newcommand{\Put}[3]{
	\node at (#1,#2) {#3};
}
	\titlerunning {Structured Acyclic Nets}
	\authorrunning{Structured Acyclic Nets}
\title{Structured Acyclic Nets} 
\author{Mohammed Alahmadi,
                  Salma Alharbi,
                  Talal Alharbi,
                  Nadiyah Almutairi,
                  Tuwailaa Alshammari,
                  Anirban Bhattacharyya,
                  Maciej Koutny, 
                  Bowen Li, and 
                  Brian Randell}
\institute{School of Computing, Newcastle University\\Urban Sciences Building, 
1 Science Square, 
Newcastle Helix\\Newcastle upon Tyne, NE4 5TG, United Kingdom
\\\today}
\begin{document}

\maketitle

\begin{abstract}
The concept of structured occurrence nets 
is an extension of that of occurrence nets which are directed acyclic 
graphs that represent causality and
concurrency information concerning a single execution of a distributed
system. The formalism of structured occurrence nets has been introduced to facilitate
the portrayal and analysis of the behaviours, and in particular
failures, of complex evolving systems. 
Such systems are composed of a large number of sub-systems which may proceed
concurrently and interact with each other and with the external
environment while their behaviour is subject to modification
by other systems.
 
The  purpose of this paper is to provide an extension of structured occurrence nets 
to include models built up of acyclic nets rather than occurrence nets.  
\end{abstract}

\section{Introduction}

The concept of structured occurrence nets~\cite{c1,c51} 
is an extension of that of occurrence nets~\cite{DBLP:journals/tcs/BestD87} which are directed acyclic 
graphs (a subclass of Petri nets) that represent causality and
concurrency information concerning a single execution of a distributed
system. The formalism of structured occurrence nets has been introduced to facilitate
the portrayal and analysis of the behaviours, and in particular
failures, of complex evolving systems. 
 Examples include a large hardware system
which suffers component break-downs, reconfigurations and
replacements, a large distributed system whose software is
being continually updated (or patched), a gang of criminals
whose membership is changing, and an operational railway
system that is being extended. (In these latter cases we are
regarding crimes and accidents as types of failure.) The underlying idea of a structured occurrence net is to combine multiple related occurrence nets by
using various formal relationships (in particular, abstractions)
in order to express dependencies among the component occurrence nets.
By means of these relations, a structured occurrence net is able to portray a more
explicit view of system activity, involving various types of
communication between subsystems, and of system upgrades,
reconfigurations and replacements than is possible with an
occurrence net, so allowing one to document and exploit behavioural
knowledge of (actual or envisaged) complex evolving systems.

Communication structured occurrence nets are
a basic variant of structured occurrence nets that enable
the explicit representation of synchronous and asynchronous
interaction between communicating subsystems.  A communication structured occurrence net 
is composed of a set of distinct component
occurrence nets representing separate subsystems. When it is determined
that there is a potential for an interaction between subsystems,
an asynchronous or synchronous communication link can be
made between events in the different subsystems’ occurrence nets via 
a channel/buffer place.
 
Another variant of structured occurrence nets, behavioural
structured occurrence nets, conveys information
about the evolution of individual systems. They use a two
level view to represent an execution history, with the lower
level providing details of its behaviours during the different
evolution stages represented in the upper level view. Thus a
behavioural
structured occurrence net gives information about the evolution of an individual
system, and the phases of the overall activity are used to
represent each successive stage of the evolution of this system. 
 
This document extends and systematises the ideas contained in~\cite{c1}, after
allowing backward non-de\-ter\-mi\-nism and forward non-de\-ter\-mi\-nism in the representation of 
the components of a complex evolving system. This is achieved by replacing 
occurrence nets with more general acyclic nets.
As a result, it leads to communication structured acyclic nets and 
behavioural
structured acyclic nets generalising previously introduced models.
 
\section{Preliminaries}

All sets used in the relational structures 
considered in this paper are finite which simplifies some of the
definitions and results.
For two sets $X$ and $Y$, $X\subset Y$ means that $X$ is a proper subset of
$Y$, i.e., $X\subseteq Y$ and $X\neq Y$.
The disjoint union of  $X$ and $Y$ is denoted by $X\uplus Y$, and 
nonempty
sets $X_1,\dots, X_k$ form a partition of a sets $X$ if 
$X=X_1\uplus \dots\uplus  X_k$.

For a binary relation $R$, $xRy$ means that $(x,y)\in R$.
The \emph{composition} of two binary relations, $R$ and $Q$,
is a binary relation given
by $R\circ Q\triangleq \{(x,y)\mid \exists z: xRz\wedge zQy\}$.
Moreover, for every $k\geq 1$, w define:
\[
    R^k
    \triangleq
    \left\{
    \begin{array}{l@{~~~}l}
    R & \textit{if}~k=1
    \\
    R\circ R^{k-1} & \textit{otherwise}\;.
    \end{array}
    \right.
\]

The first definition introduces notions related to partial orders
which can be used, e.g., to capture causality is concurrent behaviours.

\begin{definition}[relations and orderings]
\label{def:1}
    Let $X$ be a set and $R\subseteq X\times X$.
\begin{enumerate}
\item
    $\id_X\triangleq\{(x,x)\mid x\in X\}$ is the \emph{identity}
    relation on $X$.
\item
    $R$ is \emph{reflexive}
    if $\id_X\subseteq R$.
\item
    $R$ is \emph{irreflexive}
    if $R\cap\id_X=\es$.
\item
    $R$ is \emph{transitive}
    if $R\circ R\subseteq R$.
\item
    $(X,R)$ is a \emph{partial order} if $R$ is irreflexive
    and transitive.
\item
    $R^+\triangleq R^1\cup R^2 \cup\dots$
    is the \emph{transitive closure} of $R$.
\item
    $R$ is \emph{a\-cyc\-lic} if $R^+\cap \textit{id}_X=\es$.
\EOD
\end{enumerate}
\end{definition}

The following facts follow directly from the definitions.

\begin{proposition}
\label{prop:1}
    Let $X$ be a set and $R\subseteq X\times X$.
\begin{enumerate}
\item
    If $(X,R)$ is a partial order
    then $R$ is a\-cyc\-lic.
\item
    If $R$ is a\-cyc\-lic then
    $(X,R^+)$ is a partial order.
\item
    $R$ is a\-cyc\-lic iff
    there are no $x_1,\dots,x_k(= x_1)\in X$ ($k\geq 2$) such that
    $x_1 R x_2,\dots,x_{k-1} R x_k$.
\end{enumerate}
\end{proposition}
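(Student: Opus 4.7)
The plan is to handle the three claims in order, each by directly unfolding the definitions given in Definition~\ref{def:1}.

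For claim (1), I start from the assumption that $R$ is irreflexive and transitive. A short induction on $k$ using $R\circ R\subseteq R$ shows that $R^k\subseteq R$ for every $k\geq 1$, so $R^+=R$. Combining this with irreflexivity, $R^+\cap\id_X = R\cap\id_X=\es$, which is precisely acyclicity.

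For claim (2), I verify the two defining properties of a partial order for $R^+$. Irreflexivity of $R^+$ is simply a restatement of the acyclicity hypothesis $R^+\cap\id_X=\es$. Transitivity follows from the observation that $R^k\circ R^l\subseteq R^{k+l}\subseteq R^+$ for all $k,l\geq 1$, so $R^+\circ R^+\subseteq R^+$.

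For claim (3), I prove both directions by translating membership in $R^+$ into the existence of an $R$-path. If a sequence $x_1 R x_2,\dots,x_{k-1} R x_k$ with $x_k=x_1$ and $k\geq 2$ exists, then iterated application of the definition of composition yields $x_1\, R^{k-1}\, x_1$ with $k-1\geq 1$, hence $x_1\, R^+\, x_1$, contradicting acyclicity. Conversely, if $R$ is not acyclic, pick $x$ and $m\geq 1$ with $x\, R^m\, x$; expanding the $m$-fold composition produces intermediate elements $x=x_1,x_2,\dots,x_{m+1}=x$ with $x_i R x_{i+1}$ for all $i$, a sequence of the required form (with $k=m+1\geq 2$).

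None of these steps is delicate; the only thing to be careful about is keeping track of whether the ``length'' of the cycle is counted in nodes or in edges when invoking the definition of $R^+$, so that the $k\geq 2$ threshold in claim (3) matches the $k\geq 1$ indexing in the definition of transitive closure. Beyond that bookkeeping, the proofs are routine calculations with composition of relations.
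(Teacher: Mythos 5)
Your proof is correct. The paper gives no argument at all for this proposition (it merely states that the facts ``follow directly from the definitions''), and your write-up is precisely the routine unfolding of Definition~\ref{def:1} that the authors intend---the induction giving $R^{+}=R$ for a partial order, the verification that $R^{+}$ inherits irreflexivity from acyclicity and transitivity from $R^{k}\circ R^{l}\subseteq R^{k+l}$, and the node/edge bookkeeping translating membership in $R^{+}$ into an $R$-path are all sound.
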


Rather than using sequences of events to represent possible
executions of concurrent systems, we will use
sequences of sets of events. The following definition introduces
two useful notions related to such sequences.

\begin{definition}[sequence of sets]
\label{def:1aa}
    Let $\sigma=X_1\dots X_k$ be a sequence of sets and $X$
    be a set.
\begin{enumerate}
\item
    $\bigcup\sigma\triangleq X_1\cup\dots\cup X_k$ denotes the
    set of elements \emph{occurring} in $\sigma$.
\item
    $\sigma|_X\triangleq (X_1\cap X)\dots (X_k\cap X)$ denotes the
    \emph{restriction} of $\sigma$ to the elements of $X$.
\item
    $\sigma\upharpoonright_X$ is obtained from $\sigma|_X$ by deleting
    all the occurrences of the empty set.
\EOD
\end{enumerate}
\end{definition}

\begin{example}
\label{ex-0s}
    For $\sigma=\{a,b\}\{a,c\}\{d\}$,
    $\bigcup\sigma=\{a,b,c,d\}$,
    $\sigma|_{\{a,b\}}=\{a,b\}\{a\}\es$, and
    $\sigma\upharpoonright_{\{a,b\}}=\{a,b\}\{a\}$.
\EOD
\end{example}

When defining or proving properties of the elements
of a set $X$ for which we are given an
a\-cyc\-lic relation $R$, one can  apply the principle of mathematical
induction. Typical patterns are as follows:

\begin{description}
\item
[defining property $P_x$ for every $x\in X$]{\ }
\\
First define $P_x$  for every $x$ such that
$\{y\mid yRx\}=\es$.
Then, for every other $x\in X$, define $P_x$ assuming
that $P_x$ has been defined for each element $y\in X$
such that $yR^+x$.
\item
[proving property $P_x$ for every $x\in X$]{\ }
\\
First prove $P_x$ for every $x$ such that
$\{y\mid yRx\}=\es$.
Then, for every other $x\in X$, prove $P_x$ assuming
that $P$ holds for each element $y\in X$
such that $yR^+x$.
\end{description}

\section{A\-cyc\-lic nets}

This section is concerned with Petri nets which
generalise occurrence nets and can provide  a direct support for
causality analysis.

The following are intuitive explanations of the main
concepts defined in this and subsequent sections
using terminology
based on accident/crime investigations:
\begin{description}

\item[a\-cyc\-lic net]{\ }
\\
A basic fragment  of `database' of all facts (both actual or hypothetical 
represented using places,
transitions, and arcs linking them)
accumulated during an investigation. Transitions (events) and
places (conditions/local states)
are related through  arrows representing causal and/or temporal
dependencies.
Hence, the
representation is required to be a\-cyc\-lic (this is a minimal requirement).
Acyc\-lic nets can represent alternative ways of interpreting
what has happened, and so may exhibit (backward and forward) non-de\-ter\-mi\-nism.

\item[backward de\-ter\-mi\-nistic a\-cyc\-lic net]{\ }
\\
An a\-cyc\-lic net such that for each event it is possible to state
precisely
which other events must have preceded it.

\item[occurrence net]{\ }
\\
An a\-cyc\-lic net providing a complete  record of all causal
dependencies between events  involved in a single `causal history'.
\end{description}

\begin{figure}[ht]
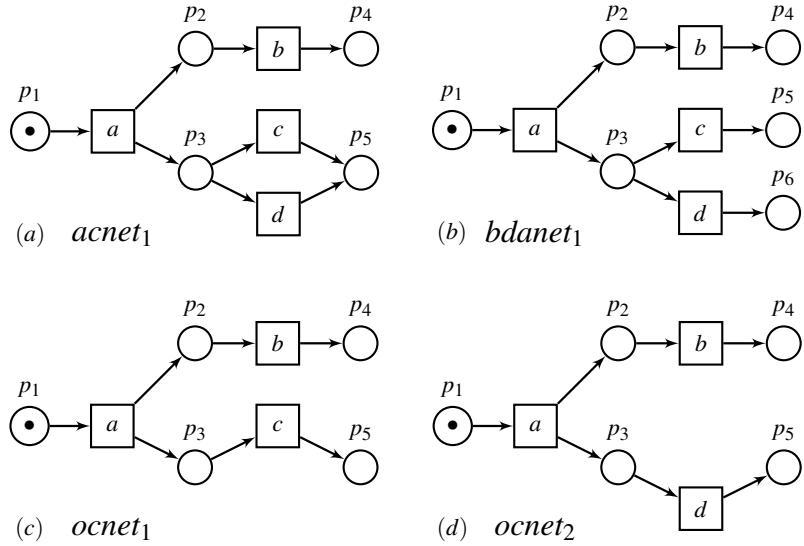

\begin{center}

\begin{tabular}{l@{~~~~~~~}l}
\StandardNet[0.55]
    \PlacN{P1}{ 0}{2}{\bullet}{p_1}
    \PlacN{P2}{ 4}{4}{}{p_2}
    \PlacN{P3}{ 4}{1}{}{p_3}
    \PlacN{P4}{ 8}{4}{}{p_4}
    \PlacN{P5}{ 8}{1}{}{p_5}

    \Whitetran{A}{ 2}{2}{a}
    \Whitetran{B}{ 6}{4}{b}
    \Whitetran{C}{ 6}{2}{c}
    \Whitetran{D}{ 6}{0}{d}

    \diredge{P1}{A}
    \diredge{P2}{B}
    \diredge{P3}{C}
    \diredge{P3}{D}

    \diredge{A}{P2}
    \diredge{A}{P3}
    \diredge{B}{P4}
    \diredge{C}{P5}
    \diredge{D}{P5}
\Put{0}{-.5}{$(a)$}
\Put{2}{-.5}{{\large$\an_1$}}
\end{tikzpicture}
&
\StandardNet[0.55]
    \PlacN{P1}{ 0}{2}{\bullet}{p_1}
    \PlacN{P2}{ 4}{4}{}{p_2}
    \PlacN{P3}{ 4}{1}{}{p_3}
    \PlacN{P4}{ 8}{4}{}{p_4}
    \PlacN{P5}{ 8}{2}{}{p_5}
    \PlacN{P6}{ 8}{0}{}{p_6}

    \Whitetran{A}{ 2}{2}{a}
    \Whitetran{B}{ 6}{4}{b}
    \Whitetran{C}{ 6}{2}{c}
    \Whitetran{D}{ 6}{0}{d}

    \diredge{P1}{A}
    \diredge{P2}{B}
    \diredge{P3}{C}
    \diredge{P3}{D}

    \diredge{A}{P2}
    \diredge{A}{P3}
    \diredge{B}{P4}
    \diredge{C}{P5}
    \diredge{D}{P6}
\Put{0}{-.5}{$(b)$}
\Put{2}{-.5}{{\large$\bdan_1$}}
\end{tikzpicture}
\\[4mm]
\StandardNet[0.55]
    \PlacN{P1}{ 0}{2}{\bullet}{p_1}
    \PlacN{P2}{ 4}{4}{}{p_2}
    \PlacN{P3}{ 4}{1}{}{p_3}
    \PlacN{P4}{ 8}{4}{}{p_4}
    \PlacN{P5}{ 8}{1}{}{p_5}

    \Whitetran{A}{ 2}{2}{a}
    \Whitetran{B}{ 6}{4}{b}
    \Whitetran{C}{ 6}{2}{c}

    \diredge{P1}{A}
    \diredge{P2}{B}
    \diredge{P3}{C}

    \diredge{A}{P2}
    \diredge{A}{P3}
    \diredge{B}{P4}
    \diredge{C}{P5}
\Put{0}{-.5}{$(c)$}
\Put{2}{-.5}{{\large$\on_1$}}
\end{tikzpicture}
&
\StandardNet[0.55]
    \PlacN{P1}{ 0}{2}{\bullet}{p_1}
    \PlacN{P2}{ 4}{4}{}{p_2}
    \PlacN{P3}{ 4}{1}{}{p_3}
    \PlacN{P4}{ 8}{4}{}{p_4}
    \PlacN{P5}{ 8}{1}{}{p_5}

    \Whitetran{A}{ 2}{2}{a}
    \Whitetran{B}{ 6}{4}{b}
    \Whitetran{D}{ 6}{0}{d}

    \diredge{P1}{A}
    \diredge{P2}{B}
    \diredge{P3}{D}

    \diredge{A}{P2}
    \diredge{A}{P3}
    \diredge{B}{P4}
    \diredge{D}{P5}
\Put{0}{-.5}{$(d)$}
\Put{2}{-.5}{{\large$\on_2$}}
\end{tikzpicture} 
\end{tabular}
\end{center}
\caption{A\-cyc\-lic nets with initial markings.}
\label{fig-1}
\end{figure}

\begin{definition}[a\-cyc\-lic net]
\label{def:2}
    An \emph{a\-cyc\-lic net} is a
    triple $\an=(P,T,F)$, where $P$ and $T$ are disjoint
    finite sets of \emph{places} and \emph{transitions}
    respectively,
    and $F\subseteq(P\times T)\cup(T\times P)$ is the
    \emph{flow relation} such that:
\begin{enumerate}
\item
    $P$ is nonempty and $F$ is a\-cyc\-lic.
\item
    For every $t\in T$, there are
    $p,q\in P$
    such that $pFt$ and $tFq$.
\end{enumerate}
    Notation: $\AN$ is the set of all a\-cyc\-lic nets.
\EOD
\end{definition}
Graphically, places are represented by circles,
transitions by boxes, and arcs between the nodes (i.e., places
and transitions)
represent the flow relation.
If it is important to
indicate explicitly  $\an$, we denote
$P$, $T$, $F$  by
$P_\an$, $T_\an$, $F_\an$, respectively.

In addition to the a\-cyc\-licity of $F$, it is required that each event
has at least one pre-condition (pre-place) and at least one post-condition (post-place).
A\-cyc\-lic net can exhibit backward non-de\-ter\-mi\-nism
(more than one arrow incoming to a place) as well as forward
non-de\-ter\-mi\-nism
(more than one arrow outgoing from a place).
 
\begin{notation}[direct precedence in a\-cyc\-lic net]
\label{not:1da}
Let $\an$ be an a\-cyc\-lic net.
To indicate relationships between different nodes, for all $x\in P_\an\cup T_\an$
and $X\subseteq P_\an\cup T_\an$, we denote the
\emph{directly preceding} and \emph{directly following} nodes as follows:
\[
\begin{array}{lllll@{~~~~}lllll}
  \pre{x}
    & =
    & \PRE{\an}{x}
    & \triangleq
    & \{z \mid zF_\an x\}
    & \pre{X}
    & =
    & \PRE{\an}{X}
    & \triangleq
    & \bigcup\{\pre{z}\mid z\in X\}
\\
  \post{x}
    & =
    & \POST{\an}{x}
    & \triangleq
    & \{z \mid xF_\an z\}
    & \post{X}
    & =
    & \POST{\an}{X}
    & \triangleq
    & \bigcup\{\post{z}\mid z\in X\}\;.
\end{array}
\]
Moreover, the \emph{initial} and \emph{final} places   are respectively
given by:
\[
    P_\an^\init\triangleq\{p\in P\mid \pre{p}=\es\}
    ~~\mbox{and}~~
    P_\an^\fin\triangleq\{p\in P\mid \post{p}=\es\} \;.
\]
\end{notation}
Note that having  the notations like $\pre{x}$ in addition to more
explicit $\PRE{\an}{x}$
helps to keep some of the subsequent formulas short. 
 
\begin{proposition}
\label{prop-vvcc}
    $P_\an=P_\an^\init\uplus\POST{\an}{T_\an}
    =P_\an^\fin\uplus\PRE{\an}{T_\an} $,
   for every  a\-cyc\-lic net $\an$.
\end{proposition}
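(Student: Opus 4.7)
The plan is to verify both equalities as simple partitions, exploiting the bipartite structure of the flow relation $F_\an \subseteq (P_\an \times T_\an) \cup (T_\an \times P_\an)$. I will handle the first equality $P_\an = P_\an^\init \uplus \POST{\an}{T_\an}$ explicitly; the second equality $P_\an = P_\an^\fin \uplus \PRE{\an}{T_\an}$ then follows by a fully dual argument obtained by swapping the roles of $\pre{(\cdot)}$ and $\post{(\cdot)}$.

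For disjointness, I would observe that any $p \in \POST{\an}{T_\an}$ satisfies $p \in \post{t}$ for some transition $t \in T_\an$, which gives $t \in \pre{p}$, so $\pre{p} \neq \es$ and therefore $p \notin P_\an^\init$. For the covering direction, let $p \in P_\an$. Either $\pre{p} = \es$, in which case $p \in P_\an^\init$ by definition, or $\pre{p} \neq \es$, in which case there is some $z \in P_\an \cup T_\an$ with $zF_\an p$. Here the bipartite structure of $F_\an$ is essential: since $p \in P_\an$ and $F_\an$ contains no place-to-place arcs, we must have $z \in T_\an$, hence $p \in \post{z} \subseteq \POST{\an}{T_\an}$. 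Combining the two directions yields $P_\an^\init \cup \POST{\an}{T_\an} = P_\an$, and together with disjointness this gives the desired disjoint union.

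The second equality is entirely symmetric: a place $p \in \PRE{\an}{T_\an}$ has $p \in \pre{t}$ for some $t$, so $\post{p} \neq \es$ and $p \notin P_\an^\fin$; conversely, any $p \notin P_\an^\fin$ has a successor which, by the bipartite nature of $F_\an$, must be a transition, placing $p$ in $\PRE{\an}{T_\an}$.

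There is no real obstacle here, as both equalities follow directly from the shape of $F_\an$ and the definitions of $P_\an^\init$ and $P_\an^\fin$; condition~2 of Definition~\ref{def:2} (every transition has at least one pre- and one post-place) is not even needed, only the bipartiteness of the flow relation. The only subtlety worth flagging in the written proof is the step where one rules out $z \in P_\an$ in $zF_\an p$, since that is the sole place where the structural constraint on $F_\an$ is invoked.
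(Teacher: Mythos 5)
Your proof is correct and is essentially the argument the paper intends: the paper dismisses this proposition with ``It follows directly from the definitions,'' and your write-up simply spells out that one-liner, correctly identifying the bipartiteness of $F_\an$ as the only structural fact needed and rightly noting that condition~2 of Definition~\ref{def:2} plays no role.
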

\begin{proof}
    It follows directly from the definitions.
\qed
\end{proof}

\begin{example}
\label{ex-1eeeeeeeee}
In Figure~\ref{fig-1}($a$), $\an_1$ is an a\-cyc\-lic net such that
$\pre{p_5}=\PRE{\an_1}{p_5}=\{c,d\}$ and
$\post{a}=\POST{\an_1}{a}=\{p_2,p_3\}$.
Moreover, 
$P_{\an_1}^\init=\{p_1\}$ and $P_{\an_1}^\fin=\{p_4,p_5\}$.
\EOD
\end{example}

\subsection{Subnets of a\-cyc\-lic nets}

It is often desirable to analyse substructures of 
records represented by nets where only some of the events are
included.

\begin{definition}[subnet of a\-cyc\-lic net]
\label{def:4adddaa}
    A triple $\net=(P,T,F)$ is a \emph{subnet}
    of an a\-cyc\-lic net  $\an$ 
    if 
    $\es\neq P\subseteq P_\an$, 
    $T\subseteq T_\an$, 
    $F= F_\an|_{(P\times T)\cup (T\times P)}$, and,
    for every $t\in T$:
\[
    \{p\mid (p,t)\in F\}=\PRE{\an}{t} 
    ~~\mathit{and}~~
    \{p\mid (t,p)\in F\}=\POST{\an}{t}\;.
\] 
    Notation: $\net\subseteq\an$.
\EOD
\end{definition}
Note that a transition included in a subnet 
retains its pre-places and post-places; in other words, it retains 
its local environment. 

\begin{proposition}
\label{prop-vvc}
    A subnet of an a\-cyc\-lic net is also an a\-cyc\-lic net.
\end{proposition}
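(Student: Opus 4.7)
The plan is to verify each clause of Definition~\ref{def:2} for the triple $\net=(P,T,F)$, using the subnet conditions of Definition~\ref{def:4adddaa} together with the fact that $\an$ is itself an acyclic net. The verification is essentially a checklist, so I would lay it out that way rather than constructing any new objects.

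First, I would observe the purely set-theoretic prerequisites: $P$ and $T$ are disjoint because $P\subseteq P_\an$ and $T\subseteq T_\an$ with $P_\an\cap T_\an=\es$, and both are finite as subsets of finite sets. The inclusion $F\subseteq (P\times T)\cup(T\times P)$ is immediate from $F=F_\an|_{(P\times T)\cup(T\times P)}$. Non-emptiness of $P$ is part of the hypothesis.

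Next I would handle acyclicity of $F$. Since $F\subseteq F_\an$ (as restrictions only remove pairs), any cycle $x_1Fx_2,\dots,x_{k-1}Fx_k$ with $x_k=x_1$ in $F$ would be such a cycle in $F_\an$, contradicting the acyclicity of $F_\an$ guaranteed by Definition~\ref{def:2}(1) for $\an$; here I would explicitly cite Proposition~\ref{prop:1}(3) to connect the two formulations of acyclicity.

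The only non-trivial point, and the main obstacle to spell out carefully, is Definition~\ref{def:2}(2) for $\net$, namely that every $t\in T$ has at least one pre-place and one post-place in $P$. For this I would use the equalities
\[
\{p\mid (p,t)\in F\}=\PRE{\an}{t}\quad\text{and}\quad\{p\mid (t,p)\in F\}=\POST{\an}{t}
\]
from the definition of subnet. Because $t\in T\subseteq T_\an$ and $\an$ is an acyclic net, the right-hand sides are non-empty by Definition~\ref{def:2}(2) applied to $\an$; hence so are the left-hand sides. Moreover, these equalities also force $\PRE{\an}{t}\cup\POST{\an}{t}\subseteq P$, so the witnesses $p,q$ indeed lie in $P$, completing the verification and yielding $\net\in\AN$.
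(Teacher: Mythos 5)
Your proof is correct and follows essentially the same route as the paper, which simply states that the result follows from the definitions together with the fact that a subset of an acyclic relation is acyclic; you have merely expanded that one-line argument into an explicit checklist. The key points you make explicit — that $F\subseteq F_\an$ gives acyclicity, and that the subnet condition $\{p\mid (p,t)\in F\}=\PRE{\an}{t}$ combined with $F\subseteq(P\times T)\cup(T\times P)$ yields nonempty pre- and post-sets inside $P$ — are exactly what the paper's terse proof relies on.
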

\begin{proof}
    It follows directly from the definitions and the fact that 
    a subset of an a\-cyc\-lic relation is also an a\-cyc\-lic relation.
\qed
\end{proof}

The next notion captures not only a structural inclusion
between a\-cyc\-lic nets, but it is
also intended to correspond to inclusion
of the behaviours they capture.

\begin{definition}[co-initial subnet of a\-cyc\-lic net]
\label{def:4aaa} 
    An a\-cyc\-lic net 
    $\an$ is a \emph{co-initial subnet} of an a\-cyc\-lic net $\an'$
    if $\an\subseteq\an'$ and 
    $P_\an^\init=P_{\an'}^\init$.
    \\
    Notation: $\an\sqsubseteq\an'$.
\EOD
\end{definition}

\begin{proposition}
\label{pr-uusu}
    Let $\an\sqsubseteq\an'$ be a\-cyc\-lic nets.
\begin{enumerate} 
\item
    $P_\an=P_{\an'}^\init\uplus \POST{\an'}{T_\an}$.
\item
    $\an=\an'$ iff $T_\an= T_{\an'}$.
\end{enumerate}
\end{proposition}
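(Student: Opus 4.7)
The plan is to handle the two items in turn, using Proposition~\ref{prop-vvcc} (applied to both $\an$ and $\an'$) together with the basic fact, built into Definition~\ref{def:4adddaa}, that any transition $t$ retained in a subnet keeps all of its pre-places and post-places. This latter fact will do most of the work, since it gives $\POST{\an}{t}=\POST{\an'}{t}$ for every $t\in T_\an$, and hence $\POST{\an}{T_\an}=\POST{\an'}{T_\an}$.

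For item~1, I would start from $P_\an=P_\an^\init\uplus \POST{\an}{T_\an}$, which is Proposition~\ref{prop-vvcc} applied to $\an$ itself. The co-initial assumption replaces $P_\an^\init$ by $P_{\an'}^\init$, and the observation above replaces $\POST{\an}{T_\an}$ by $\POST{\an'}{T_\an}$, yielding the desired equality $P_\an=P_{\an'}^\init\cup\POST{\an'}{T_\an}$. The only thing to check is that the union is still disjoint, which follows because $\POST{\an'}{T_\an}\subseteq\POST{\an'}{T_{\an'}}$ and Proposition~\ref{prop-vvcc} applied to $\an'$ gives $P_{\an'}^\init\cap\POST{\an'}{T_{\an'}}=\es$.

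For item~2, the $(\Rightarrow)$ direction is immediate. For $(\Leftarrow)$, assuming $T_\an=T_{\an'}$, I would first use item~1 to compute
\[
  P_\an=P_{\an'}^\init\uplus\POST{\an'}{T_\an}=P_{\an'}^\init\uplus\POST{\an'}{T_{\an'}}=P_{\an'},
\]
where the last equality is Proposition~\ref{prop-vvcc} for $\an'$. With $P_\an=P_{\an'}$ and $T_\an=T_{\an'}$ in hand, the flow relation clause of Definition~\ref{def:4adddaa} forces $F_\an=F_{\an'}|_{(P_{\an'}\times T_{\an'})\cup(T_{\an'}\times P_{\an'})}=F_{\an'}$, so the two nets coincide.

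I do not anticipate any real obstacle: everything reduces to unpacking the subnet and co-initial definitions and invoking Proposition~\ref{prop-vvcc}. The only mildly delicate point is verifying that the union in item~1 is disjoint, but this is handled by the argument sketched above.
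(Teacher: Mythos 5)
Your proposal is correct and follows essentially the same route as the paper: both items are obtained by applying Proposition~\ref{prop-vvcc} to $\an$ and $\an'$ and then using the facts, from Definitions~\ref{def:4adddaa} and~\ref{def:4aaa}, that transitions retain their post-places and that the initial places coincide. Your explicit check that the union in item~1 remains disjoint is a detail the paper leaves implicit, but it changes nothing of substance.
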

\begin{proof}
(1)
    By Proposition~\ref{prop-vvc}, $\an$ is an a\-cyc\-lic net, and so,
    by Proposition~\ref{prop-vvcc},  we have
    $P_\an=P_\an^\init\uplus \POST{\an}{T_\an}$.
    Hence, by Definitions~\ref{def:4adddaa}
    and~\ref{def:4aaa}, $P_\an=P_{\an'}^\init\uplus \POST{\an'}{T_\an}$. 
    
(2)
    The ($\Longrightarrow$) implication is obvious.
    To show the ($\Longleftarrow$) implication,
    suppose that $T_\an= T_{\an'}$. 
    Then, by part (1) and Proposition~\ref{prop-vvcc},
\[
    P_\an=P_{\an'}^\init\uplus \POST{\an'}{T_\an}=
    P_{\an'}^\init\uplus \POST{\an'}{T_{\an'}}=P_{\an'}\;.
\]
    Moreover, by Definition~\ref{def:4adddaa},
    $F_\an= F_{\an'}$. Hence $\an=\an'$.
\qed
\end{proof}
 
\subsection{Subclasses of a\-cyc\-lic nets}

The next definition introduces a\-cyc\-lic nets which will
represent individual causal histories. 

\begin{definition}[occurrence net]
\label{def:3}
    An \emph{occurrence net} is an a\-cyc\-lic net such that
    $|\pre{p}|\leq 1$ and $|\post{p}|\leq 1$, for every
    place $p$.
    \\
    Notation:
    $\ON$ is the set of all occurrence nets.
\EOD
\end{definition}
Occurrence nets  exhibit both backward de\-ter\-mi\-nism
and forward
de\-ter\-mi\-nism.
We also consider a\-cyc\-lic nets where only the
forward non-de\-ter\-mi\-nism
is allowed.
 
\begin{definition}[backward de\-ter\-mi\-nistic a\-cyc\-lic net]
\label{def:4}
    A \emph{backward de\-ter\-mi\-nistic a\-cyc\-lic net} is an a\-cyc\-lic net such that
    $|\pre{p}|\leq 1$, for every place $p$.
    \\
    Notation:
    $\BDAN$ is he set of all backward de\-ter\-mi\-nistic a\-cyc\-lic nets.
\EOD
\end{definition}

In the literature, backward de\-ter\-mi\-nistic a\-cyc\-lic nets
are sometimes called \emph{nonde\-ter\-mi\-nistic occurrence nets} or
even \emph{occurrence nets} (in which case occurrence nets
as defined above
are called \emph{de\-ter\-mi\-nistic occurrence nets}).

\begin{example}
\label{ex-1cccccccc}
In Figure~\ref{fig-1},
$\bdan_1$ is a  backward de\-ter\-mi\-nistic a\-cyc\-lic net, while
$\on_1$ and $\on_2$ are both occurrence nets.
\EOD
\end{example}

\begin{proposition}
\label{prop:2}
    $\ON\subset\BDAN\subset\AN$.
\end{proposition}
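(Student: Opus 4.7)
The statement chains two claims, $\ON\subseteq\BDAN$ and $\BDAN\subseteq\AN$, each required to be proper. My plan is to dispatch the inclusions by unfolding the definitions and then to exhibit witnesses from Figure~\ref{fig-1} for the strictness.

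For the inclusion $\ON\subseteq\BDAN$, I would simply observe that Definition~\ref{def:3} of an occurrence net imposes the two conditions $|\pre{p}|\leq 1$ and $|\post{p}|\leq 1$ on every place, while Definition~\ref{def:4} of a backward deterministic acyclic net demands only the first of these. Hence any occurrence net trivially satisfies the weaker condition. Likewise, $\BDAN\subseteq\AN$ is immediate: a backward deterministic acyclic net is an acyclic net by definition, since Definition~\ref{def:4} adds a constraint on top of Definition~\ref{def:2} without relaxing anything.

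For properness, I plan to appeal directly to the examples already displayed in Figure~\ref{fig-1}. The net $\bdan_1$ in Figure~\ref{fig-1}($b$) is a backward deterministic acyclic net (each place has at most one incoming arc), but $\post{p_3}=\{c,d\}$, so $|\post{p_3}|=2$; therefore $\bdan_1\in\BDAN\setminus\ON$, showing $\ON\subsetneq\BDAN$. Similarly, the net $\an_1$ in Figure~\ref{fig-1}($a$) is an acyclic net with $\pre{p_5}=\{c,d\}$, so $|\pre{p_5}|=2$; hence $\an_1\in\AN\setminus\BDAN$, which gives $\BDAN\subsetneq\AN$.

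There is no serious obstacle here: the proof is essentially a bookkeeping exercise on the definitions together with a reference to the concrete nets already present in Figure~\ref{fig-1}. The only thing to be careful about is to make sure the witnesses cited are indeed legitimate acyclic nets in the sense of Definition~\ref{def:2} (i.e., every transition has at least one pre-place and one post-place and the flow relation is acyclic), which is clear by inspection of the figure.
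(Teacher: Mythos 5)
Your proposal is correct and follows essentially the same route as the paper: both derive the inclusions directly from Definitions~\ref{def:2}, \ref{def:3}, and~\ref{def:4}, and both use $\bdan_1\in\BDAN\setminus\ON$ and $\an_1\in\AN\setminus\BDAN$ from Figure~\ref{fig-1} as witnesses for strictness. Your version merely spells out the witnessing places ($\post{p_3}$ and $\pre{p_5}$) explicitly, which the paper leaves to inspection.
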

\begin{proof}
    The inclusions $\ON\subseteq\BDAN\subseteq\AN$ follow directly
    from the definitions.
    Moreover, in Figure~\ref{fig-1},
    $\an_1\in\AN\setminus\BDAN$ and
    $\bdan_1\in\BDAN\setminus\ON$.
\qed
\end{proof}
 
\begin{figure}[ht]
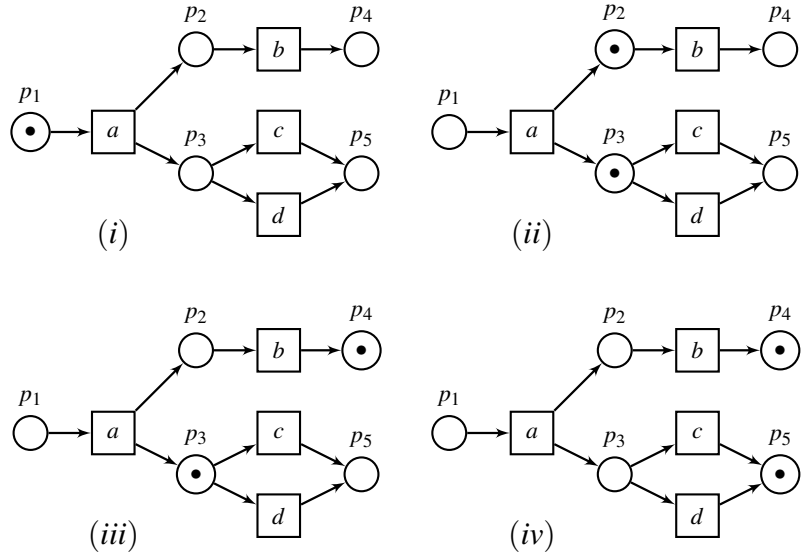

\begin{center}
\begin{tabular}{l@{~~~~~~~}l}

\StandardNet[0.55]
    \PlacN{P1}{ 0}{2}{\bullet}{p_1}
    \PlacN{P2}{ 4}{4}{}{p_2}
    \PlacN{P3}{ 4}{1}{}{p_3}
    \PlacN{P4}{ 8}{4}{}{p_4}
    \PlacN{P5}{ 8}{1}{}{p_5}

    \Whitetran{A}{ 2}{2}{a}
    \Whitetran{B}{ 6}{4}{b}
    \Whitetran{C}{ 6}{2}{c}
    \Whitetran{D}{ 6}{0}{d}

    \diredge{P1}{A}
    \diredge{P2}{B}
    \diredge{P3}{C}
    \diredge{P3}{D}

    \diredge{A}{P2}
    \diredge{A}{P3}
    \diredge{B}{P4}
    \diredge{C}{P5}
    \diredge{D}{P5} 
\Put{2}{-.5}{{\large$(i)$}}
\end{tikzpicture}
&
\StandardNet[0.55]
    \PlacN{P1}{ 0}{2}{}{p_1}
    \PlacN{P2}{ 4}{4}{\bullet}{p_2}
    \PlacN{P3}{ 4}{1}{\bullet}{p_3}
    \PlacN{P4}{ 8}{4}{}{p_4}
    \PlacN{P5}{ 8}{1}{}{p_5}

    \Whitetran{A}{ 2}{2}{a}
    \Whitetran{B}{ 6}{4}{b}
    \Whitetran{C}{ 6}{2}{c}
    \Whitetran{D}{ 6}{0}{d}

    \diredge{P1}{A}
    \diredge{P2}{B}
    \diredge{P3}{C}
    \diredge{P3}{D}

    \diredge{A}{P2}
    \diredge{A}{P3}
    \diredge{B}{P4}
    \diredge{C}{P5}
    \diredge{D}{P5} 
\Put{2}{-.5}{{\large$(ii)$}}
\end{tikzpicture}
\\[4mm]
\StandardNet[0.55]
    \PlacN{P1}{ 0}{2}{}{p_1}
    \PlacN{P2}{ 4}{4}{}{p_2}
    \PlacN{P3}{ 4}{1}{\bullet}{p_3}
    \PlacN{P4}{ 8}{4}{\bullet}{p_4}
    \PlacN{P5}{ 8}{1}{}{p_5}

    \Whitetran{A}{ 2}{2}{a}
    \Whitetran{B}{ 6}{4}{b}
    \Whitetran{C}{ 6}{2}{c}
    \Whitetran{D}{ 6}{0}{d}

    \diredge{P1}{A}
    \diredge{P2}{B}
    \diredge{P3}{C}
    \diredge{P3}{D}

    \diredge{A}{P2}
    \diredge{A}{P3}
    \diredge{B}{P4}
    \diredge{C}{P5}
    \diredge{D}{P5} 
\Put{2}{-.5}{{\large$(iii)$}}
\end{tikzpicture}
&
\StandardNet[0.55]
    \PlacN{P1}{ 0}{2}{}{p_1}
    \PlacN{P2}{ 4}{4}{}{p_2}
    \PlacN{P3}{ 4}{1}{}{p_3}
    \PlacN{P4}{ 8}{4}{\bullet}{p_4}
    \PlacN{P5}{ 8}{1}{\bullet}{p_5}

    \Whitetran{A}{ 2}{2}{a}
    \Whitetran{B}{ 6}{4}{b}
    \Whitetran{C}{ 6}{2}{c}
    \Whitetran{D}{ 6}{0}{d}

    \diredge{P1}{A}
    \diredge{P2}{B}
    \diredge{P3}{C}
    \diredge{P3}{D}

    \diredge{A}{P2}
    \diredge{A}{P3}
    \diredge{B}{P4}
    \diredge{C}{P5}
    \diredge{D}{P5} 
\Put{2}{-.5}{{\large$(iv)$}}
\end{tikzpicture}
\end{tabular}
\end{center}
\caption{Executing mixed step sequence 
$\{p_1\} \STEP{\{a\}}{}\{p_2,p_3\}\STEP{\{b\}}{}\{p_4,p_3\}
\STEP{\{c\}}{}\{p_4,p_5\}$
and showing the consecutive snapshots $(i)\to(ii)\to(iii)\to(iv)$.}
\label{fig-1cccvvvvvvvv}
\end{figure}

\begin{figure}[ht]
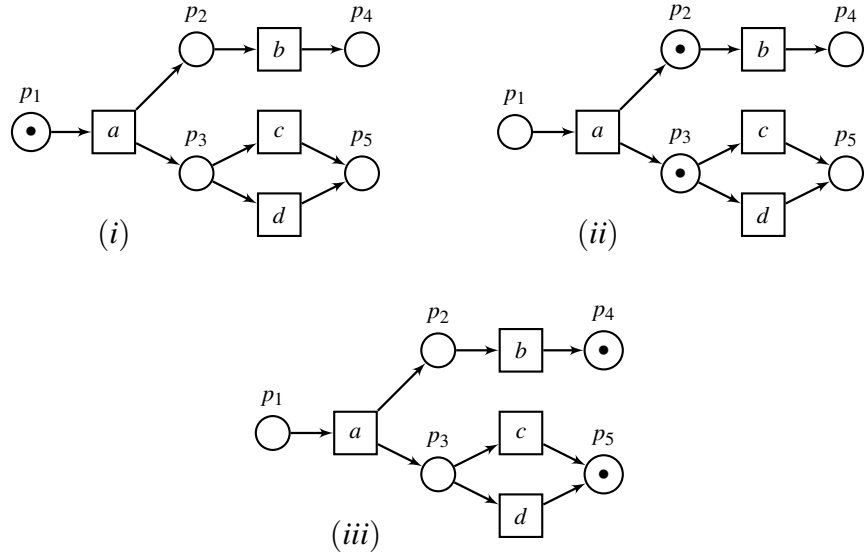

\begin{center}
\begin{tabular}{c}

\StandardNet[0.55]
    \PlacN{P1}{ 0}{2}{\bullet}{p_1}
    \PlacN{P2}{ 4}{4}{}{p_2}
    \PlacN{P3}{ 4}{1}{}{p_3}
    \PlacN{P4}{ 8}{4}{}{p_4}
    \PlacN{P5}{ 8}{1}{}{p_5}

    \Whitetran{A}{ 2}{2}{a}
    \Whitetran{B}{ 6}{4}{b}
    \Whitetran{C}{ 6}{2}{c}
    \Whitetran{D}{ 6}{0}{d}

    \diredge{P1}{A}
    \diredge{P2}{B}
    \diredge{P3}{C}
    \diredge{P3}{D}

    \diredge{A}{P2}
    \diredge{A}{P3}
    \diredge{B}{P4}
    \diredge{C}{P5}
    \diredge{D}{P5} 
\Put{2}{-.5}{{\large$(i)$}}
\end{tikzpicture}
~~~~~~~~~~~~~~~~
\StandardNet[0.55]
    \PlacN{P1}{ 0}{2}{}{p_1}
    \PlacN{P2}{ 4}{4}{\bullet}{p_2}
    \PlacN{P3}{ 4}{1}{\bullet}{p_3}
    \PlacN{P4}{ 8}{4}{}{p_4}
    \PlacN{P5}{ 8}{1}{}{p_5}

    \Whitetran{A}{ 2}{2}{a}
    \Whitetran{B}{ 6}{4}{b}
    \Whitetran{C}{ 6}{2}{c}
    \Whitetran{D}{ 6}{0}{d}

    \diredge{P1}{A}
    \diredge{P2}{B}
    \diredge{P3}{C}
    \diredge{P3}{D}

    \diredge{A}{P2}
    \diredge{A}{P3}
    \diredge{B}{P4}
    \diredge{C}{P5}
    \diredge{D}{P5} 
\Put{2}{-.5}{{\large$(ii)$}}
\end{tikzpicture}
\\[4mm]
\StandardNet[0.55]
    \PlacN{P1}{ 0}{2}{}{p_1}
    \PlacN{P2}{ 4}{4}{}{p_2}
    \PlacN{P3}{ 4}{1}{}{p_3}
    \PlacN{P4}{ 8}{4}{\bullet}{p_4}
    \PlacN{P5}{ 8}{1}{\bullet}{p_5}

    \Whitetran{A}{ 2}{2}{a}
    \Whitetran{B}{ 6}{4}{b}
    \Whitetran{C}{ 6}{2}{c}
    \Whitetran{D}{ 6}{0}{d}

    \diredge{P1}{A}
    \diredge{P2}{B}
    \diredge{P3}{C}
    \diredge{P3}{D}

    \diredge{A}{P2}
    \diredge{A}{P3}
    \diredge{B}{P4}
    \diredge{C}{P5}
    \diredge{D}{P5} 
\Put{2}{-.5}{{\large$(iii)$}}
\end{tikzpicture} 
\end{tabular}
\end{center}
\caption{Executing mixed step sequence 
$\{p_1\} \STEP{\{a\}}{}\{p_2,p_3\}\STEP{\{b,c\}}{}\{p_4,p_5\}$
and showing the consecutive snapshots $(i)\to(ii)\to(iii)$.
}
\label{fig-1cccvvvvvvvvvvvvvvvvvvvvvvv} 
\end{figure}

\section{Step sequence semantics of a\-cyc\-lic nets}

We now   introduce notions related to the  behaviour
of a\-cyc\-lic nets.
The intuition behind them is as follows:
\begin{description}
\item[marking]{\ }
\\
A global state of a possible execution history
of the system modelled by a net.

\item[initial and final markings]{\ }
\\
There is always a single initial global state. 
In general, there is more than one final global state, which
corresponds to the fact that a single net may capture
alternative execution histories.

\item[step]{\ }
\\
A
set of events which might have occurred simultaneously and effected
a move from one global state to another.

\item[enabled step]{\ }
\\
A step which can be executed at a global state
thanks to all its input places being present/marked.
 
\item[mixed step sequence]{\ }
\\
An alternating sequence of global states and
executed steps transforming one global state into another.
 
\item[reachable marking]{\ }
\\
A global state which can be obtained by executing a sequence of
steps starting from the initial marking.

\item[scenario]{\ }
\\
An occurrence net  providing a representation of
a  system history.

\item[well-formed a\-cyc\-lic net]{\ }
\\
An a\-cyc\-lic net where each execution history 
from the initial state has an unambiguous interpretation in terms of 
causality and concurrency.
\end{description}

\begin{definition}[step and marking of a\-cyc\-lic net]
\label{def:5}
    Let $\an$ be an a\-cyc\-lic net.
\begin{enumerate}
\item
    $\STEPS(\an)\triangleq \{U\in \pset{T_\an}\setminus\{\es\}\mid
      \forall t\neq u\in U: \pre{t}\cap\pre{u}=\es\}$
    are the \emph{steps}.
\item
    $\MARK(\an)\triangleq \pset{P_\an}$ are the \emph{markings}.
\item
    $M_\an^\init\triangleq P_\an^\init$
    is the \emph{initial} marking.
\EOD
\end{enumerate}
\end{definition}
Graphically, markings are indicated by black tokens placed inside the 
corresponding circles.

Note that in a step no two transitions can share a pre-place.

\begin{example}
\label{ex-2s}
For
the a\-cyc\-lic nets $\an_1$ and $\bdan_1$ depicted in Figure~\ref{fig-1}($a,b$),
we have
$\STEPS(\an_1)=\STEPS(\bdan_1)=
 \{U\in \pset{\{a,b,c,d\}}\setminus\{\es\}\mid
 c\notin U\vee d\notin U\}$ and
 $M_{\an_1}^\init=M_{\bdan_1}^\init=\{p_1\}$.
\EOD
\end{example}

\begin{definition}[enabled and executed step of acyclic net]
\label{def:6}
    Let $M$ be a marking of  an a\-cyc\-lic net $\an$.
\begin{enumerate}
\item
    $\enabled_\an(M)\triangleq\{U\in\STEPS(\an)\mid \pre{U}\subseteq M\}$
    are the steps \emph{enabled} at $M$.
\item
    A step $U\in\enabled_\an(M)$
    can be \emph{executed} and yield
    $M'\triangleq (M\cup\post{U})\setminus\pre{U}$.
    \\
    Notation:  $M\STEP{U}{\an}M'$.
\EOD
\end{enumerate}
\end{definition}
Enabling a step in a
global state amounts to having all its pre-places marked.
The execution of such a step adds tokens to all its post-places and then
removes tokens from all its pre-places.
Note that in the standard Petri net semantics the execution
of a step leads to $M'= (M\setminus\pre{U})\cup\post{U}$,
which yields a different result unless $\pre{U}\cap\post{U}=\es$.
However, the latter always holds when the acyclic net is well-formed,
as defined later in this paper. 

\begin{definition}[mixed step sequence and step sequence of a\-cyc\-lic net]
\label{def:7aee}
    Let $M_0,\dots,M_k$ ($k\geq 0$) be
    markings and $U_1,\dots,U_k$ be steps
    of an a\-cyc\-lic net $\an$ such that we have
    $M_{i-1}\STEP{U_i}{\an}M_i$,
    for every $1\leq i\leq k$.
\begin{enumerate}
\item
    $\mu=M_0U_1M_1\dots M_{k-1}U_k M_k$
    is a \emph{mixed step sequence from $M_0$ to $M_k$}.

\item
    $\sigma=U_1\dots U_k$ is
    a \emph{step sequence from $M_0$ to $M_k$}.
\end{enumerate}
    The above two notions are denoted by $M_0\mixSTEP{\mu}{\an}M_k$
    and $M_0\STEP{\sigma}{\an}M_k$, respectively.
    Moreover,
    $M_0\STEP{\sigma}{\an}$ denotes
    that $\sigma$ is a step sequence \emph{enabled $M_0$}, and
    $M_0\STEP{}{\an}M_k$
    denotes that $M_k$ is \emph{reachable from $M_0$}.
\EOD
\end{definition}
If $k=0$ then $\mu=M_0$ and the corresponding
step sequence $\sigma$ is the \emph{empty} sequence denoted by $\lambda$.

In the last definition, the starting point of an execution
is an arbitrary marking.
The next definition introduces a number of
be\-ha\-vioural notions, assuming in each case that the starting point
of system executions is
the default initial marking.

\begin{definition}[behaviour of a\-cyc\-lic net]
\label{def:7rtt}
    The following sets capture various be\-ha\-vioural notions
    related to step sequences and reachable markings
    of an a\-cyc\-lic net $\an$.
\begin{enumerate}
\item
    $\sseq(\an)\triangleq\{\sigma\mid  M_\an^\init\STEP{\sigma}{\an}M\}$
    \hfill \emph{step sequences}.

\item
    $\mixsseq(\an)\triangleq\{\mu\mid  M_\an^\init\mixSTEP{\mu}{\an}M\}$
    \hfill   \emph{mixed step sequences}.

\item
    $\maxsseq(\an)\triangleq\{\sigma\in\sseq(\an)\mid
         \neg\exists U: \sigma U\in\sseq(\an)\}$
    \\\hspace*{\fill}   \emph{maximal step sequences}.

\item
    $\maxmixsseq(\an)\triangleq\{\mu\in\mixsseq(\an)\mid
         \neg\exists U,M: \mu UM\in\mixsseq(\an)\}$
    \\\hspace*{\fill} \emph{maximal mixed step sequences}.

\item
    $\reach(\an)\triangleq \{M\mid
               M_\an^\init\STEP{ }{\an}M\}$
    \hfill   \emph{reachable markings}.

\item
    $\finreach(\an)\triangleq\{M\mid \exists \sigma\in\maxsseq(\an):
               M_\an^\init\STEP{\sigma}{\an}M\}$
    \\\hspace*{\fill}  \emph{final reachable markings}.
\item
    $\fseq(\an)=\{U_1\dots U_k\in\sseq(\an)\mid
    k\geq 1\implies
    |U_1|=\cdots=|U_k|=1\}$
    \\\hspace*{\fill}  \emph{firing sequences}.
\end{enumerate}
\EOD
\end{definition}

    We can treat individual  transitions as singleton steps;
    for instance, a step sequence $\{t\}\{u\}\{w,v\}\{z\}$
    can be denoted by $tu\{w,v\}z$.

\begin{example}
\label{ex-5j}
    The following hold for the
    a\-cyc\-lic net in Figure~\ref{fig-1}($b$).
\begin{enumerate}
\item
    $\sseq(\bdan_1)=\{\lambda,a,ab,ac,ad,abc,acb,abd,adb,a\{b,c\},
     a\{b,d\}\}$.

\item
    $\mixsseq(\bdan_1)=\{
    \{p_1\}, \{p_1\}a\{p_2,p_3\},\{p_1\}a\{p_2,p_3\}\{b,c\}\{p_4,p_5\},
    \dots\}$.

\item
    $\maxsseq(\bdan_1)=\{
     abc,acb,a\{b,c\},
     abd,adb,a\{b,d\}\}$.

\item
    $\maxmixsseq(\bdan_1)=\{
    \{p_1\}a\{p_2,p_3\}\{b,c\}\{p_4,p_5\},\dots\}$.

\item
    $\reach(\bdan_1)=\{\{p_1\}, \{p_2,p_3\},\{p_2,p_5\},\{p_2,p_6\},
    \{p_4,p_3\},\dots\}$.

\item
    $\finreach(\bdan_1)=\{
    \{p_4,p_5\},\{p_4,p_6\}\}$.
\item
    $\fseq(\bdan_1)=\{\lambda,a,ab,ac,ad,abc,acb,abd,adb\}$.
\end{enumerate}
Moreover, Figures~\ref{fig-1cccvvvvvvvv}
and~\ref{fig-1cccvvvvvvvvvvvvvvvvvvvvvvv} show consecutive 
snapshots of acyclic nets involved in mixed step sequences.
\EOD
\end{example}

The next result shows that it is always possible to arbitrarily serialise 
an enabled step and, as a result, relate the step sequence based semantics and 
the firing
sequence based semantics.
\begin{proposition}
\label{pr-hhhd}
    Let $\an$ be an a\-cyc\-lic net and $U=\{t_1,\dots, t_k\}$
    be a step enabled at a marking $M$.
\begin{enumerate}
\item
    If $U'$ and $U''$ form a
    partition of $U$, then
    $M\STEP{U'U''}{\an}$.
\item
    $M\STEP{\{t_1\}\dots\{t_k\}}{\an}$.
\end{enumerate}
\end{proposition}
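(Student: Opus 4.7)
The plan is to prove part (1) directly from the update rule and the step disjointness condition, and then derive part (2) by a short induction on $k$ using part (1).

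For part (1), I would first note that both $U'$ and $U''$, being nonempty subsets of the step $U$, inherit the pairwise-disjoint pre-places property and hence lie in $\STEPS(\an)$. Since $\pre{U'} \subseteq \pre{U} \subseteq M$, we have $U' \in \enabled_\an(M)$, so firing yields $M' = (M \cup \post{U'}) \setminus \pre{U'}$. The key check is that $\pre{U''} \subseteq M'$. Here the step condition gives $\pre{U'} \cap \pre{U''} = \es$ (since any $t' \in U'$ and $t'' \in U''$ are distinct elements of $U$, so $\pre{t'} \cap \pre{t''} = \es$). Combined with $\pre{U''} \subseteq M$, this yields $\pre{U''} \subseteq M \setminus \pre{U'} \subseteq (M \cup \post{U'}) \setminus \pre{U'} = M'$, so $U'' \in \enabled_\an(M')$ and therefore $M \STEP{U'U''}{\an}$.

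For part (2), I would proceed by induction on $k$. The case $k=1$ is immediate since $U = \{t_1\}$ is enabled at $M$ by assumption. For $k \geq 2$, take the partition $U' = \{t_1\}$ and $U'' = \{t_2, \dots, t_k\}$ (both nonempty). Part (1) gives $M \STEP{\{t_1\}}{\an} M_1 \STEP{U''}{\an} M_2$ for some $M_1, M_2$, and in particular $U''$ is enabled at $M_1$ as a step of $\an$. Applying the induction hypothesis to $U''$ and $M_1$ yields $M_1 \STEP{\{t_2\} \dots \{t_k\}}{\an}$, and concatenating gives $M \STEP{\{t_1\}\{t_2\}\dots\{t_k\}}{\an}$.

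The only subtle point, and the one I expect to be the main thing to get right, is the set-theoretic verification that $\pre{U''}$ survives the non-standard update $(M \cup \post{U'}) \setminus \pre{U'}$: tokens produced by $\post{U'}$ that happen to sit on places of $\pre{U'}$ would be discarded, so the argument must rely on $\pre{U''}$ already being in $M$ together with the disjointness $\pre{U'} \cap \pre{U''} = \es$, rather than on any production by $U'$. Everything else reduces to bookkeeping with Definitions~\ref{def:5}, \ref{def:6}, and~\ref{def:7aee}.
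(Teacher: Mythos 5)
Your proposal is correct and follows essentially the same route as the paper's proof: part (1) via the observation that $\pre{U'}\cap\pre{U''}=\es$ (from the step condition) together with $\pre{U''}\subseteq M$ guarantees $U''$ remains enabled after firing $U'$, and part (2) by iterating part (1) with the singleton/remainder partition. Your explicit set-theoretic check that $\pre{U''}\subseteq (M\cup\post{U'})\setminus\pre{U'}$ is just a more careful spelling-out of what the paper's proof asserts in one line.
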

\begin{proof}
(1)
    Clearly, $U'$ is a step enabled at $M$.
    Let $M\STEP{U'}{\an}M'$.
    We observe
    that $U''$ is enabled at $M'$ which follows from
    $\pre{U}\cup\pre{U''}\subseteq M$
    (see Definition~\ref{def:6}(1)),
    $\pre{U'}\cap\pre{U''}=\es$
    (see Definition~\ref{def:5}(1)), and
    Definition~\ref{def:6}(2).
    Hence
    $M\STEP{U'U''}{\an}$.

(2)
    As  for $k=1$ there is nothing to show,
    suppose that $k>1$.
    Then $U'=\{t_1\}$ and $U''=\{t_2,\dots, t_k\}$
    form a partition of $U$.
    Hence, by part (1),
    $M\STEP{\{t_1\}\{t_2,\dots, t_k\}}{\an}$.
    By repeating the same argument
    $k-1$ times, we obtain  $M\STEP{\{t_1\}\dots\{t_k\}}{\an}$.
\qed
\end{proof}

\subsection{Behaviour of subnets of a\-cyc\-lic nets}

The structural inclusion of a\-cyc\-lic
nets sharing the initial marking implies inclusion of the behaviours
they generate.

\begin{proposition}
\label{pr-uususs}
    Let $\an\sqsubseteq\an'$ be a\-cyc\-lic nets.
    Then $f(\an) \subseteq f(\an')$, for
    $f=\mixsseq,\sseq,\reach,\fseq$.
\end{proposition}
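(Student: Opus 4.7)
The plan is to reduce everything to the mixed step sequence inclusion $\mixsseq(\an)\subseteq\mixsseq(\an')$, and then to derive the other three inclusions as immediate consequences. The crucial structural fact, supplied by Definition~\ref{def:4adddaa}, is that every $t\in T_\an$ satisfies $\PRE{\an}{t}=\PRE{\an'}{t}$ and $\POST{\an}{t}=\POST{\an'}{t}$, i.e.\ a transition included in the subnet retains its entire local environment.

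First I would record a one-step compatibility lemma. From the environment-preservation property and Definition~\ref{def:5}(1), any $U\in\STEPS(\an)$ is also in $\STEPS(\an')$ with the same $\pre{U}$ and $\post{U}$ in both nets. Since $M_\an^\init=M_{\an'}^\init$ by Definition~\ref{def:4aaa}, and since any marking $M$ reached inside $\an$ is a subset of $P_\an\subseteq P_{\an'}$, the enabling condition $\pre{U}\subseteq M$ from Definition~\ref{def:6}(1) holds in $\an$ iff it holds in $\an'$, and the successor marking $(M\cup\post{U})\setminus\pre{U}$ prescribed by Definition~\ref{def:6}(2) coincides in both. Hence $M\STEP{U}{\an}M'$ implies $M\STEP{U}{\an'}M'$.

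Then I would prove $\mixsseq(\an)\subseteq\mixsseq(\an')$ by a straightforward induction on the length $k$ of $\mu=M_0U_1M_1\dots U_kM_k$. The base $k=0$ is the equality $M_0=M_\an^\init=M_{\an'}^\init$. For the inductive step, the prefix $M_0U_1\dots U_{k-1}M_{k-1}$ is in $\mixsseq(\an')$ by induction, and the final step $M_{k-1}\STEP{U_k}{\an}M_k$ transfers to $\an'$ by the one-step lemma above.

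The remaining three inclusions follow immediately. For $\sseq$, every $\sigma\in\sseq(\an)$ arises from some $\mu\in\mixsseq(\an)$ by deleting markings, and that same deletion applied inside $\mixsseq(\an')$ yields $\sigma\in\sseq(\an')$. For $\reach$, any $M\in\reach(\an)$ is the last marking of some $\mu\in\mixsseq(\an)\subseteq\mixsseq(\an')$, so $M\in\reach(\an')$. For $\fseq$, the condition $|U_1|=\dots=|U_k|=1$ is a property of the step sequence itself, so $\fseq(\an)=\sseq(\an)\cap\{\sigma\mid \text{all steps singletons}\}\subseteq\sseq(\an')\cap\{\sigma\mid \text{all steps singletons}\}=\fseq(\an')$. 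No genuine obstacle arises; the only care required is to ensure, through Definition~\ref{def:4adddaa}, that enabledness and the firing rule transfer verbatim from $\an$ to $\an'$.
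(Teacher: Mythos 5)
Your proposal is correct and follows essentially the same route as the paper's proof: both reduce everything to the single inclusion $\mixsseq(\an)\subseteq\mixsseq(\an')$, establish that each individual step transfers because co-initiality gives $M_\an^\init=M_{\an'}^\init$ and Definition~\ref{def:4adddaa} preserves each transition's local environment, and then read off the other three inclusions. Your write-up just makes the one-step compatibility lemma and the induction explicit where the paper compresses them into one sentence.
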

\begin{proof}
    It suffices to show that $\mixsseq(\an) \subseteq \mixsseq(\an')$.

    Let $\mu=M_0U_1M_1\dots M_{k-1}U_k M_k\in\mixsseq(\an)$.
    Then $\mu\in\mixsseq(\an')$ which
    follows from $M_\an^\init=M_0=M_{\an'}^\init$
    (see Definition~\ref{def:4aaa}) and, for every $0\leq i<k$,
    we have
    $M_i\STEP{U_{i+1}}{{\an'}}M_{i+1}$
    which follows from $M_i\STEP{U_{i+1}}{\an}M_{i+1}$ and
    Definition~\ref{def:4adddaa}.
\qed
\end{proof} 

Due to being both backward-deterministic and
forward-deterministic, occurrence nets exhibit
additional useful be\-ha\-vioural properties.

\begin{proposition}
\label{pr-ueeu}
    Let $\sigma=U_1\dots U_k$ ($k\geq 0$) be a
    sequence of nonempty sets of transitions of 
    an occurrence net $\on$, and 
    $M$ be a reachable marking of $\on$.
    Moreover, let $T=\bigcup\sigma$.
\begin{enumerate}
\item
    $\sigma\in\sseq(\on)$ iff
    for every $1\leq i\leq k$, 
    $U_i\cap (U_1\cup\dots \cup U_{i-1})=\es$ and
\[
    \pre{(\pre{U_i})}\subseteq U_1\cup\dots\cup U_{i-1}\;,
\]
    where $U_1\cup\dots \cup U_{i-1}=\es$ for $i=1$.
\item
    $\sigma\in\sseq(\on)$
    implies
    $M_\on^\init\STEP{\sigma}{\on}
      (M_\on^\init\cup \post{T})
        \setminus \pre{T}$.
\item
    $T_\on=\bigcup\{\bigcup\xi\mid \xi\in\sseq(\on)\}$.
\item
    $M\STEP{U}{\on}M'$ implies $U'\in\enabled_\on(M')$, for every
    $U'\in\enabled_\on(M)$ such that $U\cap U'=\es$.
\item
    $\maxsseq(\on)=\{\xi\in\sseq(\on)\mid \bigcup\xi=T_\on\}$.
\item
    $\finreach(\on)=\{P^\fin_\on\}$.
\item
    If  $t\in\enabled_\on(M)$,
    then $t\in\enabled_\on(M')$ and 
    $M^\init_\on\STEP{}{\on}M'\STEP{}{\on}M$, where  
    $M'=(P_\on^\init \cup \post{V})\setminus \pre{V}$ for
    $V=\{u\in T_\on\mid u F^+ t\}$.
\end{enumerate}
\end{proposition}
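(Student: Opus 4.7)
\medskip

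\noindent\textbf{Proof plan.} The plan is to handle the seven parts in an order that respects their dependencies, exploiting throughout the two characteristic features of occurrence nets: \emph{backward-determinism} ($|\pre{p}|\leq 1$) and \emph{forward-determinism} ($|\post{p}|\leq 1$). Combined with acyclicity of $F$, these force the producer and consumer of every non-initial, non-final place to be uniquely determined, so a fired transition can never be fired again and a consumed place can never be reproduced. This single observation drives almost every calculation.

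For (1), I would use induction on $k$. For ($\Longrightarrow$), at the marking $M_{i-1}$ reached before firing $U_i$ every place of $\pre{U_i}$ must be present; by backward-determinism every non-initial marked place has a unique producer that must lie in $U_1\cup\dots\cup U_{i-1}$, which gives $\pre{(\pre{U_i})}\subseteq U_1\cup\dots\cup U_{i-1}$, and the acyclicity of $F$ together with forward-determinism (each place has at most one consumer) forces $U_i$ to be disjoint from the earlier steps. For ($\Longleftarrow$) one proves, simultaneously with (2), that the stated syntactic conditions guarantee $\pre{U_i}\subseteq M_{i-1}$, where $M_{i-1}$ is the marking described by the formula in (2). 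Part (2) is then the same induction, using that the already-fired pre-places are pairwise disjoint from $\pre{U_i}$ (forward-determinism) and the already-produced post-places are disjoint from $\post{U_i}$ and from $P^\init_\on$ (backward-determinism plus acyclicity), so the marking updates additively.

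Part (4) is a short direct argument: if $U,U'\in\enabled_\on(M)$ and $U\cap U'=\es$, then by forward-determinism $\pre{U}\cap\pre{U'}=\es$, hence firing $U$ removes nothing from $\pre{U'}$ and $\pre{U'}\subseteq M'$. For (7), let $V=\{u\mid uF^+t\}$; by transitivity $V$ is causally closed (every $F$-predecessor of an element of $V$ lies in $V$), so any linear extension $v_1,\dots,v_m$ of $(V,F^+)$ yields, by (1) and (2), a valid firing sequence ending in marking $M'=(P^\init_\on\cup\post{V})\setminus\pre{V}$; moreover $t$ is then enabled at $M'$ because all its pre-places are either initial or produced by some $v_j\in V$. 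To obtain $M'\STEP{}{\on}M$ one argues that the set $W$ of transitions that fired along the chosen witness sequence $M^\init_\on\STEP{}{\on}M$ but lie outside $V$ can be extended into a step sequence enabled at $M'$ leading to $M$: every $w\in W$ has all its $F$-predecessors in $V\cup W$ (for if some predecessor were outside and in a causal chain to $t$, it would belong to $V$), so a linear extension of $(W,F^+)$ fires correctly from $M'$ and yields precisely the same final marking $M$ by the cumulative formula of~(2).

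Parts (3), (5), (6) then fall out. Part (3) is immediate from (7) since each $t\in T_\on$ appears in the firing sequence constructed there. For (5), the inclusion $\supseteq$ is trivial; for $\subseteq$, if $\sigma\in\maxsseq(\on)$ but $T_\on\setminus\bigcup\sigma\neq\es$, pick a transition $t$ in this difference that is $F^+$-minimal among such transitions; then all predecessors of $t$ lie in $\bigcup\sigma$, so by (2) the marking after $\sigma$ contains $\pre{t}$, whence $\{t\}$ could be appended, contradicting maximality. For (6), (5) and (2) give that every maximal step sequence ends in $(P^\init_\on\cup\post{T_\on})\setminus\pre{T_\on}$, and Proposition~\ref{prop-vvcc} rewrites this as $P_\on\setminus\pre{T_\on}=P^\fin_\on$. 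The main obstacle, as indicated, is the second half of~(7): arranging the transitions outside $V$ that nonetheless fired on the way to $M$ into a continuation from $M'$, which requires careful use of causal closure together with Proposition~\ref{pr-hhhd} to split and recombine the original witness step sequence.
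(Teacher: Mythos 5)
The paper offers no argument here at all---its proof of Proposition~\ref{pr-ueeu} consists of the single sentence ``It follows from the standard properties of occurrence nets''---so your proposal is not so much a different route as the only actual proof on the table, and its overall architecture is sound: proving (1) and (2) by a joint induction driven by backward determinism ($|\pre{p}|\leq 1$) and forward determinism ($|\post{p}|\leq 1$), then deriving (4) and (7), and letting (3), (5), (6) fall out, is exactly the standard development the paper is alluding to. Two places in the plan deserve to be shored up before this becomes a full proof. First, in (1)($\Longrightarrow$) the disjointness $U_i\cap(U_1\cup\dots\cup U_{i-1})=\es$ is not a one-line consequence of ``acyclicity plus forward determinism'': you need the explicit observation that a transition cannot fire twice, because re-marking a consumed pre-place would require its unique producer to re-fire, and iterating this yields an infinite strictly $F^+$-descending chain of firings, impossible by acyclicity and finiteness. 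Second, in (7) you silently assume that $V=\{u\mid uF^+t\}$ is contained in the transition set $W_0$ of the witness sequence reaching $M$ (otherwise $V\uplus W\neq W_0$ and your reconstruction does not end at $M$); this containment is precisely where the hypothesis $t\in\enabled_\on(M)$ is used, via an induction along paths into $t$ using backward determinism, and the correct reason that every $w\in W$ has all its $F$-predecessors in $V\cup W$ is simply that $W_0$ itself is downward causally closed---not the chain-to-$t$ argument in your parenthetical. Relatedly, (3) follows from the \emph{construction} in the first half of (7) (firing a linearisation of $V$ enables $t$ unconditionally), not from the statement of (7), whose hypothesis already presupposes $t$ enabled at some reachable marking. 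With these points made explicit the plan goes through.
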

\begin{proof}
     If follows from the standard properties of occurrence nets.
\qed
\end{proof}
That is, an occurrence net is de\-ter\-mi\-nistic in the sense that
each transition has a fixed set of direct or indirect
predecessors which have to occur first
before the transition is executed
(see Proposition~\ref{pr-ueeu}(1,7)), and once it is enabled
no other transition can disable it (see Proposition~\ref{pr-ueeu}(4)).
Moreover, all the transitions can be executed (see Proposition~\ref{pr-ueeu}(3)),
the order in which transitions are executed does not
influence the resulting marking (see Proposition~\ref{pr-ueeu}(2)),
the maximal step sequences are those which use all the transitions, and
there is exactly one final marking (see Proposition~\ref{pr-ueeu}(6)).

\section{Causality in a\-cyc\-lic nets}

There is a straightforward way of
introducing  causality in a\-cyc\-lic nets
by looking at their \emph{step sequences}.
The idea can be explained as follows.

\begin{quote}
Let $t$ and $u$ be two transitions of an a\-cyc\-lic net $\an$.
Then
$u$ is a \emph{cause} of $t$ if, for every step sequence $\sigma U\in\sseq(\an)$
such that $t\in U$, it is the case that $u$ occurs in $\sigma$.
\end{quote}
One can check that if $u$ is a cause of $t$ in the above
sense, then  $uF^+_\an t$. In other words,
if $u$ is a cause of $t$ then there must be a direct
path from $u$ to $t$ in the graph
of $\an$. The converse, however, does not hold.
For instance, Figure~\ref{fig-2}($a$) depicts an a\-cyc\-lic net
in which there is a directed path from $a$ to $c$,
but
there is also a step sequence $\{b\}\{c\}$ in which $c$ is not preceded
by an occurrence of $a$.

However, when $\an$ is
a backward de\-ter\-mi\-nistic a\-cyc\-lic net, then
$uF^+_\an t$ implies that $u$ is a cause of $t$.
This is often referred to as backward
de\-ter\-mi\-nism.
Forward de\-ter\-mi\-nism is
an  additional property enjoyed by occurrence nets.
It means, for example, that an enabled transition
cannot be disabled by executing another transition.
Also,
after executing an arbitrary step sequence, it is possible to
continue the execution until all the
transitions have been executed.

A conclusion of the above short discussion is that
causality can be investigated by using purely
graph theoretic concepts (in this case, directed paths in
the graphs of a\-cyc\-lic nets).

\begin{figure}[ht]
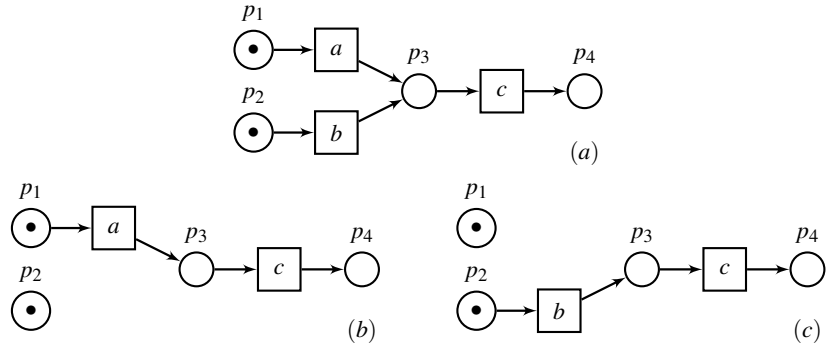

\begin{center}
\StandardNet[0.55]
    \PlacN{P1}{ 0}{2}{\bullet}{p_1}
    \PlacN{P2}{ 0}{0}{\bullet}{p_2}
    \PlacN{P3}{ 4}{1}{}{p_3}
    \PlacN{P4}{ 8}{1}{}{p_4}

    \Whitetran{A}{ 2}{2}{a}
    \Whitetran{B}{ 2}{0}{b}
    \Whitetran{C}{ 6}{1}{c}

    \diredge{P1}{A}
    \diredge{P2}{B}
    \diredge{P3}{C}

    \diredge{A}{P3}
    \diredge{B}{P3}
    \diredge{C}{P4}
\Put{8}{-.5}{$(a)$}
\end{tikzpicture}

\StandardNet[0.55]
    \PlacN{P1}{ 0}{2}{\bullet}{p_1}
    \PlacN{P2}{ 0}{0}{\bullet}{p_2}
    \PlacN{P3}{ 4}{1}{}{p_3}
    \PlacN{P4}{ 8}{1}{}{p_4}

    \Whitetran{A}{ 2}{2}{a}
    \Whitetran{C}{ 6}{1}{c}

    \diredge{P1}{A}
    \diredge{P3}{C}

    \diredge{A}{P3}
    \diredge{C}{P4}
\Put{8}{-.5}{$(b)$}
\end{tikzpicture}
~~~~~~~~~
\StandardNet[0.55]
    \PlacN{P1}{ 0}{2}{\bullet}{p_1}
    \PlacN{P2}{ 0}{0}{\bullet}{p_2}
    \PlacN{P3}{ 4}{1}{}{p_3}
    \PlacN{P4}{ 8}{1}{}{p_4}

    \Whitetran{B}{ 2}{0}{b}
    \Whitetran{C}{ 6}{1}{c}

    \diredge{P2}{B}
    \diredge{P3}{C}

    \diredge{B}{P3}
    \diredge{C}{P4}
\Put{8}{-.5}{$(c)$}
\end{tikzpicture}

\end{center}
\caption{An a\-cyc\-lic net $(a)$ and its two maximal scenarios $(b,c)$,
with their initial markings.}
\label{fig-2}
\end{figure}

\section{Well-formed a\-cyc\-lic nets}

A fundamental consistency criterion applied to a\-cyc\-lic nets
is well-formedness. Its essence is to
ensure an unambiguous representation of causality in
behaviours they represent.
The definition of a well-formed a\-cyc\-lic
net is derived from the notion of a well-formed step
sequence.

\begin{definition}[well-formed step sequence of a\-cyc\-lic net]
\label{def:10aa}
    A step sequence $U_1\dots U_k$ of
    an a\-cyc\-lic net is \emph{well-formed}
    if the following hold:
\begin{enumerate}
\item
    $\post{t}\cap\post{u}=\es$,
    for every $1\leq i\leq k$ and all $t\neq u\in U_i$.
\item
    $\post{U_i}\cap\post{U_j}=\es$,
    for all $1\leq i<j\leq k$.
\EOD
\end{enumerate}
\end{definition}
In a well-formed step sequence of an
a\-cyc\-lic net, no place receives a token more than once.
It then follows, eg.g., that in
such a step sequence no place is a pre-place   
of an executed step more than once,
the order of execution of transitions does not influence the
resulting marking, and each step sequence can be sequentialised
to a firing sequence.

\begin{proposition}
\label{pr-jeff}
    Let $\sigma=U_1\dots U_k$ be a well-formed
    step sequence of an a\-cyc\-lic net $\an$, and
    $M_\an^\init=M_0 U_1 M_1 \dots M_{k-1} U_k M_k$ be the corresponding 
    mixed step sequence.
    Moreover, let $T=\bigcup\sigma$ and
    $U_i=\{t_i^1,\dots,t_i^{m_i}\}$, for every
    $1\leq i\leq k$.
\begin{enumerate}
\item
    $\pre{U_i}\cap\post{U_i}=\es$,
    for every $1\leq i\leq k$.

\item
    $\pre{U_i}\cap\pre{U_j}=\es$ and $U_i\cap U_j=\es$,
    for all $1\leq i<j\leq k$.

\item
    $M_k = M_\an^\init\cup \post{T}\setminus \pre{T}$.

\item
    $t_1^1\dots t_1^{m_1}\dots t_k^1 \dots t_k^{m_k}\in\fseq(\an)$.
    
\item 
    $\enabled_\an(M_i)\cap\enabled_\an(M_m)\subseteq\enabled_\an(M_j)$,
    for all $1\leq i<j<m\leq n$. 
\end{enumerate}
\end{proposition}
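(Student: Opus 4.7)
The plan is to prove the parts in sequence, since each one relies on the previous. The backbone observation I would use throughout is that iterating the update rule $M_j=(M_{j-1}\cup\post{U_j})\setminus\pre{U_j}$ gives $M_{i-1}\subseteq M_\an^\init\cup\bigcup_{l<i}\post{U_l}$, so every $p\in\pre{U_i}\subseteq M_{i-1}$ either lies in $M_\an^\init$ (in which case $\pre{p}=\es$, so $p$ has no incoming edge and lies in no $\post{U_l}$) or in some $\post{U_l}$ with $l<i$. For Part (1), if in addition $p\in\post{U_i}$, the initial alternative is immediately ruled out (no incoming edge) and the other alternative contradicts the well-formedness condition $\post{U_l}\cap\post{U_i}=\es$. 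For Part (2), given $p\in\pre{U_i}\cap\pre{U_j}$ with $i<j$, Part (1) forces $p\notin M_i$, so $p$ must reappear via $p\in\post{U_l}$ for some $i<l<j$; combined with the dichotomy just applied to $p\in\pre{U_i}$, either $\pre{p}=\es$ (excluded, since $p\in\post{U_l}$ requires an incoming edge) or $p\in\post{U_{l'}}\cap\post{U_l}$ with $l'<i<l$, contradicting well-formedness. The disjointness $U_i\cap U_j=\es$ then follows because a shared transition would have a nonempty pre-set (Definition~\ref{def:2}(2)) sitting inside $\pre{U_i}\cap\pre{U_j}=\es$.

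Part (3) I would handle by a short induction on $k$, with the base $k=0$ trivial. In the inductive step, with $M_{k-1}=(M_\an^\init\cup\post{T'})\setminus\pre{T'}$ for $T'=U_1\cup\dots\cup U_{k-1}$, a routine set-algebra calculation for $M_k=(M_{k-1}\cup\post{U_k})\setminus\pre{U_k}$ yields the desired form $(M_\an^\init\cup\post{T})\setminus\pre{T}$ provided $\pre{T'}\cap\post{U_k}\subseteq\pre{U_k}$. I would in fact verify the stronger $\pre{T'}\cap\post{U_k}=\es$ by reapplying the Part~(1)/Part~(2) dichotomy: any $p\in\pre{U_j}\cap\post{U_k}$ with $j<k$ is either initial (so $\pre{p}=\es$, contradicting $p\in\post{U_k}$) or lies in $\post{U_l}$ for some $l<j<k$, contradicting $\post{U_l}\cap\post{U_k}=\es$. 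The main risk I anticipate is garbling the set algebra, so I would carefully write out the identity $((A\setminus B)\cup C)\setminus D=(A\cup C)\setminus(B\cup D)$ and verify that its validity requires precisely $B\cap C\subseteq D$.

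Parts (4) and (5) then follow from Parts~(1)--(3). For Part~(4), I would induct on the flattened prefix: at step $i$, Proposition~\ref{pr-hhhd}(2) makes $t_i^1\dots t_i^{m_i}$ a firing sequence enabled at $M_{i-1}$, and the same set-algebra manipulation as in Part~(3), now applied within a single step using $\pre{U_i}\cap\post{U_i}=\es$ from Part~(1), shows its resulting marking is precisely $M_i$, allowing the concatenations to assemble into a firing sequence from $M_\an^\init$. For Part~(5), I would invoke Part~(3) directly: writing $M_l=(M_\an^\init\cup\post{V_l})\setminus\pre{V_l}$ with $V_l=U_1\cup\dots\cup U_l$ monotone in $l$, for any $U\in\enabled_\an(M_i)\cap\enabled_\an(M_m)$ and any $p\in\pre{U}$ one has $p\notin\pre{V_m}\supseteq\pre{V_j}$ and $p\in M_\an^\init\cup\post{V_i}\subseteq M_\an^\init\cup\post{V_j}$, so $p\in M_j$; since $U$ is already a step, $U\in\enabled_\an(M_j)$.
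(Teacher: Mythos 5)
Your proof is correct and follows essentially the same route as the paper's: the central device in both is the dichotomy that any place marked at $M_{i-1}$ is either initial (hence has empty pre-set, so cannot lie in any $\post{U_l}$) or was produced by some earlier step, with well-formedness condition~(2) of Definition~\ref{def:10aa} ruling out a second producer. The remaining differences are purely presentational --- you derive $U_i\cap U_j=\es$ via pre-places rather than post-places, establish part~(3) by induction with an explicit set identity in place of the paper's direct ``at most one producer, at most one consumer'' argument, and prove part~(5) directly rather than by contradiction --- none of which changes the substance.
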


\begin{proof}
(1)
    Suppose that $p\in\pre{U}_i\cap\post{U}_i$.
    Then $p\notin M^\init_\an$, and so there is 
    $j<i$ such that $p\in\post{U}_j$, contradicting 
    Definition~\ref{def:10aa}(2).

(2)
    $U_i\cap U_j=\es$ follows from Definition~\ref{def:10aa}(2).
    \\
    Suppose that $p\in\pre{U}_i\cap\pre{U}_j$.
    Then, by Definition~\ref{def:6}(2) and part (1),
    $p\in M_{i-1}\cap M_{j-1}$ and $p\notin M_i$.
    Thus $p\notin M_0=M^\init_\an$ and
    there are
    $t\in U_1\cup\dots \cup U_{i-1}$
    and
    $u\in U_{i+1}\cup\dots \cup U_{j-1}$
    such that $p\in\post{t}\cap\post{u}$, contradicting 
    Definition~\ref{def:10aa}(2).

(3)
    If follows from the fact that, for a given
    place $p$, at most one transition $t_i^j$ belongs to $\pre{p}$
    (see Definition~\ref{def:10aa}),
    at most one transition $t_i^j$ belongs to $\post{p}$
    (see part (2) and Definition~\ref{def:5}(1)), 
    and if $p\notin M_\an^\init$
    then there must exist $t_l^m\in\pre{p}$ with $l<j$.

(4)
    Let  $\tau_i=t_i^1\dots t_i^{m_i}$, for every $1\leq i\leq k$.
    It suffices to observe that if $0\leq i<k$
    and $\tau_0\dots\tau_i \in\fseq(\an)$, where
    $\tau_0=\lambda$,
    then $\tau_0\dots\tau_{i+1} \in\fseq(\an)$.
    Indeed, since $\sigma$ is well-formed,
    so is $\tau_0\dots\tau_i$. Hence, by
    part (3), $M_0\STEP{\tau_0\dots\tau_i}{\an}M_i$.
    Thus, by Proposition~\ref{pr-hhhd}(2),
    $\tau_0\dots\tau_{i+1} \in\fseq(\an)$.
    
(5)
    Suppose that  
    $t\in \enabled_\an(M_i)\cap\enabled_\an(M_m)\setminus\enabled_\an(M_j)$.
    Then there is $p\in \pre{t}$
    such that $p\in M_i\cap M_m\setminus M_j$. Hence $p\notin P_\an^\init$,
    $p\in \post{(U_1\cup\dots\cup U_i)}$
    and $p\in \post{(U_{j+1}\cup\dots\cup U_m)}$, contradicting  
    Definition~\ref{def:10aa}(2).
\qed
\end{proof}

To develop a sound treatment of
causality in an a\-cyc\-lic net, it is  required that
all step sequences are well-formed.
Moreover, it is required that a well-formed a\-cyc\-lic
net has no redundant transitions which can never be executed
from the initial marking.

\begin{definition}[well-formed a\-cyc\-lic net]
\label{def:10}
    An a\-cyc\-lic net  is \emph{well-formed}
    if  each transition occurs in at least one
    step sequence and
    all the step sequences are well-formed.
    \\
    Notation: $\WFAN$ is the set of all well-formed a\-cyc\-lic nets.
\EOD
\end{definition}

\begin{example}
\label{ex-6}
The a\-cyc\-lic nets in Figure~\ref{fig-1}($a,b$)
are well-formed, but
the a\-cyc\-lic net in Figure~\ref{fig-2}($a$)
is not.
\\
The reason why the a\-cyc\-lic net in Figure~\ref{fig-2}($a$)
is not well-formed can be explained in terms of \emph{OR-causality}
which it exhibits. Intuitively, one can
execute $a$ and $b$ first, and then $c$. It is, however,
impossible to state in such a case whether $c$ was caused by $a$
or by $b$ (we know, by looking at Figure~\ref{fig-2}($b,c$)
that only one of $a$ and $b$ is needed to cause $c$).
In the context of an investigation, one would
presumably attempt to `repair' the a\-cyc\-lic net in Figure~\ref{fig-2}($a$)
to resolve the problem. Two possible outcomes of such an attempt
are depicted in Figure~\ref{fig-2dd}.
\EOD
\end{example}

Checking that an a\-cyc\-lic net is well-formed 
can be done by looking just at its firing sequences.

\begin{proposition}
\label{prop:10sdxx}
    An a\-cyc\-lic net $\an$ is well-formed iff
    each transition occurs in at least one
    firing sequence and
    all the firing sequences
    are well-formed.
\end{proposition}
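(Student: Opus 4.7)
The statement follows from the interplay between step sequences and their serializations established by Propositions~\ref{pr-hhhd}(2) and~\ref{pr-jeff}(4). A key preliminary observation is that firing sequences are step sequences in which every step is a singleton, so for a firing sequence both clauses of Definition~\ref{def:10aa} collapse into the single requirement that the post-sets of its individual transition occurrences be pairwise disjoint.

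The $(\Longrightarrow)$ direction is immediate. Firing sequences are step sequences, so if all step sequences are well-formed then so are all firing sequences. For transition coverage, any transition $t$ lies in some step sequence $\sigma$; since $\sigma$ is well-formed, Proposition~\ref{pr-jeff}(4) delivers a firing sequence that serializes $\sigma$ and still contains $t$.

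For the $(\Longleftarrow)$ direction, assume every firing sequence is well-formed and every transition occurs in some firing sequence. The coverage condition for step sequences is immediate since $\fseq(\an) \subseteq \sseq(\an)$. To show every step sequence $\sigma = U_1\dots U_k \in \sseq(\an)$ is well-formed, the plan is to build a firing sequence $\tau \in \fseq(\an)$ that serializes $\sigma$ and transfer well-formedness from $\tau$ back to $\sigma$. The construction proceeds by induction on $i$: given a serialization $\tau^{(i-1)} \in \fseq(\an)$ of $U_1\dots U_{i-1}$ reaching marking $N_{i-1}$ (with $N_0 = M^\init_\an$), I need $U_i$ to be enabled at $N_{i-1}$ so that Proposition~\ref{pr-hhhd}(2) can extend $\tau^{(i-1)}$ by singletons for the transitions of $U_i$.

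The main obstacle is justifying this extension, since the serialization marking $N_{i-1}$ may strictly exceed the step marking $M_{i-1}$ reached by $U_1\dots U_{i-1}$ in $\sigma$, and I cannot invoke Proposition~\ref{pr-jeff}(3) for the step marking without begging the question. I would prove the monotonicity $M_{i-1} \subseteq N_{i-1}$ by a direct one-step analysis: for any enabled step $U$ and any enumeration of its transitions, every place in $(M \cup \post{U}) \setminus \pre{U}$ is either never touched (hence preserved) or only produced (hence added and never removed) along the serialization, so the step marking is contained in the serialization marking. This inclusion propagates inductively, and since $\pre{U_i} \subseteq M_{i-1} \subseteq N_{i-1}$ the extension is valid. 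After $k$ iterations we obtain $\tau \in \fseq(\an)$, which by assumption is well-formed. Pairwise disjointness of all post-sets of its singleton occurrences yields clause~(1) of Definition~\ref{def:10aa} for pairs within the same $U_i$ and clause~(2) for pairs drawn from $U_i$ and $U_j$ with $i < j$, so $\sigma$ is well-formed.
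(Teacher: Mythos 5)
Your proof is correct, and its overall strategy matches the paper's: reduce well-formedness of step sequences to well-formedness of their serializations and transfer the property back. Your forward direction is essentially identical to the paper's. Where you genuinely diverge is in the converse. The paper picks a \emph{minimal} non-well-formed step sequence $U_1\dots U_k$, so that the prefix $U_1\dots U_{k-1}$ is well-formed; this lets it invoke Proposition~\ref{pr-jeff}(3,4) on the prefix to get a serialization reaching \emph{exactly} the marking $M_{k-1}$, after which Proposition~\ref{pr-hhhd}(2) appends the serialization of $U_k$. You instead prove a small auxiliary monotonicity lemma — every place of $(M\cup\post{U})\setminus\pre{U}$ is never consumed along a serialization of $U$, so the step-execution marking is contained in the serialization marking, and this inclusion propagates — which lets you serialize an \emph{arbitrary} step sequence with no well-formedness hypothesis at all. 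Your route costs one extra (easy) lemma but buys a strictly stronger intermediate fact (every step sequence of an acyclic net can be serialized to a firing sequence) and avoids the minimal-counterexample bookkeeping; the paper's route reuses existing machinery at the price of needing the well-formed prefix. Your final transfer step is also handled correctly: well-formedness of the firing sequence forces all transition occurrences to be distinct with pairwise disjoint post-sets, which yields both clauses of Definition~\ref{def:10aa} for $\sigma$.
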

\begin{proof}
$(\Longrightarrow)$
    Since $\fseq(\an)\subseteq\sseq(\an)$,
    it suffices to show that each transition
    occurs in at least one firing sequence.
    This, however, follows from
    $\an$ being well-formed and
    and Proposition~\ref{pr-jeff}(4).

$(\Longleftarrow)$
    Since $\fseq(\an)\subseteq\sseq(\an)$, it suffices to show that
    each step sequence is well-formed.
    If this is not the case, then there is
    $U_1\dots U_k\in\sseq(\an)$ ($k\geq 1$) such that
    $U_1\dots U_{k-1}$ is well-formed and
    $U_1\dots U_k$ is not well-formed.
    Let
    $U_i=\{t_i^1,\dots,t_i^{m_i}\}$, for every
    $1\leq i\leq k$.
    Then, by Propositions~\ref{pr-hhhd}(2)
    and~\ref{pr-jeff}(3,4),
    $\tau=t_1^1\dots t_1^{m_1}\dots t_k^1\dots t_k^{m_k}\in\fseq(\an)$.
    Hence, since $\tau$ is well-formed, $\sigma$
    is also well-formed, yielding a contradiction.
\qed
\end{proof} 

Well-formedness is ensured when backward non-determinism is
not allowed.

\begin{proposition}
\label{prop:3xx}
    All step sequences of a backward deterministic
    a\-cyc\-lic net are well-formed.
\end{proposition}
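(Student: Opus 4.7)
The plan is to separate the two conditions of Definition~\ref{def:10aa}. For a backward deterministic acyclic net $\an$, the condition $|\pre{p}|\leq 1$ for every place $p$ immediately gives condition~(1) for any step $U_i$ of any step sequence: two \emph{distinct} transitions $t\neq u\in U_i$ with $p\in\post{t}\cap\post{u}$ would force $\{t,u\}\subseteq\pre{p}$, violating the cardinality bound. So condition~(1) requires no induction at all.

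For condition~(2), the same observation has a useful consequence: if $\post{U_l}\cap\post{U_k}$ were nonempty, picking $p$ in this intersection together with witnesses $t\in U_l$ and $u\in U_k$ would force $\{t,u\}\subseteq\pre{p}$, hence $t=u$. Thus the real content of the proof is to rule out that a single transition can appear in two distinct steps of the same step sequence. I would handle this by induction on the length $k$ of $\sigma=U_1\dots U_k$, the base cases $k\leq 1$ being trivial.

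The induction step would use Proposition~\ref{pr-jeff}(3) applied to the prefix $U_1\dots U_{k-1}$, which is well-formed by the inductive hypothesis. That proposition supplies the closed form
\[
M_{k-1}=\bigl(M_\an^\init\cup\post{(U_1\cup\dots\cup U_{k-1})}\bigr)\setminus\pre{(U_1\cup\dots\cup U_{k-1})}\;.
\]
If a transition $t$ appeared in $U_l$ for some $l\leq k-1$ and also in $U_k$, every place of $\pre{t}$ would sit inside $\pre{(U_1\cup\dots\cup U_{k-1})}$ and thus be absent from $M_{k-1}$. Since Definition~\ref{def:2}(2) forces $\pre{t}\neq\es$, the transition $t$ cannot be enabled at $M_{k-1}$, contradicting $t\in U_k$. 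Combined with the reduction of the previous paragraph, this closes the induction.

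The main obstacle is precisely this re-occurrence step. A direct argument that, for each $p\in\pre{t}$ consumed at step $l$, tries to track which transition could have re-deposited a token into $p$ before step $k$ runs into a regress: backward determinism pins the re-filler down to a unique candidate, but that candidate would then itself have to re-occur, and so on. Invoking Proposition~\ref{pr-jeff}(3) on the inductively well-formed prefix short-circuits the regress in one stroke by exhibiting $M_{k-1}$ in a form that makes every previously consumed pre-place permanently absent.
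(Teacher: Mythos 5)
Your proof is correct. It shares its skeleton with the paper's argument --- both exploit backward determinism in exactly one place, namely to collapse a shared post-place $p\in\post{t}\cap\post{u}$ into a repeated transition $t=u$ via $|\pre{p}|\leq 1$, and both then refute the repetition using well-formedness of the already-validated prefix --- but the two proofs differ in their framing and in the key lemma used. The paper first reduces to firing sequences: by Propositions~\ref{pr-hhhd}(2) and~\ref{pr-jeff}(4) it suffices to consider a shortest non-well-formed firing sequence $t_1\dots t_k t$ with $t_1\dots t_k$ well-formed; after backward determinism forces $t_i=t$, it tracks a single place $p\in\pre{t}$, which must be refilled by some $t_j$ ($i<j\leq k$) to re-enable $t$ and, since $p\notin M_\an^\init$, must also have been filled by some $t_m$ ($m<i$) to enable $t_i$, so $p\in\post{t_m}\cap\post{t_j}$ violates Definition~\ref{def:10aa}(2) for the prefix. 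No regress arises there, contrary to your closing remark: the chase bottoms out after one step against the prefix's well-formedness, which is essentially the same short-circuit you obtain from the marking formula. You instead stay at the level of step sequences, induct on length, and refute repetition wholesale via Proposition~\ref{pr-jeff}(3): the closed form of $M_{k-1}$ removes all of $\pre{(U_1\cup\dots\cup U_{k-1})}$, so a repeated $t$ with $\pre{t}\neq\es$ (Definition~\ref{def:2}(2)) cannot be enabled. Your route avoids the detour through $\fseq(\an)$ and cleanly isolates the single use of backward determinism; the paper's route avoids invoking the marking formula and is slightly more self-contained at the final contradiction. Both are sound.
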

\begin{proof}
    If the result does not hold then, by Propositions~\ref{pr-hhhd}(2)
    and~\ref{pr-jeff}(4),
    there is $t_1\dots t_kt \in\fseq(\an)$ ($k\geq 1$)
    such that $t_1\dots t_k$ is well-formed and $t_1\dots t_kt$
    is not well-formed. Hence there is $i\leq k$ such that
    $\post{t_i}\cap \post{t}\neq\es$. Take any $p\in \pre{t}$.
    Since $\an$ is backward deterministic, $t_i=t$.
    Hence $i<k$ and there is $i<j\leq k$ such that $p\in \post{t}_j$.
    As $p\notin M_\an^\init$ and $p\in \pre{t}_i$,
    there is $1\leq m<i$ such that $p\in \post{t}_m$.
    This, however, contradicts $t_1\dots t_k$ being well-formed.
\qed
\end{proof}
 
All occurrence nets
are well-formed.

\begin{proposition}
\label{prop:3}
    $\ON\subset\WFAN\subset\AN$.
\end{proposition}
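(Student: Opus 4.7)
The plan is to establish both inclusions and their properness separately. The inclusion $\WFAN \subseteq \AN$ is immediate from Definition~\ref{def:10}, since $\WFAN$ is defined as a subclass of $\AN$. So the substantive work lies in showing $\ON \subseteq \WFAN$ and exhibiting witnesses for both strictness claims.

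For $\ON \subseteq \WFAN$, I would take an arbitrary $\on \in \ON$ and verify the two conditions of Definition~\ref{def:10}. Since every occurrence net is backward deterministic (as $|\pre{p}|\leq 1$ is one of the defining conditions of $\ON$ in Definition~\ref{def:3}), we have $\on \in \BDAN$ by Proposition~\ref{prop:2}, so Proposition~\ref{prop:3xx} immediately gives that every step sequence of $\on$ is well-formed. For the second condition, that every transition occurs in at least one step sequence, I would cite Proposition~\ref{pr-ueeu}(3), which asserts $T_\on = \bigcup\{\bigcup\xi \mid \xi \in \sseq(\on)\}$.

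For the strictness of $\WFAN \subset \AN$, I would point to Example~\ref{ex-6}: the acyclic net in Figure~\ref{fig-2}($a$) is in $\AN$ but not in $\WFAN$, since it admits the step sequence $\{a\}\{b\}\{c\}$ witnessing a violation of Definition~\ref{def:10aa}(2) (the post-set of $a$ and the post-set of $b$ both contain $p_3$). For the strictness of $\ON \subset \WFAN$, I would use $\bdan_1$ from Figure~\ref{fig-1}($b$): by Proposition~\ref{prop:3xx} all its step sequences are well-formed, and its step sequence set listed in Example~\ref{ex-5j} shows that each of its transitions $a,b,c,d$ appears in some step sequence, so $\bdan_1 \in \WFAN$; but $\bdan_1 \notin \ON$ since $p_5$ has $|\pre{p_5}| = 2 > 1$.

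No step here is a real obstacle: both inclusions follow from already-established propositions, and both witnesses are already present in the figures and examples of the paper. The only minor care needed is to explicitly confirm the transition-coverage condition for the $\ON \subseteq \WFAN$ direction, which is handled by Proposition~\ref{pr-ueeu}(3) rather than by Proposition~\ref{prop:3xx} alone.
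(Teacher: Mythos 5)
Your proposal is correct and follows essentially the same route as the paper: $\ON\subseteq\WFAN$ via $\ON\subseteq\BDAN$ together with Propositions~\ref{prop:3xx} and~\ref{pr-ueeu}(3), with $\bdan_1$ and the net of Figure~\ref{fig-2}($a$) as the witnesses for strictness. The only difference is that you spell out the verification details (the offending step sequence in Figure~\ref{fig-2}($a$) and the transition coverage of $\bdan_1$) that the paper leaves implicit.
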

\begin{proof}
    Clearly, $\WFAN\subseteq\AN$. Moreover,
    $\ON\subseteq\WFAN$ follows from
    $\ON\subseteq\BDAN$ and Propositions~\ref{pr-ueeu}(3)
    and~\ref{prop:3xx}.
    Moreover, $\bdan_1\in\WFAN\setminus\ON$
    in Figure~\ref{fig-2}, and
    the a\-cyc\-lic net in Figure~\ref{fig-2}($a$)
    is not well-formed.
\qed
\end{proof}

\begin{figure}[ht]
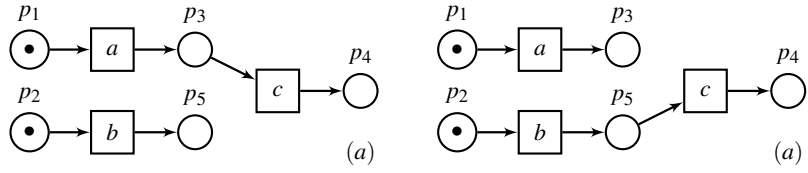

\begin{center}
\StandardNet[0.55]
    \PlacN{P1}{ 0}{2}{\bullet}{p_1}
    \PlacN{P2}{ 0}{0}{\bullet}{p_2}
    \PlacN{P3}{ 4}{2}{}{p_3}
    \PlacN{P4}{ 8}{1}{}{p_4}
    \PlacN{P5}{ 4}{0}{}{p_5}

    \Whitetran{A}{ 2}{2}{a}
    \Whitetran{B}{ 2}{0}{b}
    \Whitetran{C}{ 6}{1}{c}

    \diredge{P1}{A}
    \diredge{P2}{B}
    \diredge{P3}{C}

    \diredge{A}{P3}
    \diredge{B}{P5}
    \diredge{C}{P4}
\Put{8}{-.5}{$(a)$}
\end{tikzpicture}
~~~~~~
\StandardNet[0.55]
    \PlacN{P1}{ 0}{2}{\bullet}{p_1}
    \PlacN{P2}{ 0}{0}{\bullet}{p_2}
    \PlacN{P3}{ 4}{2}{}{p_3}
    \PlacN{P4}{ 8}{1}{}{p_4}
    \PlacN{P5}{ 4}{0}{}{p_5}

    \Whitetran{A}{ 2}{2}{a}
    \Whitetran{B}{ 2}{0}{b}
    \Whitetran{C}{ 6}{1}{c}

    \diredge{P1}{A}
    \diredge{P2}{B}
    \diredge{P5}{C}

    \diredge{A}{P3}
    \diredge{B}{P5}
    \diredge{C}{P4}
\Put{8}{-.5}{$(a)$}
\end{tikzpicture}
\end{center}
\caption{Well-formed a\-cyc\-lic nets.}
\label{fig-2dd}
\end{figure}

\section{Scenarios in a\-cyc\-lic nets}

An a\-cyc\-lic net may exhibit both forward and backward nondeterminism 
and, as a result, represent several different possible execution histories.
The role of the next notion is to identify structurally  all such execution 
histories which can then be inspected, simulated, and analysed. 

Scenarios of an a\-cyc\-lic net are   a\-cyc\-lic subnets which start at the same initial 
marking and are both backward
and forward de\-ter\-mi\-nistic. As a result, each scenario represents
a distinct execution history with clearly de\-ter\-mi\-ned causal
relationships. Scenarios are  much more abstract than (mixed) step sequences,
as one scenario will in general correspond to many step sequences.
 
\begin{definition}[scenario and maximal scenario of  a\-cyc\-lic net]
\label{def:11} 
    A \emph{scenario} of an  a\-cyc\-lic net $\an$  is an occurrence net $\on$ such that 
    $\on\sqsubseteq\an$.
    Moreover, $\on$ is \emph{maximal} if there is 
    no scenario $\on'\neq\on$ such that $\on\sqsubseteq\on'\sqsubseteq\an$.
\\ 
    Notation:
    $\scenarios(\an)$ and $\maxscenarios(\an)$ are respectively
    the sets of all scenarios and maximal scenarios
    of $\an$.
\EOD 
\end{definition}
Intuitively, scenarios  represent
possible executions (concurrent histories),
which are both deterministic and consistent with the 
dependencies implied by the flow relation. 
Maximal scenarios are complete in the sense that
they cannot be extended.

Although scenarios represent behaviours of a\-cyc\-lic nets
in a different way than (mixed) step sequences,
there is a close relationship between these two approaches.
In particular, scenarios do not generate executions
which are not generated by the original a\-cyc\-lic net.
\begin{proposition}
\label{prop:8}
    $\bigcup f(\scenarios(\an))\subseteq f(\an)$, for every
    a\-cyc\-lic net $\an$ and  $f=\mixsseq,\sseq, \reach,\fseq$. 
\end{proposition}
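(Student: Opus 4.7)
The plan is to reduce the statement directly to Proposition~\ref{pr-uususs}. By Definition~\ref{def:11}, every scenario $\on \in \scenarios(\an)$ is, in particular, an occurrence net satisfying $\on \sqsubseteq \an$; that is, $\on$ is a co-initial subnet of $\an$ in the sense of Definition~\ref{def:4aaa}. Since occurrence nets are a special case of acyclic nets (by Proposition~\ref{prop:2}), Proposition~\ref{pr-uususs} applies and yields $f(\on) \subseteq f(\an)$ for each of the four behavioural functions $f = \mixsseq, \sseq, \reach, \fseq$.

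From this, the proof is a one-line set-theoretic manipulation: taking the union over all scenarios,
\[
\bigcup f(\scenarios(\an))
\;=\;
\bigcup_{\on \in \scenarios(\an)} f(\on)
\;\subseteq\;
f(\an),
\]
which is exactly the desired inclusion. No induction, no case analysis, and no auxiliary lemma is needed.

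There is essentially no obstacle here, since all the real work has been absorbed into Proposition~\ref{pr-uususs}, whose proof already handles the preservation of mixed step sequences (and hence of step sequences, firing sequences, and reachable markings) along the co-initial subnet relation $\sqsubseteq$. The only minor point worth stating explicitly in the write-up is the verification that $\on \sqsubseteq \an$ really does hold for scenarios, which is immediate from Definition~\ref{def:11}; everything else is bookkeeping.
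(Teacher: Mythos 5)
Your proposal is correct and follows exactly the same route as the paper, whose proof is the one-liner ``It follows from Definition~\ref{def:11} and Proposition~\ref{pr-uususs}''; you have simply spelled out the details (each scenario $\on$ satisfies $\on\sqsubseteq\an$, so Proposition~\ref{pr-uususs} gives $f(\on)\subseteq f(\an)$, and the union over scenarios preserves the inclusion). Nothing is missing.
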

\begin{proof}
    It follows from Definition~\ref{def:11} and Proposition~\ref{pr-uususs}.
\qed
\end{proof}

The inclusions in the above result cannot be reversed.

\begin{example}
Consider 
the (non-well-formed) a\-cyc\-lic net $\an$ in Figure~\ref{fig-2}($a$)
which generates  step sequence $\sigma=\{a,b\}\{c\}$.
It has exactly two maximal scenarios shown in Figure~\ref{fig-2}($b,c$),
neither of which can execute $\sigma$.  
\EOD
\end{example}
 
Definition~\ref{def:11} introduced scenarios 
in a structural way.
An alternative is to do this be\-ha\-viourally,   
through well-formed step sequences.

\begin{definition}[scenario of well-formed step sequence of a\-cyc\-lic net]
\label{def:13}
    The  \emph{scenario  induced} by a well-formed step
    sequence $\sigma$ of an a\-cyc\-lic net $\an$ is the triple:
\[
\scenario_\an(\sigma)
    \triangleq
    (P,T,F_\an|_{(P\times T)\cup(T\times P)})
\]
   where $T=\bigcup\sigma$
  and $P=P_\an^\init\cup \POST{\an}{T}$.
\EOD
\end{definition}

\begin{example}
\label{ex-8}
In Figure~\ref{fig-1},
$\on_1=\scenario_{\an_1}(\{a\}\{b,c\})$.
Note that
$\on_1$ can also be derived as
$\scenario_{\an_1}(\{a\}\{b\}\{c\})$ or
$\scenario_{\an_1}(\{a\}\{c\}\{b\})$.
\EOD
\end{example}

Scenarios induced by well-formed step sequences
are also scenarios in the structural sense.
Moreover,  maximal step sequences induce
maximal scenarios.

\begin{proposition}
\label{pr-icicied}
    Let $\sigma$  be a well-formed step sequence
    of an a\-cyc\-lic net $\an$.
\begin{enumerate}
\item
    $\scenario_\an(\sigma)\in\scenarios(\an)$
    and
    $\sigma\in\maxsseq(\scenario_\an(\sigma))$.
\item
    $\sigma\in\maxsseq(\an)$
    implies $\scenario_\an(\sigma)\in\maxscenarios(\an)$.
\end{enumerate}
\end{proposition}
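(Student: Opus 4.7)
The plan is to split the argument along the two parts of the statement and to use part~(1) as leverage for part~(2). For part~(1) I will verify, one by one, that $\on := \scenario_\an(\sigma)$ is a subnet of $\an$, that it satisfies the occurrence-net constraints $|\pre{p}|, |\post{p}| \leq 1$, that it is co-initial with $\an$, and that $\sigma$ lies in $\maxsseq(\on)$. Part~(2) will then follow by a short contradiction argument: if $\on$ admitted a strictly larger scenario inside $\an$, I would lift a step extension of $\sigma$ from that scenario back up to $\an$, contradicting the maximality of $\sigma$ there.

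For the subnet condition (Definition~\ref{def:4adddaa}) I would first observe, by a straightforward induction on $\sigma = U_1 \dots U_k$, that every pre-place of every transition $t \in T := \bigcup\sigma$ lies in $P := P_\an^\init \cup \POST{\an}{T}$: each such pre-place is either in $P_\an^\init$ or, since $t$ must be enabled when it fires, is produced by a strictly earlier transition in $\sigma$. Post-places lie in $P$ by construction, so every $t \in T_\on$ retains its full pre- and post-sets. Co-initiality then reduces to checking that a place in $P$ with no $T_\on$-predecessor must lie in $P_\an^\init$, and conversely, both of which are immediate. For the occurrence-net inequalities I would invoke well-formedness directly: two distinct transitions of $T$ sharing a post-place contradict Definition~\ref{def:10aa}(1) or (2), while two distinct transitions sharing a pre-place contradict either Definition~\ref{def:5}(1) (same step) or Proposition~\ref{pr-jeff}(2) (different steps). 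Finally, to show $\sigma \in \maxsseq(\on)$ I would note that Proposition~\ref{pr-jeff}(1) yields $\pre{U_i} \cap \post{U_i} = \es$, so the marking evolution forced by $\sigma$ stays inside $P$ and coincides step-for-step with its evolution in $\an$; maximality is then immediate from Proposition~\ref{pr-ueeu}(1), since any candidate extra step $V$ would need to be disjoint from $\bigcup\sigma = T_\on$, forcing $V = \es$, which is impossible since steps are nonempty.

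For part~(2), I would assume towards contradiction that $\sigma \in \maxsseq(\an)$ but there exists a scenario $\on' \neq \on$ with $\on \sqsubseteq \on' \sqsubseteq \an$. Then Proposition~\ref{pr-uusu}(2) yields some $t \in T_{\on'} \setminus T_\on$, while Proposition~\ref{pr-uususs} gives $\sigma \in \sseq(\on) \subseteq \sseq(\on')$. But $\sigma$ cannot be maximal in $\on'$, because Proposition~\ref{pr-ueeu}(5) forces maximal step sequences of the occurrence net $\on'$ to cover $T_{\on'}$, and $t$ is missing from $\bigcup\sigma$. Hence some strict extension $\sigma U \dots$ lies in $\sseq(\on') \subseteq \sseq(\an)$, and its length-one prefix $\sigma U \in \sseq(\an)$ contradicts the maximality of $\sigma$ in $\an$.

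The main obstacle will be the bookkeeping in part~(1), specifically keeping clear which of the required properties of $\on$ is supplied by well-formedness (Definition~\ref{def:10aa} and Proposition~\ref{pr-jeff}) and which is supplied by the enabling conditions in Definition~\ref{def:6}; once part~(1) is in place, part~(2) is essentially a one-line reduction via Propositions~\ref{pr-uususs} and~\ref{pr-ueeu}(5).
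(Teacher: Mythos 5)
Your proposal is correct and follows essentially the same route as the paper's proof: part (1) establishes that $\scenario_\an(\sigma)$ is a co-initial subnet via the markings visited by $\sigma$, derives the occurrence-net conditions from Definitions~\ref{def:10aa} and~\ref{def:5}(1) together with Proposition~\ref{pr-jeff}(2), and obtains $\sigma\in\maxsseq(\scenario_\an(\sigma))$ from Proposition~\ref{pr-ueeu}; part (2) is the same contradiction argument via Propositions~\ref{pr-uususs} and~\ref{pr-ueeu}(5). The only difference is that you spell out explicitly (by induction on the firing of $\sigma$) why $\PRE{\an}{T}\subseteq P$, which the paper compresses into a citation of the definitions.
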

\begin{proof}
    Let $\sigma$, $P$, and $T$ be as in
    Definition~\ref{def:13}.
    Moreover, let
    $\on=\scenario_\an(\sigma)$, and
    $\mu=M_0U_1M_1\dots M_{k-1}U_kM_k\in\mixsseq(\an)$
    be such that $\sigma=U_1\dots U_k$.

(1)
    By Definitions~\ref{def:6}, \ref{def:7rtt}, \ref{def:7aee}, and~\ref{def:13},
    we have
    $P=M_0\cup\dots\cup M_k$
    and $\PRE{\an}{T}\subseteq P$.
    Hence, by Definitions~\ref{def:4adddaa} and~\ref{def:4aaa},  $\on$ is an a\-cyc\-lic net
    such that $\on\sqsubseteq\an$.
    Moreover, by Definitions~\ref{def:10aa} and~\ref{def:5}(1),
    and Proposition~\ref{pr-jeff}(2), $\on$
    is an occurrence net.
    Hence $\on\in\scenarios(\an)$.
    It then follows from Proposition~\ref{def:5}(1),
    Definition~\ref{def:7aee}(1), and $\on\in\ON$
    that Proposition~\ref{pr-ueeu}(1,5) can be applied to conclude that
    $\sigma\in\maxsseq(\scenario_\an(\sigma))$.

(2)
    By part (1), $\on\in\scenarios(\an)$.
    If $\on\notin\maxscenarios(\an)$, then
    there is $\on'\in\ON$ such that
    $\on\sqsubseteq\on'\sqsubseteq\an$ and $T=T_\on\subset T_{\on'}$.
    By Proposition~\ref{pr-uususs},
    $\sigma\in\sseq(\on')$.
    Hence, by
    Proposition~\ref{pr-ueeu}(5), $\sigma\notin\maxsseq(\on')$.
    This produces a contradiction with
    $\sigma\in\maxsseq(\an)$ and Proposition~\ref{pr-uususs}.
\qed
\end{proof}

Following an observation made
in Example~\ref{ex-8}, two well-formed step sequences
induce the same scenario iff
they involve exactly the same transitions.

\begin{proposition}
\label{pr-icici}
    Let $\sigma$ and $\sigma'$ be well-formed step sequences
    of an a\-cyc\-lic net $\an$.
    Then
    $\scenario_\an(\sigma)=\scenario_\an(\sigma')$
    iff
    $\bigcup\sigma=\bigcup\sigma'$.
\end{proposition}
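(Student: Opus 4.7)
The plan is to observe that this statement is essentially immediate from Definition~\ref{def:13}, because the induced scenario is a function of three pieces of data --- the ambient net $\an$, the set of transitions $T=\bigcup\sigma$, and the set of places $P=P_\an^\init\cup\POST{\an}{T}$ --- and the latter two depend on $\sigma$ only through $\bigcup\sigma$.

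For the ($\Longrightarrow$) direction I would simply read off the transition component of the triple: if $\scenario_\an(\sigma)=\scenario_\an(\sigma')$ then their second components coincide, and by Definition~\ref{def:13} these second components are exactly $\bigcup\sigma$ and $\bigcup\sigma'$.

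For the ($\Longleftarrow$) direction I would assume $T:=\bigcup\sigma=\bigcup\sigma'$ and verify that all three components of $\scenario_\an(\sigma)$ and $\scenario_\an(\sigma')$ match. The transition components are equal by assumption. The place components are $P_\an^\init\cup\POST{\an}{T}$ in each case, hence equal. The flow components are the restrictions of $F_\an$ to $(P\times T)\cup(T\times P)$, which depend only on $P$ and $T$, hence also equal.

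There is no real obstacle here: the statement is a direct unfolding of Definition~\ref{def:13}. The only thing worth flagging in the write-up is that the proof does \emph{not} use well-formedness of $\sigma$ or $\sigma'$ --- well-formedness is needed in Proposition~\ref{pr-icicied} to ensure that $\scenario_\an(\sigma)$ is actually an occurrence net, but the purely set-theoretic equality being asserted here only needs the definition of the triple.
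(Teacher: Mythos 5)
Your proposal is correct and matches the paper's proof, which simply states that the result follows directly from Definition~\ref{def:13}; you have just spelled out the unfolding of that definition explicitly. Your side remark that well-formedness is not actually needed for this set-theoretic equality (only for Proposition~\ref{pr-icicied}) is also accurate.
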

\begin{proof}
    It follows directly from Definition~\ref{def:13}.
\qed
\end{proof}

Well-formedness of a\-cyc\-lic nets can be re-phrased
in terms of their scenarios.

\begin{proposition}
\label{prop:10sd}
    The following statements are equivalent,
    for every a\-cyc\-lic net.
\begin{enumerate}
\item
    The a\-cyc\-lic net is well-formed.

\item
    Each transition occurs in at least one
    scenario, and each step sequence is a
    step sequence of at least one scenario.

\item
    Each transition occurs in at least one
    scenario, and each firing sequence is a
    firing sequence of at least one scenario.
\end{enumerate}
\end{proposition}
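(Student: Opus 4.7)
The plan is to prove the cycle of implications $(1) \Longrightarrow (2) \Longrightarrow (3) \Longrightarrow (1)$. The implication $(2) \Longrightarrow (3)$ is immediate: every firing sequence is in particular a step sequence, so by $(2)$ it is a step sequence of some scenario, and as it consists entirely of singleton steps it is in fact a firing sequence of that scenario.

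For $(1) \Longrightarrow (2)$, suppose $\an$ is well-formed. Every step sequence $\sigma$ of $\an$ is then well-formed by Definition~\ref{def:10}, so Proposition~\ref{pr-icicied}(1) applies to give $\sigma \in \maxsseq(\scenario_\an(\sigma)) \subseteq \sseq(\scenario_\an(\sigma))$, establishing the second conjunct of $(2)$. For the first, each transition $t$ occurs in some step sequence $\sigma_t$ by well-formedness, and then $t$ appears in the scenario $\scenario_\an(\sigma_t)$, which is well-defined because $\sigma_t$ is well-formed.

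The substantive direction is $(3) \Longrightarrow (1)$. By Proposition~\ref{prop:10sdxx}, well-formedness of $\an$ reduces to showing that every transition of $\an$ occurs in some firing sequence of $\an$ and that every firing sequence of $\an$ is well-formed. For the first, pick a transition $t$; by hypothesis $t$ lies in some scenario $\on \in \scenarios(\an)$. Since $\on \in \ON \subseteq \BDAN$, Proposition~\ref{pr-ueeu}(3) yields a step sequence $\xi \in \sseq(\on)$ with $t \in \bigcup \xi$, which is well-formed by Proposition~\ref{prop:3xx}, and so by Proposition~\ref{pr-jeff}(4) it has a sequential refinement $\tau \in \fseq(\on)$ still involving $t$; Proposition~\ref{pr-uususs} then lifts $\tau$ to $\fseq(\an)$. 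For the second, any $\tau \in \fseq(\an)$ is by $(3)$ a firing sequence of some scenario $\on' \in \scenarios(\an) \subseteq \ON \subseteq \BDAN$, so $\tau$ is well-formed by Proposition~\ref{prop:3xx}.

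The main obstacle, and the only step that is not a one-line invocation of a prior proposition, is relaying the existence of a transition inside a scenario back to a firing sequence of $\an$: this forces one to combine the completeness of occurrence-net step sequences (Proposition~\ref{pr-ueeu}(3)), the serialisation of well-formed step sequences into firing sequences (Proposition~\ref{pr-jeff}(4)), and the transfer of behaviour across co-initial subnets (Proposition~\ref{pr-uususs}) into a single chain. Once that chain is in place, the remaining parts reduce to routine bookkeeping.
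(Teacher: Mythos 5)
Your proof is correct, and all three directions go through, but the key direction $(3)\Longrightarrow(1)$ is organised differently from the paper's. The paper does not route through Proposition~\ref{prop:10sdxx}; instead it argues by contradiction, picking a shortest non-well-formed step sequence $\sigma U$ (so that its prefix $\sigma$ is still well-formed), serialising it via Propositions~\ref{pr-hhhd}(2) and~\ref{pr-jeff}(3,4) into a non-well-formed firing sequence $\tau\tau'$, and then deriving a contradiction with Proposition~\ref{prop:3xx} because hypothesis $(3)$ places that firing sequence inside a scenario. Your reduction to Proposition~\ref{prop:10sdxx} buys a more direct argument: once only firing sequences need to be checked, $(3)$ hands each of them to a scenario immediately and Proposition~\ref{prop:3xx} finishes, with no minimal-counterexample device (the paper needs that device precisely because Proposition~\ref{pr-jeff}(4) can only serialise a step sequence whose prefix is already known to be well-formed). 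You also make explicit the transition-coverage half of well-formedness in this direction, chaining Propositions~\ref{pr-ueeu}(3), \ref{prop:3xx}, \ref{pr-jeff}(4) and~\ref{pr-uususs} to turn membership of $t$ in a scenario into a firing sequence of the ambient net; the paper's proof of $(3)\Longrightarrow(1)$ leaves this conjunct tacit even though Definition~\ref{def:10} requires it, so your version is the more complete of the two. The directions $(1)\Longrightarrow(2)$ and $(2)\Longrightarrow(3)$ coincide with the paper's.
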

\begin{proof}
    Let $\an$ be an a\-cyc\-lic net.

$(1) \Longrightarrow (2)$
    Suppose that $\sigma\in\sseq(\an)$.
    Since $\sigma$ is well-formed, by Proposition~\ref{pr-icicied}(1),
    $\scenario_\an(\sigma)\in\scenarios(\an)$
    and
    $\sigma\in\sseq(\scenario_\an(\sigma))$.

    Suppose now that $t\in T_\an$.
    Since $\an$ is well-formed, there is
    $\sigma\in\sseq(\an)$ in which $t$ occurs.
    Moreover, as just shown, $\sigma\in\sseq(\scenario_\an(\sigma))$.

$(2) \Longrightarrow (3)$
    Obvious.

$(3) \Longrightarrow (1)$
    Suppose that not all step sequences of $\an$ are well-formed.
    Then there is $\sigma U\in\sseq(\an)$
    such that $\sigma$ is well-formed and $\sigma U$ is not.
    Hence, by Propositions~\ref{pr-hhhd}(2) and~\ref{pr-jeff}(3,4),
    there is a firing sequence
    $\tau\tau'\in\fseq(\an)$ such that
    $\bigcup \sigma=\bigcup\tau$ and $U=\bigcup\tau'$.
    Clearly, $\tau\tau'$ is not well-formed.
    On the other hand,
    there is $\on\in\scenarios(\an)$ such that
    $\tau\tau'\in\sseq(\on)$,
    contradicting   Proposition~\ref{prop:3xx}.
\qed
\end{proof}

Each (maximal) step sequence of a well-formed
a\-cyc\-lic net induces a  (maximal)
scenario
and, conversely, each (maximal) scenario  is induced by a
(maximal)
step sequence.
Therefore, the structurally and be\-ha\-viourally defined scenarios coincide.

\begin{proposition}
\label{prop:4tre}
    Let $\an$ be a well-formed a\-cyc\-lic net.
\begin{enumerate}
\item
    $\scenarios(\an)=\scenario_\an(\sseq(\an))$.
\item
    $\maxscenarios(\an)=\scenario_\an(\maxsseq(\an))$.
\end{enumerate}
\end{proposition}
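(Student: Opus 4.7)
\medskip
\noindent
\textbf{Proof plan.} I will establish both equalities by double inclusion. The $\supseteq$ direction of part~(1) is immediate: for any $\sigma\in\sseq(\an)$, well-formedness of $\an$ makes $\sigma$ well-formed, so Proposition~\ref{pr-icicied}(1) yields $\scenario_\an(\sigma)\in\scenarios(\an)$. Similarly, the $\supseteq$ direction of part~(2) follows immediately from Proposition~\ref{pr-icicied}(2). So the real content is in the two reverse inclusions, both of which I plan to prove by exhibiting an explicit step sequence of $\an$ that induces a given (maximal) scenario.

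For $\scenarios(\an)\subseteq\scenario_\an(\sseq(\an))$, fix $\on\in\scenarios(\an)$. Since $\on$ is an occurrence net, Proposition~\ref{pr-ueeu}(3,5) guarantees the existence of a step sequence $\xi\in\maxsseq(\on)$ with $\bigcup\xi=T_\on$. By $\on\sqsubseteq\an$ and Proposition~\ref{pr-uususs}, $\xi\in\sseq(\an)$, and by well-formedness of $\an$ this $\xi$ is well-formed. It then remains to check $\scenario_\an(\xi)=\on$: the transition sets agree by $\bigcup\xi=T_\on$; the place sets agree by Proposition~\ref{pr-uusu}(1) applied to $\on\sqsubseteq\an$, giving $P_\on=P_\an^\init\uplus\POST{\an}{T_\on}$, which matches the definition of $\scenario_\an(\xi)$; and the flow relations agree because both are the appropriate restrictions of $F_\an$ (Definition~\ref{def:4adddaa}).

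For $\maxscenarios(\an)\subseteq\scenario_\an(\maxsseq(\an))$, given $\on\in\maxscenarios(\an)$, choose $\xi\in\maxsseq(\on)$ as above, so that $\scenario_\an(\xi)=\on$ by the previous paragraph. The task is to upgrade $\xi\in\sseq(\an)$ to $\xi\in\maxsseq(\an)$. Suppose for contradiction that $\xi U\in\sseq(\an)$ for some step $U$. Well-formedness of $\an$ makes $\xi U$ well-formed, so by Proposition~\ref{pr-icicied}(1), $\on'\triangleq\scenario_\an(\xi U)\in\scenarios(\an)$. A direct check from Definition~\ref{def:13} shows $\on\sqsubseteq\on'\sqsubseteq\an$ (same initial places, and $T_\on=\bigcup\xi\subseteq\bigcup\xi U=T_{\on'}$), while $U\neq\es$ forces $T_\on\subsetneq T_{\on'}$, hence by Proposition~\ref{pr-uusu}(2) $\on\neq\on'$, contradicting maximality of $\on$.

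The main obstacle is really just the bookkeeping in the last paragraph: one must be careful that the step $U$ witnessing non-maximality of $\xi$ in $\an$ genuinely produces a strictly larger scenario, and this relies on combining Propositions~\ref{pr-icicied}, \ref{pr-uusu}, and the definition of $\scenario_\an$ in one breath. Everything else is a direct assembly of results already proved.
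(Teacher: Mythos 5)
Your proof is correct and follows essentially the same route as the paper: the $\supseteq$ inclusions from Proposition~\ref{pr-icicied}, the $\subseteq$ inclusion of (1) by taking a maximal step sequence $\xi$ of the scenario (via Proposition~\ref{pr-ueeu}) and checking $\scenario_\an(\xi)=\on$, and the $\subseteq$ inclusion of (2) by deriving a strictly larger scenario from any extension of $\xi$, contradicting maximality. The only cosmetic difference is that you extend $\xi$ by a step $U$ where the paper uses a single transition $t$; both work.
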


\begin{proof}
    The ($\supseteq$) inclusions follow from Proposition~\ref{pr-icicied}.
    
(1)
    To show the ($\subseteq$) inclusion,
    suppose that $\on\in\scenarios(\an)$, and take any $\sigma\in\maxsseq(\on)$.
    By Propositions~\ref{pr-ueeu}(5) and~\ref{prop:8}, 
    $\bigcup\sigma=T_\on$ and $\sigma\in\sseq(\an)$.
    Hence $\scenario_\on(\sigma)=\on$ and $\scenario_\on(\sigma)=\scenario_\an(\sigma)$.
    As a result, $\on\in\scenario_\an(\sseq(\an))$.
     
(2)
    To show the ($\subseteq$) inclusion,
    suppose that $\on\in\maxscenarios(\an)$, and take any 
    $\sigma\in\maxsseq(\on)\subseteq\sseq(\an)$.
    By the proof of part (1), we have $\on\in\scenario_\an(\sseq(\an))$.
    Suppose that $\sigma\notin\maxsseq(\an)$. Then 
    there is $t\in T_\an$ such that $\sigma t\in\sseq(\an)$.
    We then observe that $\on\sqsubseteq\scenario_\an(\sigma t) $ and 
    $\on\neq\scenario_\an(\sigma t) $,
    contradicting the maximality of $\on$.
\qed
\end{proof}

Behaviours of   a\-cyc\-lic nets correspond  to the joint behaviour of
their scenarios.

\begin{proposition}
\label{prop:4trefff}
    Let $\an$ be a well-formed a\-cyc\-lic net.
\begin{enumerate}
\item
    $f(\an) =\bigcup f(\scenarios(\an))$,  \hfill
    for 
    $f=\sseq,\mixsseq,\reach,\fseq$.
\item
    $f(\an) =\bigcup f(\maxscenarios(\an))$, 
    \\\hspace*{\fill} 
    for  
    $f=\maxsseq,\maxmixsseq,\finreach$.
\end{enumerate}
\end{proposition}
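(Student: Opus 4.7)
The proof proceeds by handling each inclusion in the two equalities separately. For part (1), the $\supseteq$ inclusion is already supplied by Proposition~\ref{prop:8}, so the work concentrates on the $\subseteq$ direction; part (2) then reduces to part (1) together with a maximality-preservation argument in each direction.

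I would begin with part (1) at $f=\sseq$. Given $\sigma\in\sseq(\an)$, well-formedness of $\an$ (Definition~\ref{def:10}) makes $\sigma$ well-formed, so Proposition~\ref{pr-icicied}(1) produces a scenario $\on=\scenario_\an(\sigma)\in\scenarios(\an)$ admitting $\sigma$ as a (maximal) step sequence. The $\fseq$ case specialises this to sequences of singleton steps. For $f=\mixsseq$, the key observation is that Definition~\ref{def:4adddaa} forces each transition retained in a subnet to keep its full local environment; hence the pre- and post-sets used to update markings in $\on$ coincide with those in $\an$, and the entire mixed step sequence lifts verbatim to $\on$ (starting from the shared initial marking $M^\init_\an=M^\init_\on=P^\init_\an$ guaranteed by Definition~\ref{def:4aaa}). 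The $\reach$ case then drops out because a reachable marking is determined by the underlying step sequence.

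For part (2), the $\subseteq$ direction combines part (1) with Propositions~\ref{pr-icicied}(1,2) and~\ref{prop:4tre}(2): any $\sigma\in\maxsseq(\an)$ induces a maximal scenario $\on=\scenario_\an(\sigma)$ in which $\sigma$ remains maximal, and the $\maxmixsseq$ and $\finreach$ cases reduce to this via the underlying step sequence. The step I expect to be the main obstacle is the $\supseteq$ direction for $\maxsseq$: given $\on\in\maxscenarios(\an)$ and $\sigma\in\maxsseq(\on)$, one must show $\sigma\in\maxsseq(\an)$, even though maximality of $\sigma$ inside $\on$ and maximality of $\on$ inside $\an$ are a priori independent constraints. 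The plan is to argue by contradiction: if $\sigma U\in\sseq(\an)$ for some step $U$, then well-formedness of $\an$ together with Definition~\ref{def:10aa}(2) and the fact that every transition has a nonempty post-set (Definition~\ref{def:2}(2)) force $U\cap\bigcup\sigma=\es$; since Proposition~\ref{pr-ueeu}(5) gives $\bigcup\sigma=T_\on$, the induced scenario $\scenario_\an(\sigma U)$ strictly extends $\on$ as a co-initial subnet (verified by unwinding Definitions~\ref{def:4adddaa} and~\ref{def:4aaa} and applying Proposition~\ref{pr-uusu}(2)), contradicting maximality of $\on$. The $\maxmixsseq$ and $\finreach$ analogues then follow by descending to the underlying step sequence of any potential counterexample.
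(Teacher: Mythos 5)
Your proposal is correct and follows essentially the same route as the paper's proof: Proposition~\ref{prop:8} for the $\supseteq$ inclusions of part (1), Proposition~\ref{pr-icicied}(1) for the $\subseteq$ inclusion at $f=\sseq$ with the other functions reduced to that case, Proposition~\ref{pr-icicied}(1,2) for the $\subseteq$ direction of part (2), and a contradiction via the strictly larger induced scenario $\scenario_\an(\sigma U)$ for the $\supseteq$ direction of part (2). Your explicit justification that $U\cap\bigcup\sigma=\es$ (via well-formedness) is a detail the paper leaves implicit when asserting $\on\neq\scenario_\an(\sigma t)$, but it is the same argument.
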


\begin{proof}
(1)
    The ($\supseteq$) inclusions follow from Proposition~\ref{prop:8}.
    The  ($\subseteq$) inclusion for $f=\sseq$  
    follows from the second part of Proposition~\ref{pr-icicied}(1).
    The remaining   inclusions follow from this and
    Definition~\ref{def:4aaa}.

(2)
    It suffices to consider
    $f=\maxsseq$.
    
    To show the ($\subseteq$) inclusions, 
    suppose that  $\sigma\in\maxsseq(\an)$. Then, by Proposition~\ref{pr-icicied}(2),
    $\scenario_\an(\sigma)\in\maxscenarios(\an)$.
    Moreover, by Proposition~\ref{pr-icicied}(1),
    $\sigma\in\maxsseq(\scenario_\an(\sigma))$.
 
    To show the ($\supseteq$) inclusions,
    suppose that $\on\in \maxscenarios(\an)$ and 
    $\sigma\in\maxsseq(\on)\subseteq\sseq(\an)$.
    If $\sigma\notin\maxsseq(\an)$, then there is $t\in T_\an$
    such that $\sigma\in\sseq(\an)$.
    Hence $\on\sqsubseteq\scenario_\an(\sigma t) $ and 
    $\on\neq\scenario_\an(\sigma t) $,
    contradicting the maximality of $\on$. 
\qed
\end{proof}

In well-formed a\-cyc\-lic nets, all parts of their structure
are relevant as
they are covered by the scenarios in a graph-theoretic way.

\begin{proposition}
\label{pr:14eyu}
    $X_\an= \bigcup \{X_\on \mid \on\in\scenarios(\an)\}$,
    for every well-formed a\-cyc\-lic net $\an$ and  $X=P,T,F$.
\end{proposition}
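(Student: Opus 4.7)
The plan is to prove the two inclusions separately, with the substantive work lying in the forward direction, which I would dispatch case-by-case on $X$. The reverse inclusion $(\supseteq)$ is immediate: every $\on \in \scenarios(\an)$ is a subnet of $\an$ by Definition~\ref{def:11}, and by Definition~\ref{def:4adddaa} this yields $P_\on \subseteq P_\an$, $T_\on \subseteq T_\an$, and $F_\on \subseteq F_\an$ directly.

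For the $(\subseteq)$ direction, I would begin with $X = T$, which is exactly the transition-coverage part of Proposition~\ref{prop:10sd}(2): since $\an$ is well-formed, every $t \in T_\an$ occurs in at least one scenario. Next, for $X = P$, I would use Proposition~\ref{prop-vvcc} to split $P_\an = P_\an^\init \uplus \POST{\an}{T_\an}$. For an initial place $p \in P_\an^\init$, any scenario suffices because Definition~\ref{def:4aaa} forces $P_\on^\init = P_\an^\init \subseteq P_\on$; a canonical witness is $\scenario_\an(\lambda)$, which belongs to $\scenarios(\an)$ by Proposition~\ref{pr-icicied}(1) applied to the vacuously well-formed empty step sequence. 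For $p \in \POST{\an}{t}$, I would apply the just-established $T$-case to obtain a scenario $\on$ with $t \in T_\on$ and then invoke the subnet clause $\{q \mid (t,q) \in F_\on\} = \POST{\an}{t}$, which forces $p \in P_\on$. Finally, for $X = F$, each arc $(x,y) \in F_\an$ is incident (by the typing of $F_\an$) to exactly one transition $t$; I would again cover $t$ by a scenario via the $T$-case, and appeal to the subnet condition of Definition~\ref{def:4adddaa}, which retains the entire local environment of $t$, arcs included, so $(x,y) \in F_\on$.

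There is no serious obstacle here: the statement is essentially a routine consequence of combining the global transition-coverage property from Proposition~\ref{prop:10sd}(2) with the local environment preservation baked into the notion of subnet. The only subtlety worth flagging is ensuring that \emph{some} scenario exists when $T_\an = \es$, so that the initial places are covered; this is handled by the witness $\scenario_\an(\lambda)$ noted above.
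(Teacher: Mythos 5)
Your proof is correct and takes essentially the same route as the paper's, which simply cites Proposition~\ref{prop:10sd} together with Definitions~\ref{def:4adddaa} and~\ref{def:4aaa}: transition coverage from well-formedness plus preservation of local environments in subnets. Your version just fills in the details the paper leaves implicit, including the worthwhile observation that $\scenario_\an(\lambda)$ supplies a scenario covering the initial places even when $T_\an=\es$.
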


\begin{proof}  
    The ($\supseteq$) inclusions follow from Definitions~\ref{def:4adddaa} and~\ref{def:4aaa}. 
    The ($\subseteq$) inclusions follow from Proposition~\ref{prop:10sd}
    and Definitions~\ref{def:4adddaa} and~\ref{def:4aaa}.
\qed
\end{proof}

\begin{proposition}
\label{prop:9fff} 
    $\scenarios(\an)\subseteq\scenarios(\an')$, for all a\-cyc\-lic nets 
    $\an$ and $\an'$ satisfying $\an\sqsubseteq\an'$.
\end{proposition}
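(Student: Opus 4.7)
The plan is to unfold the two relations $\sqsubseteq$ using Definitions~\ref{def:4adddaa} and~\ref{def:4aaa} and show that the subnet relation composes, i.e., is transitive, as far as co-initial subnets are concerned. Once that is established, the statement is immediate: any scenario $\on$ of $\an$ is by definition an occurrence net satisfying $\on\sqsubseteq\an$, and combining this with $\an\sqsubseteq\an'$ should yield $\on\sqsubseteq\an'$, which makes $\on$ a scenario of $\an'$.

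First I would fix $\on\in\scenarios(\an)$ and aim at checking the two ingredients of $\on\sqsubseteq\an'$ separately. For the co-initial part, I chain the equalities $P_\on^\init=P_\an^\init=P_{\an'}^\init$ supplied by $\on\sqsubseteq\an$ and $\an\sqsubseteq\an'$. For the subnet part, I note that $P_\on\subseteq P_\an\subseteq P_{\an'}$ and $T_\on\subseteq T_\an\subseteq T_{\an'}$, while the flow-relation identity in Definition~\ref{def:4adddaa} gives, successively, $F_\on=F_\an|_{(P_\on\times T_\on)\cup(T_\on\times P_\on)}$ and $F_\an=F_{\an'}|_{(P_\an\times T_\an)\cup(T_\an\times P_\an)}$; restricting the latter further to $P_\on\cup T_\on$ yields $F_\on=F_{\an'}|_{(P_\on\times T_\on)\cup(T_\on\times P_\on)}$, as required.

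The only condition of Definition~\ref{def:4adddaa} that requires a moment's attention is the preservation of the local environment: for every $t\in T_\on$ we must have $\{p\mid(p,t)\in F_\on\}=\PRE{\an'}{t}$ and $\{p\mid(t,p)\in F_\on\}=\POST{\an'}{t}$. Applying $\on\subseteq\an$ to $t$ gives this set equal to $\PRE{\an}{t}$; then applying $\an\subseteq\an'$ to $t\in T_\an$ gives $\PRE{\an}{t}=\PRE{\an'}{t}$, and symmetrically for post-places. No additional verification is needed to show that $\on$ remains an occurrence net, since the conditions $|\pre{p}|\leq1$ and $|\post{p}|\leq1$ from Definition~\ref{def:3} are intrinsic to $\on$ and already hold by assumption.

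There is no real obstacle here; the proof is a definition chase whose only subtle point is to remember that the flow restriction in the definition of a subnet must be verified with respect to $\an'$ rather than $\an$, which is exactly what transitivity of the restriction operation delivers. Hence $\on\sqsubseteq\an'$, and so $\on\in\scenarios(\an')$ by Definition~\ref{def:11}.
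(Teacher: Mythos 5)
Your proof is correct and matches the paper's approach: the paper simply states that the result follows directly from the definitions, and your definition chase (transitivity of the co-initial subnet relation via chaining $P_\on^\init=P_\an^\init=P_{\an'}^\init$, composing the flow restrictions, and propagating the local-environment equalities $\PRE{\an}{t}=\PRE{\an'}{t}$) is exactly the argument that is left implicit there. No gaps.
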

\begin{proof}
    It follows directly from the definitions.
\qed
\end{proof} 

The next result shows that an occurrence net has exactly 
one maximal scenario (itself),
which can be interpreted as saying it is a precise representation 
of a single execution history.
\begin{proposition}
\label{prop:9}
    $\maxscenarios(\on)=\{\on\}$, for every occurrence $\on$.
\end{proposition}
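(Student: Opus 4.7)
The plan is to use two observations about the partial order $\sqsubseteq$ on occurrence nets, together with Proposition~\ref{pr-uusu}(2). First I would note that $\sqsubseteq$ is reflexive: since $\on$ is trivially a subnet of itself (every transition retains its pre- and post-places) and $P_\on^\init = P_\on^\init$, we have $\on \sqsubseteq \on$. Because $\on$ is by assumption an occurrence net, this immediately gives $\on \in \scenarios(\on)$.

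Next I would verify that $\on$ is maximal among its own scenarios. Suppose $\on'$ is any scenario of $\on$ with $\on \sqsubseteq \on'$. By Definition~\ref{def:11}, $\on' \sqsubseteq \on$, so together with the assumption $\on \sqsubseteq \on'$ we obtain $T_\on \subseteq T_{\on'} \subseteq T_\on$, hence $T_\on = T_{\on'}$. Proposition~\ref{pr-uusu}(2) then yields $\on = \on'$, proving $\on \in \maxscenarios(\on)$.

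Finally, for the reverse inclusion, let $\on' \in \maxscenarios(\on)$ be arbitrary. Then $\on' \sqsubseteq \on$ and, using the first step, $\on \in \scenarios(\on)$. Thus $\on' \sqsubseteq \on \sqsubseteq \on$ with $\on$ a scenario of $\on$, so maximality of $\on'$ forces $\on = \on'$. This gives $\maxscenarios(\on) \subseteq \{\on\}$ and completes the proof.

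There is no real obstacle here: the argument is purely a matter of unpacking Definitions~\ref{def:4aaa} and~\ref{def:11} and invoking Proposition~\ref{pr-uusu}(2). The only subtlety worth stating explicitly is that $\sqsubseteq$ is reflexive on occurrence nets, which is needed both to put $\on$ into $\scenarios(\on)$ and to witness the chain $\on' \sqsubseteq \on \sqsubseteq \on$ in the uniqueness argument.
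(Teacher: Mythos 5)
Your proof is correct and follows essentially the same route as the paper: reflexivity of $\sqsubseteq$ puts $\on$ into $\scenarios(\on)$, and maximality of any $\on'\in\maxscenarios(\on)$ combined with the chain $\on'\sqsubseteq\on\sqsubseteq\on$ forces $\on'=\on$. You are in fact slightly more careful than the paper, which asserts $\on\in\maxscenarios(\on)$ directly ``by Definition~\ref{def:11}'' whereas you explicitly supply the antisymmetry step via Proposition~\ref{pr-uusu}(2).
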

\begin{proof}
    By Definition~\ref{def:11},
    $\on\in\maxscenarios(\on)$.

    Suppose that $\on'\in\maxscenarios(\on)\setminus\{\on\}$. Then,
    $\on'\sqsubseteq\on\sqsubseteq\on$ and $\on'\neq\on$.
    This, however, yields a contradiction with Definition~\ref{def:11}.
\qed
\end{proof}

\section{Communication structured a\-cyc\-lic nets }

Communication Structured Acyclic Nets (\textsc{csa}-nets) are 
a generalisation of Communication Structured Occurrence Nets (\textsc{cso}-nets) 
introduced and discussed in~\cite{c51,c20,c1}. 
In~\cite{c4,c2}, \textsc{cso}-nets were used to to deal with cybercrime and major accidents, and~\cite{c11} 
demonstrated \textsc{cso}-nets can be used as a framework for visualising and analysing behaviour of complex evolving systems. 
Other works on \textsc{cso}-nets were related to provenance~\cite{c23} and timed behaviours~\cite{c24}.
This section introduces nets which
generalise \textsc{cso}-nets
by being based on a\-cyc\-lic nets rather than occurrence nets.

The following are intuitive explanations of the main
concepts:
\begin{description}

\item[communication structured a\-cyc\-lic net (\textsc{csa}-net)]{\ }
\\
A `database' consisting of a number of disjoint
a\-cyc\-lic nets which communicate through
special buffer/channel places.
\textsc{csa}-nets can exhibit backward and forward non-de\-ter\-mi\-nism.
They can contain cycles restricted to the buffer places.
 
\item[buffer place]{\ }
\\
A place allowing synchronous or asynchronous transfer of tokens.
A cycle involving only buffer places
`implements' synchronous communication.
 
\item[communication structured occurrence net (\textsc{cso}-net)]{\ }
\\
A \textsc{csa}-net providing full and unambiguous  record of all causal
dependencies between the events involved in  a single `causal history'.

\item[backward de\-ter\-mi\-nistic \textsc{csa}-net (\textsc{bdcsa}-net)]{\ }
\\
A \textsc{csa}-net net providing unambiguous record of causal
dependencies.

\item[scenario]{\ }
\\
A \textsc{cso}-net  providing a representation of
a  system history.

\item[well-formed \textsc{csa}-net]{\ }
\\ 
A \textsc{csa}-net where each execution history 
from the initial state has an unambiguous interpretation in terms of 
causality and concurrency.

\item[projected mixed step sequence]{\ }
\\
A mixed step sequence restricted to the nodes of a component acyclic net.

\item[projected step sequence]{\ }
\\
A   step sequence restricted to the  
transitions of a component acyclic.
\end{description}

\begin{figure}[ht!]
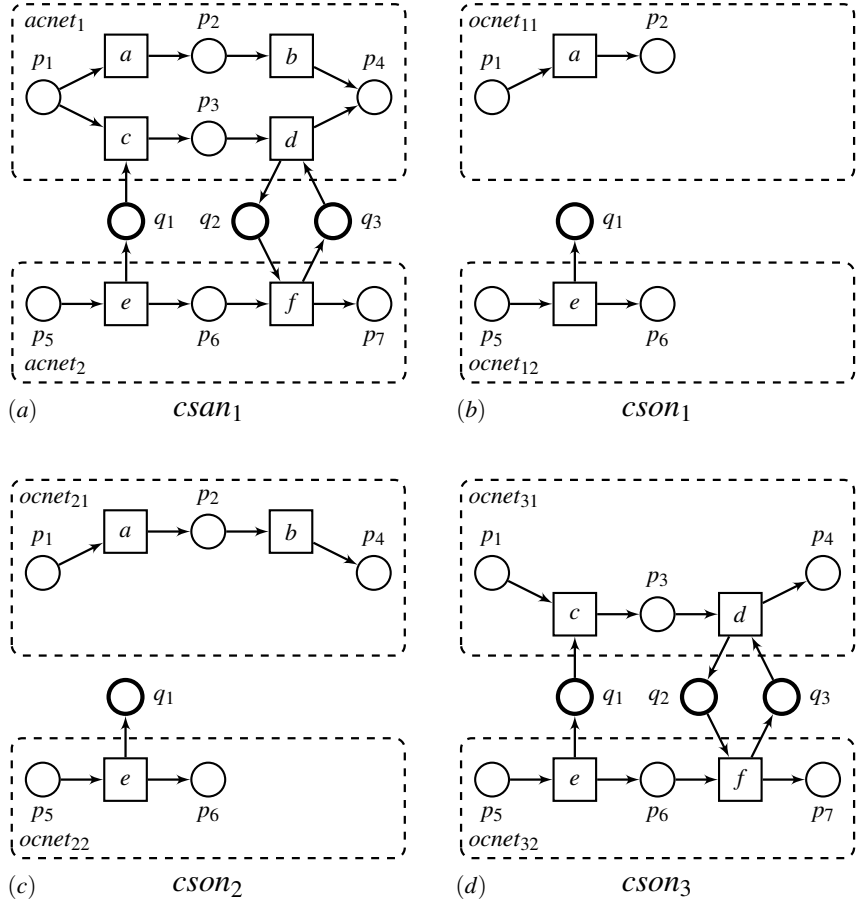

\begin{center}
\StandardNet[0.55]
\Put{-.5}{-2.6}{$(a)$}
\Put{4}{-2.6}{{\large $\csan_1$}}
\Put{0.3}{6.8}{$\an_1$}
\Put{0.3}{-1.5}{$\an_2$}

\draw [dashed,rounded corners]
   (-.75, 3) --
   (-.75, 7.25) --
   (8.75, 7.25) --
   (8.75, 3) -- cycle;
\draw [dashed,rounded corners]
   (-.75,-1.9) --
   (-.75, 1) --
   (8.75, 1) --
   (8.75,-1.9) -- cycle;

    \PlacN{P1}{ 0}{5}{}{p_1}
    \PlacN{P2}{ 4}{6}{}{p_2}
    \PlacN{P3}{ 4}{4}{}{p_3}
    \PlacN{P4}{ 8}{5}{}{p_4}

    \Whitetran{A}{ 2}{6}{a}
    \Whitetran{B}{ 6}{6}{b}
    \Whitetran{C}{ 2}{4}{c}
    \Whitetran{D}{ 6}{4}{d}

    \diredge{P1}{A}
    \diredge{P1}{C}
    \diredge{P2}{B}
    \diredge{P3}{D}

    \diredge{A}{P2}
    \diredge{B}{P4}
    \diredge{D}{P4}
    \diredge{C}{P3}

    \PlacS{R1}{ 0}{0}{}{p_5}
    \PlacS{R2}{ 4}{0}{}{p_6}
    \PlacS{R3}{ 8}{0}{}{p_7}

    \Whitetran{E}{ 2}{0}{e}
    \Whitetran{F}{ 6}{0}{f}

    \diredge{R1}{E}
    \diredge{R2}{F}
    \diredge{E}{R2}
    \diredge{F}{R3}

    \bPlacE{Q1}{ 2}{2}{}{q_1}
    \bPlacW{Q2}{ 5}{2}{}{q_2}
    \bPlacE{Q3}{ 7}{2}{}{q_3}

    \diredge{Q1}{C}
    \diredge{E}{Q1}
    \diredge{D}{Q2}
    \diredge{Q2}{F}
    \diredge{Q3}{D}
    \diredge{F}{Q3}
\end{tikzpicture}
~~~~
\StandardNet[0.55]
\Put{-.5}{-2.6}{$(b)$}
\Put{4}{-2.6}{{\large $\cson_1$}}
\Put{0.3}{6.8}{$\on_{11}$}
\Put{0.3}{-1.5}{$\on_{12}$}

\draw [dashed,rounded corners]
   (-.75, 3) --
   (-.75, 7.25) --
   (8.75, 7.25) --
   (8.75, 3) -- cycle;
\draw [dashed,rounded corners]
   (-.75,-1.9) --
   (-.75, 1) --
   (8.75, 1) --
   (8.75,-1.9) -- cycle;

    \PlacN{P1}{ 0}{5}{}{p_1}
    \PlacN{P2}{ 4}{6}{}{p_2}

    \Whitetran{A}{ 2}{6}{a}

    \diredge{P1}{A}

    \diredge{A}{P2}

    \PlacS{R1}{ 0}{0}{}{p_5}
    \PlacS{R2}{ 4}{0}{}{p_6}

    \Whitetran{E}{ 2}{0}{e}

    \diredge{R1}{E}
    \diredge{E}{R2}

    \bPlacE{Q1}{ 2}{2}{}{q_1}
    \diredge{E}{Q1}
\end{tikzpicture}
\\[6mm]
\StandardNet[0.55]
\Put{-.5}{-2.6}{$(c)$}
\Put{4}{-2.6}{{\large $\cson_2$}}
\Put{0.3}{6.8}{$\on_{21}$}
\Put{0.3}{-1.5}{$\on_{22}$}

\draw [dashed,rounded corners]
   (-.75, 3) --
   (-.75, 7.25) --
   (8.75, 7.25) --
   (8.75, 3) -- cycle;
\draw [dashed,rounded corners]
   (-.75,-1.9) --
   (-.75, 1) --
   (8.75, 1) --
   (8.75,-1.9) -- cycle;

    \PlacN{P1}{ 0}{5}{}{p_1}
    \PlacN{P2}{ 4}{6}{}{p_2}
    \PlacN{P4}{ 8}{5}{}{p_4}

    \Whitetran{A}{ 2}{6}{a}
    \Whitetran{B}{ 6}{6}{b}

    \diredge{P1}{A}
    \diredge{P2}{B}

    \diredge{A}{P2}
    \diredge{B}{P4}

    \PlacS{R1}{ 0}{0}{}{p_5}
    \PlacS{R2}{ 4}{0}{}{p_6}

    \Whitetran{E}{ 2}{0}{e}

    \diredge{R1}{E}
    \diredge{E}{R2}

    \bPlacE{Q1}{ 2}{2}{}{q_1}
    \diredge{E}{Q1}
\end{tikzpicture}
~~~~
\StandardNet[0.55]
\Put{-.5}{-2.6}{$(d)$}
\Put{4}{-2.6}{{\large $\cson_3$}}
\Put{0.3}{6.8}{$\on_{31}$}
\Put{0.3}{-1.5}{$\on_{32}$}

\draw [dashed,rounded corners]
   (-.75, 3) --
   (-.75, 7.25) --
   (8.75, 7.25) --
   (8.75, 3) -- cycle;
\draw [dashed,rounded corners]
   (-.75,-1.9) --
   (-.75, 1) --
   (8.75, 1) --
   (8.75,-1.9) -- cycle;

    \PlacN{P1}{ 0}{5}{}{p_1}
    \PlacN{P3}{ 4}{4}{}{p_3}
    \PlacN{P4}{ 8}{5}{}{p_4}

    \Whitetran{C}{ 2}{4}{c}
    \Whitetran{D}{ 6}{4}{d}

    \diredge{P1}{C}
    \diredge{P3}{D}
    \diredge{D}{P4}
    \diredge{C}{P3}

    \PlacS{R1}{ 0}{0}{}{p_5}
    \PlacS{R2}{ 4}{0}{}{p_6}
    \PlacS{R3}{ 8}{0}{}{p_7}

    \Whitetran{E}{ 2}{0}{e}
    \Whitetran{F}{ 6}{0}{f}

    \diredge{R1}{E}
    \diredge{R2}{F}
    \diredge{E}{R2}
    \diredge{F}{R3}

    \bPlacE{Q1}{ 2}{2}{}{q_1}
    \bPlacW{Q2}{ 5}{2}{}{q_2}
    \bPlacE{Q3}{ 7}{2}{}{q_3}

    \diredge{Q1}{C}
    \diredge{E}{Q1}
    \diredge{D}{Q2}
    \diredge{Q2}{F}
    \diredge{Q3}{D}
    \diredge{F}{Q3}
\end{tikzpicture}

\end{center}
\caption{Communication structured a\-cyc\-lic nets.}
\label{fig-2x}
\end{figure}

\begin{definition}[\textsc{csa}-net]
\label{def:20}
    A \emph{communication structured a\-cyc\-lic net (or \textsc{csa}-net)}
    is a tuple
    $\csan= (\an_1,\dots,\an_n,Q,W)$ ($n\geq 1$)
    such that:
\begin{enumerate}
\item
    $\an_1,\dots,\an_n$ are well-formed a\-cyc\-lic nets
    with disjoint sets of nodes (i.e., places and transitions).
    We also denote:
\[
\begin{array}{lcl@{~~~~~~~~~}lcl}
    P_\csan
    & \triangleq
    & P_{\an_1}\cup\dots\cup P_{\an_n}
    & Q_\csan
    & \triangleq
    & Q
\\
    T_\csan
    & \triangleq
    & T_{\an_1}\cup\dots\cup T_{\an_n}
    & W_\csan
    & \triangleq
    & W
\\
    F_\csan
    & \triangleq
    & F_{\an_1}\cup\dots\cup F_{\an_n}
    & \net_{\csan,i}
    & \triangleq
    & \an_i ~~~(\textit{for}~~1\leq i\leq n)
\;.
\end{array}
\]
\item
    $Q$ is a finite set of \emph{buffer places} disjoint from
    $P_\csan\cup T_\csan$.
    
\item
    $W\subseteq (Q\times T_\csan)\cup (T_\csan\times Q)$
    is a set of arcs.
\item
    For every buffer place $q$:
\begin{enumerate}
\item
    There is at least one transition $t$ such that $t W q$.
\item
    If $t W q$ and $q W u$
    then transitions $t$ and $u$
    belong  to different component a\-cyc\-lic nets.
\end{enumerate}
\end{enumerate}
    Notation: $\CSAN$ is the set of all \textsc{csa}-nets.
\EOD
\end{definition}
That is, in addition to requiring the disjointness of the component
acyclic nets and the buffer places, it is required that
buffer places pass tokens between different a\-cyc\-lic nets.
Hence, if $n=1$, then $\csan=(\an_1,\es,\es)$ and so the 
\textsc{csa}-net $\csan$  can be
identified with its only component acyclic net $\an_1$
(not only structurally, but also syntactically, as will be 
shown later). 

\begin{notation}[direct precedence in \textsc{csa}-net]
\label{not:1a}
    Let $\csan= (\an_1,\dots,\an_n,Q,W)$
    be a \textsc{csa}-net,
    $x\in P_\csan\cup T_\csan\cup Q_\csan$, and
    $X\subseteq P_\csan\cup T_\csan\cup Q_\csan$. Then
\[
\begin{array}{lll@{~~~~~~~~}lll}
    \PRE{\csan}{x}
    & \triangleq
    & \{y
      \mid y\,F_\csan\,x\;\vee\; y\,W_\csan\,x\}
    & \PRE{\csan}{X}
    & \triangleq
    & \bigcup\{\PRE{\csan}{z}\mid z\in X\} 
\\
    \POST{\csan}{x}
    & \triangleq
    & \{y
      \mid x\,F_\csan\,y\;\vee\; x\,W_\csan\,y\}
    & \POST{\csan}{X}
    & \triangleq
    & \bigcup\{\POST{\csan}{z}\mid z\in X\} 
\end{array}
\]
    denote direct predecessors and successors of $x$ and $X$, respectively.
    Moreover, 
\[
    P_\csan^\init\triangleq P_{\an_1}^\init\cup\dots\cup P_{\an_n}^\init
    ~~\mbox{and}~~
    P_\csan^\fin\triangleq P_{\an_1}^\fin\cup\dots\cup P_{\an_n}^\fin 
    \cup (Q_\csan\setminus\PRE{\csan}{T_\csan})
\]    
    are the \emph{initial} and \emph{final} places of $\csan$, respectively.
\EOD
\end{notation}
Note that no buffer place is an initial place. 

\begin{proposition}
\label{prop-vxxx}
    $P_\csan\uplus Q_\csan 
    =P_\csan^\init\uplus\POST{\csan}{T_\csan}
    =P_\csan^\fin\uplus\PRE{\csan}{T_\csan}$,
   for every \textsc{csa}-net $\csan$.
\end{proposition}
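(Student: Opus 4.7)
The plan is to lift Proposition~\ref{prop-vvcc} from each component acyclic net to the whole \textsc{csa}-net, then treat the buffer places separately using clause~4 of Definition~\ref{def:20} and the definitions of $P_\csan^\init$ and $P_\csan^\fin$ given in Notation~\ref{not:1a}.

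First I would handle the ordinary (non-buffer) places. Since $\an_1,\dots,\an_n$ are pairwise node-disjoint, Proposition~\ref{prop-vvcc} applied component-wise gives $P_{\an_i}=P_{\an_i}^\init\uplus\POST{\an_i}{T_{\an_i}}=P_{\an_i}^\fin\uplus\PRE{\an_i}{T_{\an_i}}$ for each $i$, and taking the disjoint union yields
\[
P_\csan = P_\csan^\init \uplus \bigl(\bigcup_i\POST{\an_i}{T_{\an_i}}\bigr)
        = \bigl(\bigcup_i P_{\an_i}^\fin\bigr) \uplus \bigl(\bigcup_i\PRE{\an_i}{T_{\an_i}}\bigr).
\]
Because $W\subseteq(Q\times T_\csan)\cup(T_\csan\times Q)$, the arcs in $W$ never touch ordinary places, so for an ordinary place $p$ one has $\PRE{\csan}{p}=\PRE{\an_i}{p}$ and $\POST{\csan}{p}=\POST{\an_i}{p}$ (where $p\in P_{\an_i}$). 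Consequently the contributions of $F_\csan$-arcs and $W$-arcs to $\POST{\csan}{T_\csan}$ and $\PRE{\csan}{T_\csan}$ split cleanly: ordinary places contribute via the $F_\csan$-part, buffer places only via $W$.

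Next I would account for the buffer places. By clause~4(a) every $q\in Q_\csan$ has some transition $t$ with $tWq$, hence $Q_\csan\subseteq\POST{\csan}{T_\csan}$; combined with $Q_\csan\cap P_\csan^\init=\varnothing$ (buffer places are disjoint from $P_\csan$ and $P_\csan^\init\subseteq P_\csan$ by construction), this yields the first equality $P_\csan\uplus Q_\csan=P_\csan^\init\uplus\POST{\csan}{T_\csan}$. For the second equality, partition $Q_\csan$ as $(Q_\csan\cap\PRE{\csan}{T_\csan})\uplus(Q_\csan\setminus\PRE{\csan}{T_\csan})$; the first part sits in $\PRE{\csan}{T_\csan}$, the second is exactly the buffer contribution to $P_\csan^\fin$, so $Q_\csan$ splits as required and, combined with the component-wise equalities above, gives $P_\csan\uplus Q_\csan=P_\csan^\fin\uplus\PRE{\csan}{T_\csan}$.

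The only real obstacle is keeping the disjointness annotations honest. I would explicitly check three things: (i) $P_\csan$ and $Q_\csan$ are disjoint by Definition~\ref{def:20}(2); (ii) no ordinary place lies in the $W$-image or $W$-preimage of any transition, so the ordinary/buffer split of $\POST{\csan}{T_\csan}$ and $\PRE{\csan}{T_\csan}$ is really a disjoint union; (iii) an ordinary place that is initial (resp.\ final) in its host $\an_i$ remains initial (resp.\ final) in $\csan$ because $W$ adds no arcs at ordinary places. Once these three points are recorded, both equalities follow by rearranging the disjoint unions.
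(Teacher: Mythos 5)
Your proof is correct and follows the only natural route: the paper itself just says the result ``follows directly from the definitions,'' and your argument is precisely the careful component-wise unpacking (via Proposition~\ref{prop-vvcc}) plus the buffer-place bookkeeping that this claim implicitly relies on. The three disjointness checks you flag, in particular that $W$ touches no ordinary place so initial/final status in $\an_i$ is preserved in $\csan$, are exactly the points that make the disjoint unions legitimate.
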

\begin{proof}
    It follows directly from the definitions.
    \qed
\end{proof}

\begin{example}
\label{ex-1ede}
Figure~\ref{fig-2x}($a$) depicts a \textsc{csa}-net $\csan_1$ such that
$\PRE{\csan_1}{p_4}=\{b,d\}$ and
$\POST{\csan_1}{e}=\{q_1,p_6\}$.
Moreover, $P^\init_{\csan_1}=\{p_1,p_5\}$.
\EOD
\end{example}

\subsection{Subnets in \textsc{csa}-nets}

The next notion captures a structural inclusion
between \textsc{csa}-nets. 

\begin{definition}[subnet of \textsc{csa}-net]
\label{def:4axxyd}
    Let $\csan= (\an_1,\dots,\an_n,Q,W)$ and 
    $\csan'= (\an'_1,\dots,\an'_n,Q',W')$ be \textsc{csa}-nets.
    Then $\csan$ is \emph{included} in $\csan'$
    if the following hold:
\begin{enumerate}
\item
    $\an_i\subseteq\an'_i$, for every $1\leq i\leq n$. 
\item
    $Q\subseteq Q'$.
\item
    $\PRE{\csan}{t}=\PRE{\csan'}{t}$ and
    $\POST{\csan}{t}=\POST{\csan'}{t}$,
    for every $t\in T_\csan$.
\end{enumerate}
    Notation: $\csan\subseteq\csan'$.
\EOD
\end{definition}
Note that a transition included in a subnet 
retains its pre-places and post-places, including the buffer places; in other words, it retains 
its local environment. 
 
\begin{proposition}
\label{pr-uusugfdddee}
    Let $\csan\subseteq\csan'$ be \textsc{csa}-nets.
    Then
    $Q_\csan\subseteq Q_{\csan'}$  and $W_\csan\subseteq W_{\csan'}$.
\end{proposition}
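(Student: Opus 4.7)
The first inclusion is essentially an immediate consequence of clause (2) of Definition~\ref{def:4axxyd}. By the notational conventions following Definition~\ref{def:20}, $Q_\csan=Q$ and $Q_{\csan'}=Q'$, and the hypothesis $\csan\subseteq\csan'$ gives $Q\subseteq Q'$. Hence $Q_\csan\subseteq Q_{\csan'}$.

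For the second inclusion, I would argue by cases on the two possible forms of an arc in $W_\csan$. By Definition~\ref{def:20}(3), every element of $W_\csan$ lies in either $Q_\csan\times T_\csan$ or $T_\csan\times Q_\csan$. So fix an arbitrary $(q,t)\in W_\csan$ of the first form. Then $q\in\PRE{\csan}{t}$ by Notation~\ref{not:1a}, so clause (3) of Definition~\ref{def:4axxyd} yields $q\in\PRE{\csan'}{t}$, meaning $q\,F_{\csan'}\,t$ or $q\,W_{\csan'}\,t$. By the first part of the proposition $q\in Q_{\csan'}$, and Definition~\ref{def:20}(2) requires $Q_{\csan'}$ to be disjoint from $P_{\csan'}\cup T_{\csan'}$; since $F_{\csan'}\subseteq(P_{\csan'}\times T_{\csan'})\cup(T_{\csan'}\times P_{\csan'})$ by the definition of an acyclic net (Definition~\ref{def:2}) combined with the notational conventions of Definition~\ref{def:20}, the option $q\,F_{\csan'}\,t$ is ruled out. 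Thus $q\,W_{\csan'}\,t$, i.e., $(q,t)\in W_{\csan'}$.

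The case of an arc of the form $(t,q)\in W_\csan$ is entirely symmetric, using $\POST{\csan}{t}=\POST{\csan'}{t}$ from clause (3) of Definition~\ref{def:4axxyd} instead. Together, the two cases yield $W_\csan\subseteq W_{\csan'}$.

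There is no real obstacle here; the only delicate point is recognising that clause (3) of the subnet definition does not by itself distinguish $F$-arcs from $W$-arcs, so one must invoke the disjointness of $P_{\csan'}$, $T_{\csan'}$, and $Q_{\csan'}$ (built into the definition of a \textsc{csa}-net) to force the arc produced by $\PRE{\csan'}{t}$ or $\POST{\csan'}{t}$ to be a $W_{\csan'}$-arc rather than an $F_{\csan'}$-arc.
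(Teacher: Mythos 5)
Your proof is correct, and it rests on the same ingredients as the paper's, but the division of labour is inverted. The paper's own proof declares that ``it suffices to show $Q_\csan\subseteq Q_{\csan'}$'' and then derives that inclusion non-trivially: given $q\in Q_\csan$, Definition~\ref{def:20}(4a) supplies a transition $t$ with $t\,W_\csan\,q$, and preservation of the environment of $t$ forces $t\,W_{\csan'}\,q$, whence $q\in Q_{\csan'}$. You instead observe --- correctly, given the definition as stated --- that $Q_\csan\subseteq Q_{\csan'}$ is literally clause~(2) of Definition~\ref{def:4axxyd}, and you spend your effort on the $W$-inclusion, which the paper leaves entirely implicit. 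Your case analysis there (using clause~(3) to place $q$ in $\PRE{\csan'}{t}$ or $\POST{\csan'}{t}$, then the disjointness of $Q_{\csan'}$ from $P_{\csan'}\cup T_{\csan'}$ together with $F_{\csan'}\subseteq(P_{\csan'}\times T_{\csan'})\cup(T_{\csan'}\times P_{\csan'})$ to exclude the $F_{\csan'}$ alternative) is precisely the step the paper's terse proof needs anyway, both to justify its own inference of $t\,W_{\csan'}\,q$ (which it attributes to clause~(2) but really requires clause~(3) plus this disjointness) and to discharge the suppressed ``it suffices'' claim. So your version is the more complete of the two; the only content of the paper's proof absent from yours is the observation that the $Q$-inclusion could also be recovered from Definition~\ref{def:20}(4a) even if clause~(2) were not assumed.
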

\begin{proof}
    It suffices to show 
    $Q_\csan\subseteq Q_{\csan'}$. Let $q\in Q_\csan$. By
    Definition~\ref{def:20}(4a), there is $t\in T_\csan$ such that 
    $t W_\csan q$. Hence, by Definition~\ref{def:4axxyd}(2), $t W_{\csan'} q$.
    Thus $q\in Q_{\csan'}$.
\qed
\end{proof}

The next notion captures not only a structural inclusion
between \textsc{csa}-nets, but it is
also intended to correspond to inclusion
of the behaviours they capture.

\begin{definition}[co-initial subnet of \textsc{csa}-net]
\label{def:4aafffa} 
    A \textsc{csa}-net 
    $\csan$ is a \emph{co-initial subnet} of a \textsc{csa}-net $\csan'$
    if $\csan\subseteq\csan'$ and 
    $P_\csan^\init=P_{\csan'}^\init$.
    \\
    Notation: $\csan\sqsubseteq\csan'$.
\EOD
\end{definition}

Note that if $\csan$ and $\csan'$ are as in Definition~\ref{def:4axxyd}, then 
$\csan\sqsubseteq\csan'$ iff $\an_i\sqsubseteq\an'_i$, for every $1\leq i\leq n$.
 
\begin{proposition}
\label{pr-uusugfddd}
    Let $\csan\sqsubseteq\csan'$ be \textsc{csa}-nets.
\begin{enumerate}  
\item
    $P_\csan=P_{\csan'}^\init\uplus \POST{\csan'}{T_\csan}$. 
\item
    $\csan=\csan'$ iff $T_\csan=T_{\csan'}$.
\end{enumerate}
\end{proposition}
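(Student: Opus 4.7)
The plan is to reduce both parts of the proposition to the corresponding acyclic-net result, Proposition~\ref{pr-uusu}, applied componentwise, and then reconcile the buffer places and their arcs separately using Definition~\ref{def:20}(4a) and the subnet condition in Definition~\ref{def:4axxyd}(3).

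For part (1), I first invoke the observation immediately after Definition~\ref{def:4aafffa}: $\csan\sqsubseteq\csan'$ entails $\an_i\sqsubseteq\an'_i$ for each $1\leq i\leq n$. Applying Proposition~\ref{pr-uusu}(1) to every such pair yields
\[
P_{\an_i}=P_{\an'_i}^\init\uplus\POST{\an'_i}{T_{\an_i}}.
\]
Since the component acyclic nets have pairwise disjoint node sets, I take the disjoint union of these $n$ equalities, read off $P_\csan=\bigcup_i P_{\an_i}$ and $P_{\csan'}^\init=\bigcup_i P_{\an'_i}^\init$ from Definition~\ref{def:20} and Notation~\ref{not:1a}, and obtain the claimed identity.

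For part (2), the forward implication is immediate. For the converse, assume $T_\csan=T_{\csan'}$. By disjointness of the component transition sets, this forces $T_{\an_i}=T_{\an'_i}$ for each $i$, so Proposition~\ref{pr-uusu}(2) gives $\an_i=\an'_i$. It remains to establish $Q_\csan=Q_{\csan'}$ and $W_\csan=W_{\csan'}$. The inclusions $Q_\csan\subseteq Q_{\csan'}$ and $W_\csan\subseteq W_{\csan'}$ are Proposition~\ref{pr-uusugfdddee}. For the reverse inclusion on $Q$, given $q\in Q_{\csan'}$, Definition~\ref{def:20}(4a) supplies some transition $t\in T_{\csan'}=T_\csan$ with $tW_{\csan'}q$; then $q\in\POST{\csan'}{t}=\POST{\csan}{t}$ by Definition~\ref{def:4axxyd}(3), and since $q\notin P_\csan$ we must have $tW_\csan q$, so $q\in Q_\csan$. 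For arcs in $W_{\csan'}$, a similar case analysis using Definition~\ref{def:4axxyd}(3) on both $\PRE{\csan'}{t}$ and $\POST{\csan'}{t}$ yields $W_{\csan'}\subseteq W_\csan$.

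The routine part is part (1), which is essentially bookkeeping on top of Proposition~\ref{pr-uusu}(1). The main obstacle is the converse direction of part (2): extracting equality of the buffer places and the buffer arcs from equality of the transition sets. Here one cannot argue purely componentwise, since buffer places and arcs do not belong to any $\an_i$. The key device is Definition~\ref{def:20}(4a), which forces every buffer place to be anchored to at least one transition; combined with the preservation of a transition's local environment by $\subseteq$ (Definition~\ref{def:4axxyd}(3)), this anchoring lets one transfer membership in $Q_{\csan'}$ and $W_{\csan'}$ back down to $Q_\csan$ and $W_\csan$.
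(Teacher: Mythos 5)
Your part (2) is correct and is essentially the paper's (very terse) argument: reduce to Proposition~\ref{pr-uusu}(2) componentwise, and then recover $Q$ and $W$ from the equality of the transition sets via the anchoring condition of Definition~\ref{def:20}(4a) together with the preservation of local environments in Definition~\ref{def:4axxyd}(3). Spelling out the recovery of $Q_{\csan'}\subseteq Q_\csan$ and $W_{\csan'}\subseteq W_\csan$ is a genuine improvement on the paper's one-line citation.

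Part (1), however, has a gap, and it concerns exactly the buffer places you postpone to part (2). By Notation~\ref{not:1a}, $\POST{\csan'}{T_\csan}$ is computed with respect to $F_{\csan'}\cup W_{\csan'}$, so it contains every buffer place that is a $W$-successor of a transition in $T_\csan$. Your componentwise application of Proposition~\ref{pr-uusu}(1) accounts only for the $F$-successors: the disjoint union of the equalities $P_{\an_i}=P_{\an'_i}^\init\uplus\POST{\an'_i}{T_{\an_i}}$ yields $P_\csan=P_{\csan'}^\init\uplus\bigl(\POST{\csan'}{T_\csan}\cap P_{\csan'}\bigr)$, not the stated identity; the buffer-place part of the right-hand side is silently dropped. (Read literally, the stated equality cannot hold whenever some transition of $\csan$ outputs to a buffer place, e.g.\ transition $e$ and place $q_1$ in Figure~\ref{fig-2x}($a$), since such a place lies in $Q$ and not in $P_\csan$; compare Proposition~\ref{prop-vxxx}, where the left-hand side is $P_\csan\uplus Q_\csan$. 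The paper's own proof of part (1) cites Definition~\ref{def:20}(4a) precisely because the buffer places must be dealt with.) What is missing is the observation that $\POST{\csan'}{T_\csan}\cap Q_{\csan'}=Q_\csan$: for every $q\in Q_\csan$ there is $t\in T_\csan$ with $t\,W_\csan\,q$ by Definition~\ref{def:20}(4a), hence $q\in\POST{\csan}{t}=\POST{\csan'}{t}$ by Definition~\ref{def:4axxyd}(3); conversely, any buffer place in $\POST{\csan'}{T_\csan}=\POST{\csan}{T_\csan}$ must be a $W_\csan$-successor of some $t\in T_\csan$ and so lies in $Q_\csan$. This is precisely the anchoring argument you already use in part (2); it is needed in part (1) as well, together with reading the left-hand side as $P_\csan\uplus Q_\csan$.
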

\begin{proof}
(1) It follows from Proposition~\ref{pr-uusu}(1) as well as
    Definitions~\ref{def:20}(4a) and~\ref{def:4axxyd}(2).
    
(2) 
    The ($\Longrightarrow$) implication is obvious,
    and the ($\Longleftarrow$) implication follows from  Proposition~\ref{pr-uusu}(2)
    as well as
    Definitions~\ref{def:20}(4a), \ref{def:4axxyd}(2), and~\ref{def:4aafffa}.
\qed
\end{proof}

\subsection{Subclasses of \textsc{csa}-nets}

The next definition introduces \textsc{csa}-nets 
representing individual causal histories.

\begin{definition}[\textsc{cso}-net]
\label{def:22}
    A \emph{communication structured occurrence net (or \textsc{cso}-net)}
    is $\cson\in\CSAN$
    such that:
\begin{enumerate}
\item
    The component a\-cyc\-lic nets belong to $\ON$.  
\item
    $|\PRE{\cson}{q}|=1$ and $|\POST{\cson}{q}|\leq 1$,
    for every $q\in Q_\cson$.
\item
    No place in $P_\cson$ belongs to a cycle in the graph of $F_\cson\cup W_\cson$. 
\end{enumerate}
    Notation: $\CSON$ is the set of all \textsc{cso}-nets.
\EOD
\end{definition}
That is, only cycles involving buffer places are allowed.
\textsc{cso}-nets  exhibit backward de\-ter\-mi\-nism and forward
de\-ter\-mi\-nism.

\begin{example}
\label{ex-1ee}
Figure~\ref{fig-2x}($b,c,d$) depict  \textsc{cso}-nets.
\EOD
\end{example}

We also consider \textsc{csa}-nets with only forward nondeterminism.

\begin{definition}[\textsc{bdcsa}-net]
\label{def:21}
    A \emph{backward de\-ter\-mi\-nistic communication structured a\-cyc\-lic net
    (or \textsc{bdcsa}-net)}
    is
    $\bdcsan\in\CSAN$
    such that:
\begin{enumerate}
\item
    The component a\-cyc\-lic nets belong to $\BDAN$.  
\item
    $|\PRE{\bdcsan}{q}|=1$,
    for every $q\in Q_\bdcsan$. 
\end{enumerate}
    Notation: $\BDCSAN$ is the set of all \textsc{bdcsa}-nets.
\EOD
\end{definition}
\textsc{bdcsa}-nets  exhibit backward de\-ter\-mi\-nism, but forward
de\-ter\-mi\-nism is not required.

\begin{proposition}
\label{prop:2x}
    $\CSON\subset\BDCSAN\subset\CSAN$.
\end{proposition}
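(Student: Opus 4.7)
The plan is to mirror the proof of Proposition~\ref{prop:2}: first establish the two inclusions by unfolding the definitions, and then exhibit explicit witnesses from the figures of the paper to conclude strictness.

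For $\CSON\subseteq\BDCSAN$, let $\cson\in\CSON$. By Definition~\ref{def:22}(1) each component acyclic net lies in $\ON$, so by Proposition~\ref{prop:2} it also lies in $\BDAN$, giving condition (1) of Definition~\ref{def:21}. Condition (2) of Definition~\ref{def:21} is a verbatim weakening of Definition~\ref{def:22}(2), which additionally demands $|\POST{\cson}{q}|\leq 1$; discarding that extra clause yields the \textsc{bdcsa}-net requirement directly. The inclusion $\BDCSAN\subseteq\CSAN$ is immediate because Definition~\ref{def:21} is stated as a restriction of Definition~\ref{def:20}.

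For strictness, I would use the same style of counterexample as in Proposition~\ref{prop:2}. To separate $\BDCSAN$ from $\CSAN$, observe that in $\csan_1$ of Figure~\ref{fig-2x}($a$) the component acyclic net $\an_1$ contains the place $p_4$ with $\pre{p_4}=\{b,d\}$, so $\an_1\notin\BDAN$ by Definition~\ref{def:4}, whence $\csan_1\in\CSAN\setminus\BDCSAN$. To separate $\CSON$ from $\BDCSAN$, the quickest witness is to take any \textsc{csa}-net whose sole component is the backward deterministic acyclic net $\bdan_1$ of Figure~\ref{fig-1}($b$) with $Q=W=\es$; this trivially satisfies Definition~\ref{def:21}, but since $\bdan_1\notin\ON$ by Proposition~\ref{prop:2} it fails Definition~\ref{def:22}(1).

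No step is genuinely difficult: the argument is a direct unwrapping of the inclusion of classes on components and of the buffer-place conditions, and the counterexamples reuse nets that have already been analysed in earlier propositions and examples. The only mild care needed is to justify the reduction from a \textsc{csa}-net with a single component to the underlying acyclic net when quoting the witness $\bdan_1$, which follows from the remark immediately after Definition~\ref{def:20} that, for $n=1$, a \textsc{csa}-net $\csan=(\an_1,\es,\es)$ can be identified with $\an_1$.
\qed
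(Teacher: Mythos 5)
Your proposal is correct and follows essentially the same route as the paper, whose entire proof is the one-liner ``it follows directly from Proposition~\ref{prop:2} and the definitions''; you have simply unpacked that remark, checking the component-wise and buffer-place conditions and supplying explicit witnesses ($\csan_1$ from Figure~\ref{fig-2x}($a$) and a single-component net built from $\bdan_1$) for strictness. The details all check out, including the implicit use of Proposition~\ref{prop:3xx} to ensure $\bdan_1$ is well-formed and hence admissible as a component of a \textsc{csa}-net.
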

\begin{proof}
    It follows directly from Proposition~\ref{prop:2} and the definitions.
\qed
\end{proof}

\section{Step sequence semantics of \textsc{csa}-nets}

The intuition behind the step sequence semantics of \textsc{csa}-net
is similar as in the case of acyclic nets, and so the terminology used
will also be similar.
 
\begin{definition}[step and marking of \textsc{csa}-net]
\label{def:5x}
    Let $\csan$ be a \textsc{csa}-net.
\begin{enumerate}
\item
    $\STEPS(\csan)\triangleq \{U\in \pset{T_\csan}\setminus\{\es\}
      \mid
      \forall t\neq u\in U: \PRE{\csan}{t}\cap\PRE{\csan}{u}=\es\}$
    are the \emph{steps}.
\item
    $\MARK(\csan)\triangleq \pset{P_\csan\cup Q_\csan}$ are the \emph{markings}.
\item
    $M_\csan^\init\triangleq P_\csan^\init$
    is the \emph{initial} marking. 
\EOD
\end{enumerate}
\end{definition}

\begin{example}
\label{ex-2sx}
$\STEPS(\csan_1)=\{U\in \pset{\{a,b,c,d,e,f\}}
\setminus\{\es\}\mid a\notin U\vee c\notin U\}$, for
the  \textsc{csa}-net in Figure~\ref{fig-2x}($a$).
\EOD
\end{example}

\begin{definition}[enabled and executed step of \textsc{csa}-net]
\label{def:6x}
    Let $M$ be a marking of a \textsc{csa}-net $\csan$.
\begin{enumerate}
\item
    $\enabled_\csan(M)\triangleq
    \{U\in\STEPS(\csan)\mid \PRE{\csan}{U}\subseteq M\cup(\POST{\csan}{U}\cap Q)\}$
    are the steps \emph{enabled} at $M$.
\item
    A step $U\in\enabled_\csan(M)$
    can be \emph{executed} yielding a new
    marking
\[
    M'\triangleq (M\cup\POST{\csan}{U})\setminus\PRE{\csan}{U}\;.
\]
    Notation:  $M\STEP{U}{\csan}M'$.
\EOD
\end{enumerate}
\end{definition}
Enabling a step in a \textsc{csa}-net amounts to having all input places
belonging to the component a\-cyc\-lic nets present/marked in a
global state.
Moreover, if an input buffer place is not present,
then it must be an output place of a transition belonging to the step.
Such a mechanism allows one to sychronise transitions coming from different
component a\-cyc\-lic nets. The same mechanism of simultaneous output to and input
from a place is not available within the component a\-cyc\-lic nets, and
so in Definition~\ref{def:6x}(1) we have
$\PRE{\csan}{U}\subseteq M\cup(\POST{\csan}{U}\cap Q)$ rather than
$\PRE{\csan}{U}\subseteq M\cup \POST{\csan}{U} $.

\begin{definition}[mixed step sequence and step sequence of \textsc{csa}-net]
\label{def:7a}
    Let $M_0,\dots,M_k$ ($k\geq 0$) be
    markings and $U_1,\dots,U_k$ be steps
    of a \textsc{csa}-net $\csan$ such that
    $M_{i-1}\STEP{U_i}{\csan}M_i$,
    for every $1\leq i\leq k$.
\begin{enumerate}
\item
    $\mu=M_0U_1M_1\dots M_{k-1}U_k M_k$
    is a \emph{mixed step sequence from $M_0$ to $M_k$}.
\item
    $\sigma=U_1\dots U_k$ is
    a \emph{step sequence from $M_0$ to $M_k$}.
\end{enumerate}
    The above two notions are denoted by $M_0\mixSTEP{\mu}{\csan}M_k$
    and $M_0\STEP{\sigma}{\csan}M_k$, respectively.
    Moreover,
    $M_0\STEP{\sigma}{\csan}$ denotes
    that $\sigma$ is a step sequence \emph{enabled $M_0$}, and
    $M_0\STEP{}{\csan}M_k$
    denotes that $M_k$ is \emph{reachable from $M_0$}.
\EOD  
\end{definition}
If $k=0$ then $\mu=M_0$ and the corresponding
step sequence $\sigma$ is the \emph{empty} sequence denoted by $\lambda$.
 
In the last definition, the starting point
is an arbitrary marking.
The next definition assumes that the starting point is
the default initial marking.

\begin{definition}[behaviour of \textsc{csa}-net]
\label{def:7}
    The following sets capture various be\-ha\-vioural notions
    related to step sequences and reachable markings
    of a \textsc{csa}-net $\csan$.
\begin{enumerate}
\item
    $\sseq(\csan)\triangleq\{\sigma\mid  M_\csan^\init\STEP{\sigma}{\csan}M\}$
    \hfill \emph{step sequences}.

\item
    $\mixsseq(\csan)\triangleq\{\mu\mid  M_\csan^\init\mixSTEP{\mu}{\csan}M\}$
    \hfill   \emph{mixed step sequences}.

\item
    $\maxsseq(\csan)\triangleq\{\sigma\in\sseq(\csan)\mid
         \neg\exists U: \sigma U\in\sseq(\csan)\}$
    \\\hspace*{\fill}   \emph{maximal step sequences}.

\item
    $\maxmixsseq(\csan)\triangleq\{\mu\in\mixsseq(\csan)\mid
         \neg\exists U,M: \mu UM\in\mixsseq(\csan)\}$
    \\\hspace*{\fill} \emph{maximal mixed step sequences}.

\item
    $\reach(\csan)\triangleq \{M\mid
               M_\csan^\init\STEP{ }{\csan}M\}$
    \hfill   \emph{reachable markings}.

\item
    $\finreach(\csan)\triangleq\{M\mid \exists \sigma\in\maxsseq(\csan):
               M_\csan^\init\STEP{\sigma}{\csan}M\}$
    \\\hspace*{\fill}  \emph{final reachable markings}.
\end{enumerate}
\EOD
\end{definition}

Note that if a \textsc{csa}-net has only one component acyclic net, $\an$, then 
$f(\csan) = f(\an)$, for $f=\sseq,\mixsseq,\maxsseq,\reach,\finreach$.

\begin{example}
\label{ex-5jw}
    The following hold for the \textsc{csa}-net in Figure~\ref{fig-2x}($a$).
\begin{enumerate}
\item
    $\sseq(\csan_1)=\{\lambda,\{a,e\} ,\{a,e\}b, \cdots\}$.

\item
    $\mixsseq(\csan_1)=\{
    \{p_1,p_5\},
    \{p_1,p_5\}\{a,e\}\{p_2,p_6,q_1\},
    \dots\}$.

\item
    $\maxsseq(\csan_1)=\{
     ab,aeb,abe,\{a,e\}b,a\{b,e\},
     ec\{d,f\},\{e,c\}\{d,f\}\}$.

\item
    $\maxmixsseq(\csan_1)=\{
    \{p_1,p_5\}\{a,e\}\{p_2,p_6,q_1\}b\{p_4,p_6,q_1\} ,\dots\}$.

\item
    $\reach(\csan_1)=\{\{p_1,p_5\}, \{p_2,p_6,q_1\},\dots\}$.

\item
    $\finreach(\csan_1)=\{
    \{p_4,p_6,q_1\},\{p_4,p_7\}\}$.
\EOD
\end{enumerate}
\end{example}

Projecting step sequences and mixed step sequences of
a \textsc{csa}-net onto
a component a\-cyc\-lic net yields
sequences,   mixed step sequences and scenarios of the latter.

\begin{definition}[projected (mixed) step sequence of \textsc{csa}-net]
\label{de-77}
    Let $\sigma$ be a step sequence of a \textsc{csa}-net $\csan$,
    $\mu$ be a mixed step sequence of $\csan$, and $\an$
    be one of the component a\-cyc\-lic nets of $\csan$.
\begin{enumerate}
\item
    $\proj_\an(\sigma)=\sigma\upharpoonright_{T_\an}$.
\item
    $\proj_\an(\mu)$ is obtained from
    $\mu\upharpoonright_{P_\an\cup T_\an}$ by deleting consecutive duplicate sets.
\end{enumerate}
\end{definition}

\begin{example}
\label{ex-33}
For $\csan_1$ in Figure~\ref{fig-2x}($a$),
$\proj_{\an_1}(\{a,e\}b)=ab$, $\proj_{\an_2}(\{a,e\}b)=e$, and
$\proj_{\an_2}(\{p_1,p_5\}a\{p_2,p_5\}b\{p_4,p_5\})=\{p_5\}$.
\EOD
\end{example}

Projected step sequences are step sequences of the component
a\-cyc\-lic nets.

\begin{proposition}
\label{pr-883}
    For
    every component a\-cyc\-lic net $\an$ of 
    a \textsc{csa}-net $\csan$:
\[
\begin{array}{lcl}
    \proj_\an(\sseq(\csan))
    & \subseteq
    & \sseq(\an)
\\
    \proj_\an(\mixsseq(\csan))
    & \subseteq
    & \mixsseq(\an)\;.
\end{array}
\]
\end{proposition}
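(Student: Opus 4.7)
The plan is to reduce both inclusions to a single inductive lemma about mixed step sequences, and then read off the step-sequence inclusion as a corollary. First I would prove, by induction on the length $k$ of the execution, the following strengthened statement:

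\medskip
\noindent\emph{Claim.} If $\mu = M_0 U_1 M_1 \dots U_k M_k$ is a mixed step sequence of $\csan$ from $M_\csan^\init$, then $\proj_\an(\mu)$ is a mixed step sequence of $\an$ from $M_\an^\init$, and its last marking equals $M_k \cap P_\an$.

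\medskip
The base case $k=0$ is immediate since $M_\csan^\init \cap P_\an = P_\csan^\init \cap P_\an = P_\an^\init = M_\an^\init$. For the inductive step, write $\mu = \mu' U_k M_k$ where $\mu'$ ends at $M_{k-1}$, and apply the induction hypothesis to $\mu'$. Everything then hinges on the following two observations, both of which follow from Definition~\ref{def:20} (component nets have pairwise disjoint place sets, and arcs within a component never touch places of another component or buffer places beyond those in $W$): for every transition $t\in T_\an$, $\PRE{\an}{t}=\PRE{\csan}{t}\cap P_\an$ and $\POST{\an}{t}=\POST{\csan}{t}\cap P_\an$; and for every transition $t\in T_{\csan}\setminus T_\an$, $\PRE{\csan}{t}\cap P_\an = \POST{\csan}{t}\cap P_\an = \es$.

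Now split the inductive step into two cases. If $U_k\cap T_\an = \es$, then both the added step and the added marking project trivially: $U_k$ becomes $\es$ (and is deleted by $\upharpoonright$), and $M_k\cap P_\an = M_{k-1}\cap P_\an$ (eliminated as a consecutive duplicate), so $\proj_\an(\mu) = \proj_\an(\mu')$ and the claim holds by induction. If $U_k\cap T_\an\neq\es$, then first $U_k\cap T_\an\in\STEPS(\an)$, since distinct $t,u\in U_k$ satisfy $\PRE{\csan}{t}\cap\PRE{\csan}{u}=\es$ by Definition~\ref{def:5x}(1), which restricts to $\PRE{\an}{t}\cap\PRE{\an}{u}=\es$. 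Next, $U_k\cap T_\an$ is enabled at $M_{k-1}\cap P_\an$ in $\an$, because for $t\in U_k\cap T_\an$ the enabling condition in $\csan$ gives $\PRE{\csan}{t}\subseteq M_{k-1}\cup(\POST{\csan}{U_k}\cap Q)$; intersecting with $P_\an$ and using $\PRE{\an}{t}\cap Q=\es$ yields $\PRE{\an}{t}\subseteq M_{k-1}\cap P_\an$. Finally, executing $U_k\cap T_\an$ in $\an$ from $M_{k-1}\cap P_\an$ produces exactly $M_k\cap P_\an$, by intersecting the formula $M_k=(M_{k-1}\cup\POST{\csan}{U_k})\setminus\PRE{\csan}{U_k}$ with $P_\an$ and applying the two observations above to rewrite $\POST{\csan}{U_k}\cap P_\an = \POST{\an}{U_k\cap T_\an}$ and $\PRE{\csan}{U_k}\cap P_\an = \PRE{\an}{U_k\cap T_\an}$.

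This establishes $\proj_\an(\mixsseq(\csan))\subseteq\mixsseq(\an)$ directly. For $\proj_\an(\sseq(\csan))\subseteq\sseq(\an)$, given any $\sigma=U_1\dots U_k\in\sseq(\csan)$, choose a witnessing mixed step sequence $\mu$; then the steps appearing in $\proj_\an(\mu)$ are, in order, the nonempty sets $U_i\cap T_\an$, which is precisely $\sigma\upharpoonright_{T_\an}=\proj_\an(\sigma)$, so $\proj_\an(\sigma)\in\sseq(\an)$. The main obstacle is simply the case analysis for the inductive step together with the set-theoretic bookkeeping needed to keep the projection compatible with the $\upharpoonright$ and consecutive-duplicate conventions of Definition~\ref{de-77}; once the two structural observations above are in place, everything reduces to restricting the CSA-net enabling and execution equations to $P_\an$.
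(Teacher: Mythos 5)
Your proposal is correct and takes essentially the same route as the paper: the paper's proof also reduces everything to showing $\proj_\an(\mu)\in\mixsseq(\an)$ by observing that enabledness and the marking-update computation in $\csan$ and $\an$ agree after intersecting with $P_\an$, with the case $U_i\cap T_\an=\es$ absorbed by the deletion of empty steps and consecutive duplicate markings. The paper states this in two sentences as an observation, whereas you supply the induction, the two structural facts about $\PRE{\csan}{t}\cap P_\an$ and $\POST{\csan}{t}\cap P_\an$, and the case analysis that justify it; no substantive difference in method.
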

\begin{proof}
    It suffices to demonstrate
    that we have $\proj_\an(\mu)\in\mixsseq(\an)$,
    where $\mu=M_0U_1M_1\dots M_{k-1}U_k M_k$.
    We observe that in such a case the enabledness and calculations of 
    the results of step executions
    in $\csan$ and $\an$ are fully consistent except that if 
    $U_i\cap T_\an=\es$, then $M_{i-1}\cap P_\an=M_i\cap P_\an$
    (since $\an$ is well-formed, this is
    the only situation that $M_{i-1}\cap P_\an=M_i\cap P_\an$). 
    But then the construction of $\proj_\an(\mu)$ then deletes multiple neighbouring copies 
    of the same marking.
\qed
\end{proof}
 
\subsection{Behaviour of subnets of \textsc{csa}-nets}

The structural inclusion of  
\textsc{csa}-nets implies the inclusion of the behaviours
they generate.

\begin{proposition}
\label{pr-uusussddd}
    Let $\csan\sqsubseteq\csan'$ be \textsc{csa}-nets.
    Then we have $f(\csan) \subseteq f(\csan')$, for  
    $f=\mixsseq,\sseq, \reach$.
\end{proposition}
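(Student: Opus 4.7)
The plan is to mirror the strategy of Proposition~\ref{pr-uususs} for a\-cyc\-lic nets, reducing the three inclusions to the single case $f=\mixsseq$, since an inclusion on mixed step sequences immediately entails the corresponding inclusions on step sequences (by extracting the step components) and on reachable markings (by extracting the final marking). So the real work is to show $\mixsseq(\csan)\subseteq\mixsseq(\csan')$.

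First I would take an arbitrary $\mu=M_0U_1M_1\dots M_{k-1}U_kM_k\in\mixsseq(\csan)$ and argue by induction on the length $k$. The base case $k=0$ reduces to $M_0=M_\csan^\init=M_{\csan'}^\init$, which is immediate from $\csan\sqsubseteq\csan'$ (Definition~\ref{def:4aafffa}). For the inductive step, assuming $M_0U_1\dots M_{i-1}\in\mixsseq(\csan')$, I need to show that the single step $M_{i-1}\STEP{U_i}{\csan}M_i$ is also a legal step in $\csan'$, i.e.\ $M_{i-1}\STEP{U_i}{\csan'}M_i$.

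For this, two things must be verified. First, that $U_i\in\STEPS(\csan')$: since $U_i\subseteq T_\csan\subseteq T_{\csan'}$ and Definition~\ref{def:4axxyd}(3) gives $\PRE{\csan}{t}=\PRE{\csan'}{t}$ for every $t\in T_\csan$, the pairwise disjointness of preconditions within $U_i$ transfers verbatim from $\csan$ to $\csan'$, so $U_i\in\STEPS(\csan')$. Second, that $U_i$ is enabled in $\csan'$ at $M_{i-1}$: using Definition~\ref{def:4axxyd}(3) again we have $\PRE{\csan}{U_i}=\PRE{\csan'}{U_i}$ and $\POST{\csan}{U_i}=\POST{\csan'}{U_i}$, and Proposition~\ref{pr-uusugfdddee} together with the fact that $Q_\csan\subseteq Q_{\csan'}$ ensures $\POST{\csan}{U_i}\cap Q_\csan=\POST{\csan'}{U_i}\cap Q_{\csan'}$ (any newly added buffer place in $Q_{\csan'}\setminus Q_\csan$ cannot be in the post-set of a transition of $T_\csan$, again by Definition~\ref{def:4axxyd}(3)). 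Hence the enabling condition of Definition~\ref{def:6x}(1) in $\csan$ at $M_{i-1}$ is exactly the enabling condition in $\csan'$ at $M_{i-1}$, and the resulting marking $M_i=(M_{i-1}\cup\POST{\cdot}{U_i})\setminus\PRE{\cdot}{U_i}$ is identical under both computations.

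The step I expect to be most delicate is the bookkeeping around the buffer places: one has to be sure that enlarging the set of buffer places and the flow $W$ when going from $\csan$ to $\csan'$ does not spoil the equality of pre- and post-sets of the transitions of $\csan$, so that the step rule produces the same successor marking. Once this local-environment preservation is set up cleanly (directly from Definition~\ref{def:4axxyd}(3)), the induction goes through, and the inclusions for $\sseq$ and $\reach$ follow by projecting the mixed step sequence to its step component and to its last marking, respectively. \qed
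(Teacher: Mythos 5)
Your proof is correct and follows essentially the same route as the paper, which simply refers back to the argument for Proposition~\ref{pr-uususs}: reduce all three inclusions to $\mixsseq$, use $M_\csan^\init=M_{\csan'}^\init$ from Definition~\ref{def:4aafffa}, and transfer each step via the preservation of local environments in Definition~\ref{def:4axxyd}(3). Your extra care with the buffer-place bookkeeping (that $\POST{\csan}{U_i}\cap Q_\csan=\POST{\csan'}{U_i}\cap Q_{\csan'}$) is exactly the detail the paper leaves implicit, and it checks out.
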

\begin{proof}
    Similar to the proof of Proposition~\ref{pr-uususs}.
\qed
\end{proof} 

Due to being both backward-deterministic and
forward-deterministic, \textsc{cso}-nets enjoy 
several useful be\-ha\-vioural properties similar 
to those satisfied by occurrence nets.
For example, each
\textsc{cso}-net has a step sequence 
which uses all  
the transitions.
This and other properties are gathered together 
in the next result.

\begin{proposition}
\label{pr-ueeudd}
    Let $\sigma=U_1\dots U_k$ ($k\geq 0$) be a
    sequence of nonempty sets of transitions of \textsc{cso}-net $\cson$.
    Moreover, let $M$ be a reachable marking of $\cson$ and
    $T=\bigcup\sigma$.
\begin{enumerate}
\item
    $\sigma\in\sseq(\cson)$ iff for every $1\leq i\leq k$,
    $U_i\cap (U_1\cup\dots \cup U_{i-1})=\es$
    and 
\[
    \PRE{\cson}{\PRE{\cson}{U_i}}\subseteq U_1\cup\dots\cup U_{i-1}\cup 
    U_i\cap \PRE{\cson}{Q\cap\PRE{\cson}{U_i}}\;.
\]
\item
    $\sigma\in\sseq(\cson)$
    implies
    $M_\cson^\init\STEP{\sigma}{\cson}
      (M_\cson^\init\cup \POST{\cson}{T})
        \setminus \PRE{\cson}{T}$.
\item
    $T_\cson=\bigcup\{\bigcup\xi\mid \xi\in\sseq(\cson)\}$.
\item
    $M\STEP{U}{\cson}M'$ implies $U'\in\enabled_\cson(M')$, for every
    $U'\in\enabled_\cson(M)$ such that $U\cap U'=\es$. 
\item
    $\maxsseq(\cson)=\{\xi\in\sseq(\cson)\mid \bigcup\xi=T_\cson\}$.
\item
    $\finreach(\cson)=\{P^\fin_\cson\}$.
\end{enumerate}
\end{proposition}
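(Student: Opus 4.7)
My plan is to lift the occurrence-net properties from Proposition~\ref{pr-ueeu} to the \textsc{cso}-net setting, using the component-wise structure together with the buffer-place conditions in Definition~\ref{def:22}. The components $\an_{\cson,i}$ are occurrence nets, and by Proposition~\ref{pr-883} projections respect step-sequence semantics, so many arguments can be reduced to the acyclic case. The novelty compared to Proposition~\ref{pr-ueeu} is the possibility of synchronous communication, where a transition in $U_i$ consumes a buffer token produced by another transition in the same step $U_i$; this is exactly what the extra term $U_i\cap \PRE{\cson}{Q\cap\PRE{\cson}{U_i}}$ in part~(1) is designed to capture, and it appears because Definition~\ref{def:6x}(1) allows $\PRE{\csan}{U}\subseteq M\cup(\POST{\csan}{U}\cap Q)$.

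For part~(1), I would induct on $k$. The forward direction unfolds the enabling condition at each $M_{i-1}$ and observes that any pre-place of $U_i$ that is not already in $M_{i-1}$ must be a buffer place produced by some $t\in U_i$, while any non-buffer pre-place must have been produced by a previously executed transition — and by backward determinism (Definition~\ref{def:22}(2)) that producer is unique. The reverse direction builds $M_i$ inductively using the formula of part~(2). Part~(2) itself follows from the fact that, by Proposition~\ref{prop:3xx} and Proposition~\ref{pr-jeff}(3) applied component-wise, no non-buffer place receives or loses a token more than once, while each buffer place has at most one producer and one consumer (Definition~\ref{def:22}(2)); hence tokens added and removed do not interfere. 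Part~(4) is the analog of forward persistence: if $U\cap U'=\es$, then no transition of $U$ consumes any pre-place of $U'$, because a shared non-buffer pre-place would contradict the definition of a step (Definition~\ref{def:5x}(1)), and a shared buffer pre-place would force two different consumers for the buffer, violating Definition~\ref{def:22}(2).

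Parts~(3), (5), and~(6) follow by combining the above. For~(3), I would use Definition~\ref{def:22}(3): the flow restricted to non-buffer places is acyclic, so every transition $t$ has finitely many $F_\cson$-predecessors and, by strongly-connected-component collapsing of $F_\cson\cup W_\cson$, one can order these causal predecessors into a sequence of steps, packaging together the transitions involved in each buffer-cycle to exploit synchronous communication as permitted by part~(1). Part~(5) then follows because, by~(4) and the disjointness condition in~(1), a step sequence is maximal iff no new transition can be appended, iff all transitions have occurred; and~(6) follows by combining~(5) with the explicit marking formula in~(2) and Proposition~\ref{prop-vxxx}. The main obstacle will be part~(3): a naive topological sort fails whenever buffer places create synchronisation cycles between components, so the construction must detect these cycles and fire the participating transitions simultaneously, relying crucially on Definition~\ref{def:22}(3) to guarantee that such cycles are confined to buffer places and therefore have a bounded, well-defined "synchronous closure" of transitions that can be executed in one step.
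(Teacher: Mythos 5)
Your proposal is sound in outline, but it takes a genuinely different route from the paper: the paper does not prove this proposition at all, it simply defers to the results established in~\cite{c1}, whereas you reconstruct the argument from the primitives of this paper. The comparison is therefore between a citation and an actual proof sketch. What your approach buys is self-containedness and, more importantly, an early appearance of the key structural idea: your ``synchronous closure'' obtained by collapsing the strongly connected components of $F_\cson\cup W_\cson$ (which Definition~\ref{def:22}(3) confines to buffer places) is precisely the syn-cycle machinery that the paper only introduces \emph{after} this proposition (Definition~\ref{def-syncycle} and Propositions~\ref{prop-uhegr} and~\ref{prop-uhddegr}); you correctly identify that part~(3) cannot be handled by a naive topological sort and needs exactly this device. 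Your reading of the extra term $U_i\cap \PRE{\cson}{Q\cap\PRE{\cson}{U_i}}$ in part~(1) as the producers of synchronously consumed buffer tokens, licensed by the $\POST{\csan}{U}\cap Q$ clause of Definition~\ref{def:6x}(1), is also the right one. One justification needs repair: in part~(4) you rule out a shared non-buffer pre-place of $t\in U$ and $u\in U'$ by appeal to Definition~\ref{def:5x}(1), but that definition only constrains transitions within a \emph{single} step, and $t$ and $u$ lie in different steps. The fact you need is true for a different reason, namely that the components of a \textsc{cso}-net are occurrence nets, so $|\post{p}|\leq 1$ for every non-buffer place and no two distinct transitions can share such a pre-place at all; with that substitution (and your correct use of $|\POST{\cson}{q}|\leq 1$ from Definition~\ref{def:22}(2) for buffer places) the persistence argument goes through.
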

\begin{proof}
     If follows from the results proved in~\cite{c1}.
\qed
\end{proof}
 
In general, the step sequences of \textsc{cso}-nets cannot be fully 
sequentialised (or linearised) due to the presence of synchronous 
communications which cannot be split. The next 
definition captures this notion formally.

\begin{definition}[syn-cycle]
\label{def-syncycle}
    A \emph{synchronous cycle (or syn-cycle)} of a \textsc{cso}-net
    $\cson$ is a maximal set of transitions $S\subseteq T_\cson$
    such that, for all $t\neq u\in S$, $(t,u)\in W^+_\cson$.
    \\
    The set of all sync-cycles is denoted by $\syncycles(\cson)$.
\EOD
\end{definition}

The idea behind the notion of syn-cycles is to identify  
`tight' synchronous communications which cannot be executed 
in stages.  

\begin{example}
\label{ex-33dd}
For $\cson_3$ in Figure~\ref{fig-2x}($d$),
$\syncycles(\cson_3)=\{\{c\},\{e\},\{d,f\}\}$.
\EOD
\end{example}

Syn-cycles are nonempty and form a partition of the set of transitions.

\begin{proposition}
\label{prop-uhegr}
     $\syncycles(\cson)$ forms a partition of $T_\cson$, for every
     \textsc{cso}-net $\cson$.
\end{proposition}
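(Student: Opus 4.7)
The plan is to show that syn-cycles are exactly the equivalence classes of a natural equivalence relation on $T_\cson$, which will immediately give the partition property. Define $\sim$ on $T_\cson$ by
\[
    t\sim u \iff t=u \;\vee\; \bigl((t,u)\in W_\cson^+ \;\wedge\; (u,t)\in W_\cson^+\bigr)\;.
\]
First I would check that $\sim$ is an equivalence relation. Reflexivity and symmetry are immediate. For transitivity, assume $t\sim u$ and $u\sim w$ with all three distinct; then $(t,u),(u,w)\in W_\cson^+$ so $(t,w)\in W_\cson^+$ by composition, and symmetrically $(w,t)\in W_\cson^+$, so $t\sim w$. The remaining sub-cases ($t=u$, $u=w$, or $t=w$) are trivial.

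Next I would show that the syn-cycles of $\cson$ coincide with the equivalence classes of $\sim$. In one direction, fix $t\in T_\cson$ and let $[t]$ denote its $\sim$-class. Any two distinct elements of $[t]$ are by definition related in both directions under $W_\cson^+$, so $[t]$ satisfies the condition of Definition~\ref{def-syncycle}. For maximality, suppose $[t]\cup\{w\}$ still satisfies the condition for some $w\notin[t]$: then for every $u\in[t]$ with $u\neq w$ we have $(u,w),(w,u)\in W_\cson^+$, so $w\sim u$ and hence $w\in[t]$, a contradiction. Thus $[t]\in\syncycles(\cson)$. Conversely, if $S\in\syncycles(\cson)$ and $t\in S$, then every $u\in S$ satisfies $u\sim t$, so $S\subseteq[t]$; by maximality of $S$ and the fact that $[t]$ itself is a syn-cycle containing $S$, we conclude $S=[t]$.

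Finally, since the equivalence classes of $\sim$ partition $T_\cson$ into nonempty sets (provided $T_\cson\neq\es$; otherwise $\syncycles(\cson)=\es$, which is vacuously a partition), the identification of syn-cycles with these classes yields the required partition of $T_\cson$.

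The only mildly delicate point is the transitivity step, which depends on the bidirectional nature of the defining condition: the apparently one-sided requirement ``$(t,u)\in W_\cson^+$ for all $t\neq u\in S$'' in Definition~\ref{def-syncycle} is in fact symmetric since it quantifies over ordered pairs, and this symmetry is what allows composition of $W_\cson^+$ to propagate through a common neighbour.
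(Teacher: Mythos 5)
Your proof is correct. Note, though, that the paper does not actually prove this proposition: it simply defers to results established in the cited earlier work on structured occurrence nets, so your argument supplies a self-contained justification where the paper gives none. Your route --- defining the equivalence relation $t\sim u$ iff $t=u$ or $(t,u),(u,t)\in W_\cson^+$, verifying transitivity via transitivity of $W_\cson^+$, and then identifying syn-cycles with the $\sim$-classes --- is the natural one, and you correctly isolate the one point that makes it work: the defining condition of a syn-cycle quantifies over \emph{ordered} pairs of distinct elements, so it is already symmetric, and transitivity of $\sim$ is exactly what forces every maximal ``clique'' of the relation to be a full equivalence class rather than some smaller maximal clique (which is where the argument would break for a non-transitive relation). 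Both directions of the identification (each class is a maximal admissible set; each maximal admissible set is a class) are handled properly. The only quibble is the degenerate case $T_\cson=\es$: under a literal reading of Definition~\ref{def-syncycle} the empty set is then the unique maximal admissible subset, so $\syncycles(\cson)$ would be $\{\es\}$ rather than $\es$; this clashes with the paper's convention that the blocks of a partition are nonempty, but it is a definitional edge case rather than a flaw in your argument, and for nonempty $T_\cson$ everything you wrote goes through.
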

\begin{proof}
     If follows from results proved in~\cite{c1}.
\qed
\end{proof}

Each step occurring in a step sequence of a \textsc{cso}-net 
can be partitioned into syn-cycles (in a unique way). 

\begin{proposition}
\label{prop-uhddegr}
     Let $M$ be a reachable marking of a  \textsc{cso}-net $\cson$
     and $M\STEP{U}{\cson}M'$.
     Then there are syn-cycles
     $S_1,\dots, S_k\in\syncycles(\cson)$ 
     such that $U=S_1\uplus\dots\uplus S_k$ and 
     $M\STEP{S_1\dots S_k}{\cson}M'$.
\end{proposition}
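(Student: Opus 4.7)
The plan is to prove the proposition in three stages: (i) establish a \emph{separation lemma} that any syn-cycle meeting $U$ is entirely contained in $U$, so that Proposition~\ref{prop-uhegr} yields a decomposition $U = S_1 \uplus \cdots \uplus S_k$ into syn-cycles; (ii) reindex the $S_i$ along a topological order of the inter-cycle buffer flow; (iii) verify that firing the $S_i$ from $M$ in this order is enabled throughout and terminates at $M'$.

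For stage (i), fix $t \in S \cap U$ and pick any $u \in S \setminus \{t\}$. By Definition~\ref{def-syncycle} we have both $(u,t),(t,u) \in W_\cson^+$, so there is a directed cycle $t = t_0 \to q_1 \to t_1 \to \cdots \to t_{m-1} \to q_m \to t_0$ in $W_\cson$ passing through $u$. Let $V_1\dots V_n$ be a step sequence from $M_\cson^\init$ to $M$, and let $\ell$ be the smallest index (if any) such that some $t_i$ fires in $V_\ell$; necessarily $i \geq 1$, since $t_0 = t \in U$ has not yet fired. Because $|\PRE{\cson}{q_i}| = 1$ by Definition~\ref{def:22}(2), because $q_i$ is not initial, and because $t_{i-1}$ does not fire in $V_1, \dots, V_{\ell-1}$ by minimality of $\ell$, the buffer $q_i$ is absent from the marking just before $V_\ell$; Definition~\ref{def:6x} then forces $t_{i-1} \in V_\ell$. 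Iterating this argument backwards around the cycle drags $t_0 = t$ into $V_\ell$, a contradiction. Hence no $t_i$ has ever fired and none of the $q_i$ lie in $M$. Applying the same backward propagation inside $U$ itself: $t_0 \in U$ needs $q_m \notin M$, so $t_{m-1} \in U$; then $t_{m-1} \in U$ needs $q_{m-1} \notin M$, so $t_{m-2} \in U$; continuing, all $t_i \in U$, and in particular $u \in U$.

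For stage (ii), by Proposition~\ref{prop-uhegr} the syn-cycles meeting $U$ are pairwise disjoint, giving $U = S_1 \uplus \cdots \uplus S_k$. Define $S_i \prec S_j$ whenever some $q \in Q_\cson$ has its producer in $S_i$ and its consumer in $S_j$; maximality of syn-cycles in Definition~\ref{def-syncycle} forbids distinct syn-cycles from being mutually $W^+$-related, so the transitive closure of $\prec$ is a strict partial order and we may reindex so that $i < j$ whenever $S_i \prec S_j$. For stage (iii), set $M_0 = M$ and let $M_i$ be the marking reached after firing $S_1 \dots S_i$. To see $S_i \in \enabled_\cson(M_{i-1})$, I split the inputs of $S_i$ into four kinds: non-buffer places (in $M$ by enabledness of $U$, and untouched by earlier $S_j$ since component occurrence nets are forward deterministic); buffers whose producer lies in $S_i$ itself (satisfied synchronously via $\POST{\cson}{S_i} \cap Q$); buffers produced by some $S_j$ with $j<i$ (present in $M_{i-1}$ by the topological order); and buffers whose producer lies outside $U$ (in $M$ by enabledness, and not consumed by earlier $S_j$ because $|\POST{\cson}{q}| \leq 1$). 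A direct bookkeeping then shows $M_k = (M \cup \POST{\cson}{U}) \setminus \PRE{\cson}{U} = M'$.

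The main obstacle is the separation lemma of stage (i): the same ``propagate backwards around the $W$-cycle'' argument must be applied twice, once on the step sequence leading up to $M$ (to rule out partial past firing of any syn-cycle) and once within $U$ itself, and both applications lean delicately on $|\PRE{\cson}{q}| = 1$ together with the fact that buffer places are never initial. Once this lemma is in hand, the remaining stages are essentially bookkeeping that exploits the forward and backward determinism of CSO-nets.
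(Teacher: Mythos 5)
Your proof is correct, but it takes a genuinely different route from the paper: the paper disposes of this proposition with a one-line appeal to results proved in~\cite{c1}, whereas you construct a complete self-contained argument. Your three-stage structure holds up. The separation lemma of stage (i) is the real content, and both backward propagations around the closed $W$-walk are sound because they rest on exactly the right ingredients: $|\PRE{\cson}{q}|=1$ for buffer places, the fact that buffer places are never initial, and the non-repetition of transitions in step sequences of a \textsc{cso}-net (Proposition~\ref{pr-ueeudd}(1)), which is what guarantees that $t_0=t\in U$ has not fired on the way to $M$. Stage (ii) is fine modulo one cosmetic point: $\prec$ as you define it need not be irreflexive (a syn-cycle with internal synchronous communication satisfies $S\prec S$), so it is only the restriction of $\prec^+$ to \emph{distinct} syn-cycles that is acyclic --- which is all the reindexing needs. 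In stage (iii) the phrase ``a direct bookkeeping'' hides the one identity worth stating explicitly, namely $\POST{\cson}{S_j}\cap\PRE{\cson}{S_l}=\es$ for $j>l$ (for buffers this is forced by the topological order; for ordinary places it follows from enabledness of $U$ at $M$ together with backward determinism of the occurrence-net components), since this is precisely what lets the iterated marking updates collapse to $(M\cup\POST{\cson}{U})\setminus\PRE{\cson}{U}=M'$. As for what each approach buys: the paper's citation is economical but leaves the reader to check that the results of~\cite{c1} transfer to the present axiomatisation of \textsc{cso}-nets, while your argument isolates exactly which clauses of Definitions~\ref{def:20} and~\ref{def:22} carry the proof, and would serve directly as a template for the well-formed \textsc{csa}-net analogue (Proposition~\ref{prop-uhddegrdd}), which the paper again only derives indirectly.
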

\begin{proof}
     If follows from results proved in~\cite{c1}.
\qed
\end{proof}

This means, for example, that all reachable markings of a \textsc{cso}-net
can be generated by executing syn-cycles rather than all potential steps.

\section{Well-formed \textsc{csa}-nets}

A basic consistency criterion applied to \textsc{csa}-nets
is well-formedness. Its essence is to
ensure a clean representation of causality in
behaviours they represent.
The definition of a well-formed \textsc{csa}-net is derived from the notion of a well-formed step
sequence.

\begin{definition}[well-formed step sequence of \textsc{csa}-net]
\label{def:10aadd}
    A step sequence $U_1\dots U_k$ of
    a \textsc{csa}-net $\csan$ is \emph{well-formed}
    if the following hold:
\begin{enumerate}
\item
    $\POST{\csan}{t}\cap\POST{\csan}{u}=\es$,
    for every $1\leq i\leq k$ and all $t\neq u\in U_i$.
\item
    $\POST{\csan}{U_i}\cap\POST{\csan}{U_j}=\es$,
    for all $1\leq i<j\leq k$.
\EOD
\end{enumerate}
\end{definition}
Intuitively, in a well-formed step sequence of a \textsc{csa}-net, 
no place is filled by a token  more than once.

It then follows, e.g., that in
a well-formed step sequence, no token is consumed more than once,
no transition is executed more than once,
the order of execution of transitions does not influence the
resulting marking.
 
\begin{proposition}
\label{pr-jeffdd}
    Let $\sigma=U_1\dots U_k$ be a well-formed
    step sequence of a  \textsc{csa}-net $\csan$,
    and
    $M_\csan^\init=M_0 U_1 M_1 \dots M_{k-1} U_k M_k$ be the corresponding 
    mixed step sequence.
    Moreover, let $T=\bigcup\sigma$.
\begin{enumerate}

\item
    $\PRE{\csan}{U_i}\cap\POST{\csan}{U_i}\subseteq Q_\csan$,
    for every $1\leq i\leq k$.

\item
    $\PRE{\csan}{U_i}\cap\PRE{\csan}{U_j}=\es$ and $U_i\cap U_j=\es$,
    for all $1\leq i<j\leq k$.

\item
    $M_k=M_\csan^\init\cup \POST{\csan}{T}\setminus \PRE{\csan}{T}$. 
\end{enumerate}
\end{proposition}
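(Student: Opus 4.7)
The plan is to follow the strategy of Proposition~\ref{pr-jeff}, adapted to the two new features of \textsc{csa}-nets: the possibility that a buffer place appears in both the pre- and post-set of the same step (synchronous communication), and the weakened enabling condition $\PRE{\csan}{U_i}\subseteq M_{i-1}\cup(\POST{\csan}{U_i}\cap Q_\csan)$ from Definition~\ref{def:6x}(1). Throughout, I will rely on the observation that no initial place can ever be produced: by Definition~\ref{def:20}(3) the arcs of $W$ only touch buffer places, and by Notation~\ref{not:1a} no buffer place is initial, so any $p\in P_\csan^\init$ satisfies $\PRE{\csan}{p}=\es$.

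For part~(1), I would argue by contradiction. Assume $p\in\PRE{\csan}{U_i}\cap\POST{\csan}{U_i}$ with $p\notin Q_\csan$. Since $p$ is not a buffer place, the enabling condition forces $p\in M_{i-1}$, and since $p\in\POST{\csan}{U_i}$ the place $p$ has a predecessor transition, so $p\notin M_\csan^\init$. Tracing back through $M_0,\dots,M_{i-1}$ using the execution rule then exhibits some $m<i$ with $p\in\POST{\csan}{U_m}$, contradicting Definition~\ref{def:10aadd}(2).

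For part~(2), if $t\in U_i\cap U_j$ then Definitions~\ref{def:20}(1) and~\ref{def:2}(2) supply a non-buffer post-place $p$ of $t$ lying in $\POST{\csan}{U_i}\cap\POST{\csan}{U_j}$, which contradicts Definition~\ref{def:10aadd}(2). For the pre-set disjointness I would assume $p\in\PRE{\csan}{U_i}\cap\PRE{\csan}{U_j}$ and split on whether $p$ is a buffer place. In the non-buffer case, part~(1) rules out synchronous re-production, so $p$ must be present in both $M_{i-1}$ and $M_{j-1}$; since $p$ is consumed by $U_i$ and re-appears by $M_{j-1}$, it must be produced at two different steps, contradicting Definition~\ref{def:10aadd}(2). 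The buffer case splits further on whether $p$ is furnished to $U_i$ and $U_j$ from the current marking or via synchronous production inside the same step; in every subcase two distinct step-indices producing $p$ emerge. This buffer-place subcase is the main obstacle, as one must carefully distinguish a token genuinely carried across the execution from one produced and immediately consumed inside a single step.

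For part~(3), I would induct on $k$. The base $k=0$ is immediate. For the inductive step, let $T'=U_1\cup\dots\cup U_{k-1}$, substitute the inductive hypothesis
\[
M_{k-1}=(M_\csan^\init\cup\POST{\csan}{T'})\setminus\PRE{\csan}{T'}
\]
into $M_k=(M_{k-1}\cup\POST{\csan}{U_k})\setminus\PRE{\csan}{U_k}$, and simplify using three disjointness facts: $\POST{\csan}{U_k}\cap\POST{\csan}{T'}=\es$ from Definition~\ref{def:10aadd}(2), $\PRE{\csan}{U_k}\cap\PRE{\csan}{T'}=\es$ from part~(2), and the auxiliary identity $\POST{\csan}{U_k}\cap\PRE{\csan}{T'}=\es$ established by the same trace-back argument as part~(2). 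The last identity guarantees that a place newly produced by $U_k$ has not been consumed at any earlier step, after which the desired formula follows by routine set-algebraic manipulation.
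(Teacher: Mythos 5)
Your proposal is correct and follows essentially the same route as the paper, whose entire proof of this proposition is the remark that it is ``similar to the proof of Proposition~\ref{pr-jeff}(1,2,3)'': you carry out exactly that adaptation, correctly isolating the two genuine complications (buffer places may lie in $\PRE{\csan}{U_i}\cap\POST{\csan}{U_i}$, and the enabling condition of Definition~\ref{def:6x}(1) only guarantees $p\in M_{i-1}$ for non-buffer pre-places), and your case analysis and the auxiliary identity $\POST{\csan}{U_k}\cap\PRE{\csan}{T'}=\es$ all check out. The only cosmetic divergence is in part~(3), where you use induction on $k$ in place of the counting argument of Proposition~\ref{pr-jeff}(3); both work.
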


\begin{proof} 
    Similar to the proof of Proposition~\ref{pr-jeff}(1,2,3).
\qed
\end{proof}

Well-formedness is ensured when backward non-determinism is
not allowed. 

To develop a sound treatment of
causality in \textsc{csa}-nets, it is  required that
all step sequences are well-formed.
Moreover, it is required that a well-formed a\-cyc\-lic
nets has no redundant transitions which can never be executed
from the initial marking.

\begin{definition}[well-formed \textsc{csa}-net]
\label{def:10dggdd}
    A \textsc{csa}-net  is \emph{well-formed}
    if  each transition occurs in at least one
    step sequence and
    all step sequences are well-formed.
    \\
    Notation:
    $\WFCSAN$ is the set of all well-formed \textsc{csa}-nets.
\EOD
\end{definition}
 
Intuitively, the executions of syn-cycles in \textsc{csa}-nets correspond to the 
executions of individual transitions in acyclic nets. 
 
\begin{proposition}
\label{prop:3xxddd}
    All step sequences of a  
    \textsc{bdcsa}-net are well-formed.
\end{proposition}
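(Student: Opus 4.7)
My plan is to adapt the argument of Proposition~\ref{prop:3xx} to \textsc{csa}-nets, but to replace its appeal to firing-sequence linearisation (Proposition~\ref{pr-hhhd}) by a direct induction on the length of the step sequence. Linearisation is unavailable in general for \textsc{csa}-nets, because a synchronous cycle of buffer places can render a step inseparable, so the original ``last offending firing'' argument cannot be reused verbatim.

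First I would reduce the two clauses of Definition~\ref{def:10aadd} to a single combinatorial claim. In a \textsc{bdcsa}-net every place of $P_\bdcsan$ and every buffer place has a unique producer (Definitions~\ref{def:4} and~\ref{def:21}(2)). Hence clause~(1) is immediate: if $t\neq u\in U_i$ shared a post-place $p$, then the unique producer of $p$ would coincide with both, forcing $t=u$. Likewise, a violation of clause~(2), say $p\in\POST{\bdcsan}{U_i}\cap\POST{\bdcsan}{U_j}$ with $i<j$, forces the common producer of $p$ to lie in both $U_i$ and $U_j$. It therefore suffices to show, by induction on $k$, that no transition is repeated along any step sequence $\sigma=U_1\dots U_k\in\sseq(\bdcsan)$.

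For the inductive step, the inductive hypothesis gives that $\sigma'=U_1\dots U_{k-1}$ is well-formed, so by Proposition~\ref{pr-jeffdd}(3),
\[
M_{k-1}=\bigl(M_\bdcsan^\init\cup\POST{\bdcsan}{T'}\bigr)\setminus\PRE{\bdcsan}{T'},\qquad T'=U_1\cup\dots\cup U_{k-1}.
\]
Suppose for contradiction that some $t\in U_k$ also belongs to some $U_i$ with $i<k$, so $t\in T'$. By Definition~\ref{def:2}(2) applied to $t$'s component a\-cyc\-lic net, $t$ has a pre-place $p\in P_\bdcsan$. Since $p\notin Q_\bdcsan$, the only clause of Definition~\ref{def:6x}(1) that can justify $p\in\PRE{\bdcsan}{U_k}$ at $M_{k-1}$ is $p\in M_{k-1}$. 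But $t\in T'$ gives $p\in\PRE{\bdcsan}{T'}$, and the closed form above then yields $p\notin M_{k-1}$, the desired contradiction.

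The main obstacle is precisely this loss of linearisation: the contradiction of Proposition~\ref{prop:3xx} was derived by isolating a single offending firing in a firing sequence, whereas here the same role must be played at the level of whole steps. What makes the argument go through is the asymmetry built into Definition~\ref{def:6x}(1)---only buffer-place tokens may be ``borrowed'' from the current step---so that the guaranteed ordinary pre-place of $t$ pins its token down to $M_{k-1}$, which the marking formula then refuses to supply.
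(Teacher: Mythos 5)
Your proof is correct, but it takes a genuinely different route from the paper. The paper's proof is a one-line adaptation of Proposition~\ref{prop:3xx}: it linearises each step into syn-cycles via Proposition~\ref{prop-uhddegr} (the analogue of the firing-sequence linearisation of Proposition~\ref{pr-hhhd}, which as you note is unavailable for whole transitions because synchronous communication makes some steps inseparable) and then isolates the last offending syn-cycle. You instead avoid linearisation altogether: you first observe that backward determinism (Definitions~\ref{def:4} and~\ref{def:21}(2)) gives every ordinary and buffer place a unique producer, which reduces both clauses of Definition~\ref{def:10aadd} to the single claim that no transition repeats along a step sequence, and you then kill repetitions by induction on length, using the closed-form marking of Proposition~\ref{pr-jeffdd}(3) together with the fact that an ordinary pre-place of $t$ (guaranteed by Definition~\ref{def:2}(2)) cannot be borrowed from $\POST{\csan}{U_k}\cap Q$ in Definition~\ref{def:6x}(1). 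Your route buys two things: it makes the role of backward determinism completely transparent (well-formedness can only fail through a repeated transition), and it does not rely on Proposition~\ref{prop-uhddegr}, which the paper states only for \textsc{cso}-nets and whose transfer to \textsc{bdcsa}-nets before their well-formedness is established would itself need justification. The paper's route, in exchange, keeps the argument exactly parallel to the acyclic-net case and reuses the syn-cycle machinery that is needed elsewhere anyway.
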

\begin{proof}
    Similar to the proof of Proposition~\ref{prop:3xx}
    using Proposition~\ref{prop-uhddegr}.  
\qed
\end{proof}

\begin{proposition}
\label{prop:3x}
    $\CSON\subset\BDCSAN\subset\WFCSAN\subset\CSAN$.
\end{proposition}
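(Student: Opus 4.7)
The plan is to prove each of the three inclusions separately and exhibit a strict witness for each gap, closely following the structure of Proposition~\ref{prop:3} for acyclic nets. The first inclusion $\CSON\subseteq\BDCSAN$ is Proposition~\ref{prop:2x}; the last, $\WFCSAN\subseteq\CSAN$, holds by Definition~\ref{def:10dggdd}. Only the middle inclusion $\BDCSAN\subseteq\WFCSAN$ needs genuine work.

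For $\BDCSAN\subseteq\WFCSAN$, Proposition~\ref{prop:3xxddd} already supplies that every step sequence of a \textsc{bdcsa}-net is well-formed, so it remains to show that every transition of a \textsc{bdcsa}-net $\bdcsan$ occurs in some step sequence. I would proceed by induction on the causal depth of a transition $t$ in the graph of $F_{\bdcsan}\cup W_{\bdcsan}$. In the base case, $t$ has no $F\cup W$-predecessors, so it is enabled at $M^\init_{\bdcsan}$ by the well-formedness of its component (Definition~\ref{def:20}(1)). For the inductive step, given $t$, I take step sequences witnessing that each predecessor of $t$ fires and combine them using Propositions~\ref{pr-uusussddd} and~\ref{pr-883}. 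When $t$ sits in a buffer cycle with transitions from other components, I then appeal to a syn-cycle style decomposition in the spirit of Proposition~\ref{prop-uhddegr} to fire the whole synchronisation block simultaneously. Backward determinism of the components together with uniqueness of each buffer producer (Definition~\ref{def:21}(2)) guarantees that the combined firings do not conflict on shared pre-places.

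Strictness is obtained as follows. For $\BDCSAN\setminus\CSON$, any \textsc{bdcsa}-net with a buffer place of forward out-degree at least two violates Definition~\ref{def:22}(2); a small modification of Figure~\ref{fig-2x}$(b)$ produces one. For $\WFCSAN\setminus\BDCSAN$, the \textsc{csa}-net $\csan_1$ of Figure~\ref{fig-2x}$(a)$ works: place $p_4$ has $|\pre{p_4}|=2$, so its upper component is not in $\BDAN$, while Example~\ref{ex-5jw} exhibits step sequences confirming its well-formedness. For $\CSAN\setminus\WFCSAN$, I build a three-component \textsc{csa}-net in which a single buffer place $q$ has two producers sitting in two different components; the step containing both producers is then not well-formed (their post-sets both contain $q$, violating Definition~\ref{def:10aadd}(1)), and the net is not a \textsc{bdcsa}-net since $|\pre{q}|\ge 2$.

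The main obstacle is the lifting argument inside $\BDCSAN\subseteq\WFCSAN$: backward determinism of the individual component acyclic nets does not by itself guarantee executability of every transition of the whole \textsc{bdcsa}-net, because synchronous communication (Definition~\ref{def:6x}(1)) allows a step to depend on buffer contents produced within the same step. The plan for overcoming this is to develop the syn-cycle analogue of Proposition~\ref{prop-uhddegr} for \textsc{bdcsa}-nets and to use it alongside the induction on causal depth, so that any group of mutually-synchronising transitions is fired together when its turn comes.
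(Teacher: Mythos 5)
Your overall decomposition is sound, and the first and last inclusions together with the three strictness witnesses are handled essentially correctly (the paper itself disposes of the whole chain with a one-line appeal to Proposition~\ref{prop:3}, which only concerns $\ON\subset\WFAN\subset\AN$, so it offers no more detail than you do on those parts). The genuine gap is in the half of $\BDCSAN\subseteq\WFCSAN$ that you correctly isolate as the hard part: showing that every transition of a \textsc{bdcsa}-net occurs in some step sequence. The step ``take step sequences witnessing that each predecessor of $t$ fires and combine them'' is exactly where the argument breaks. In a forward-non-deterministic net, the step sequence that produces one of $t$'s buffer tokens may make an irrevocable choice that disables $t$ itself, and no syn-cycle decomposition repairs this, because the offending transitions need not lie on any $W$-cycle with $t$.

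Concretely, take $\an_1$ with initial place $p_1$ and transitions $t,t'$, both consuming $p_1$, with $t\to p_3$ and $t'\to p_4$; take $\an_2$ with $p_2\to u\to p_5$; and add buffer places $q,q'$ with $t'\,W\,q$, $q\,W\,u$, $u\,W\,q'$, $q'\,W\,t$. Both components are well-formed and backward deterministic, every buffer place has exactly one producer linking distinct components, so this is a \textsc{bdcsa}-net by Definitions~\ref{def:20} and~\ref{def:21}. Yet $t$ needs $q'$, which needs $u$, which needs $q$, which needs $t'$; and $t'$ shares the pre-place $p_1$ with $t$, so $\{t,t'\}$ is not even a step and firing $t'$ permanently disables $t$. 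Hence $t$ occurs in no step sequence and the net fails Definition~\ref{def:10dggdd}. So the lemma your induction would need is false as stated, and the middle inclusion cannot be established by any completion of your argument; the most you can salvage from Proposition~\ref{prop:3xxddd} is that all step sequences of a \textsc{bdcsa}-net are well-formed, not that the net itself is well-formed. Any correct treatment has to either add an executability (non-redundancy) hypothesis to the definition of \textsc{bdcsa}-nets or weaken the claimed chain of inclusions.
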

\begin{proof}
    If follows from Proposition~\ref{prop:3}.
\qed
\end{proof}

\section{Scenarios in \textsc{csa}-nets}

Scenarios of \textsc{csa}-net  start at the same initial 
marking and are both backward
and forward de\-ter\-mi\-nistic. As a result, they represent
potential executions with clearly de\-ter\-mi\-ned causal
relationships. They are also more abstract than (mixed) step sequences,
as one scenario will in general correspond to many step sequences.
 
\begin{definition}[scenario and maximal scenario of \textsc{csa}-net]
\label{def:11dd} 
    A \emph{scenario} of a \textsc{csa}-net $\csan$ is a \textsc{cso}-net
    $\cson$ such that $\cson\sqsubseteq\csan$. Moreover, $\cson$ is 
    \emph{maximal} if there is no scenario $\cson'\neq\cson$ 
    such that
    $\cson\sqsubseteq\cson'$.
\\ 
    Notation:
    $\scenarios(\csan)$ and $\maxscenarios(\csan)$ are respectively
    the sets of all scenarios and maximal scenarios
    of $\csan$.
\EOD 
\end{definition}
Intuitively, scenarios  represent
possible executions (concurrent histories) consistent with the 
dependencies imposed by the flow relation. This, in particular,
means that all the transitions have been executed and places marked
at some point during an execution.
Maximal scenarios are complete in the sense that
they cannot be extended any further.
 
Although scenarios represent behaviours of \textsc{csa}-nets
in a different way than step sequences, markings and related notions,
there is a close relationship between these two approaches.
First, we can see that scenarios do not generate step sequences nor markings
which were not generated by the original a\-cyc\-lic net.
\begin{proposition}
\label{prop:8dd}
    $\bigcup f(\scenarios(\csan))\subseteq f(\csan)$, for every
    well-formed
    \textsc{csa}-net $\csan$ and $f=\mixsseq,\sseq, \reach$. 
\end{proposition}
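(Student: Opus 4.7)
The plan is to mimic the proof of the analogous acyclic-net result (Proposition~\ref{prop:8}), by reducing the inclusion to the behavioural monotonicity of the $\sqsubseteq$ relation on \textsc{csa}-nets, which is precisely Proposition~\ref{pr-uusussddd}.

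First I would fix an arbitrary scenario $\cson\in\scenarios(\csan)$. By Definition~\ref{def:11dd}, $\cson$ is a \textsc{cso}-net satisfying $\cson\sqsubseteq\csan$. Then, applying Proposition~\ref{pr-uusussddd} to the pair $\cson\sqsubseteq\csan$ immediately yields $f(\cson)\subseteq f(\csan)$ for each $f\in\{\mixsseq,\sseq,\reach\}$. Taking the union over all $\cson\in\scenarios(\csan)$ gives $\bigcup f(\scenarios(\csan))\subseteq f(\csan)$, which is the required inclusion.

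I do not expect any genuine obstacle here: the well-formedness hypothesis on $\csan$ is not actually needed for the inclusion itself (it is only needed for the converse direction that appears elsewhere), and the key ingredient, Proposition~\ref{pr-uusussddd}, has already been established. The only minor care point is to observe that Definition~\ref{def:11dd} really does give $\cson\sqsubseteq\csan$ (not merely $\cson\subseteq\csan$), so that the initial markings agree and the monotonicity proposition is applicable; this is immediate from how scenarios are defined. Hence the entire argument is a one-line appeal, exactly parallel to the proof of Proposition~\ref{prop:8}.
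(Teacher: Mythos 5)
Your argument is exactly the paper's: the proof there consists of the one-line appeal to Definition~\ref{def:11dd} (which gives $\cson\sqsubseteq\csan$ for every scenario) and Proposition~\ref{pr-uusussddd} (behavioural monotonicity of $\sqsubseteq$), which is precisely what you spell out. Your side remark that well-formedness of $\csan$ is not actually used for this inclusion is also accurate.
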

\begin{proof}
    If follows from Definition~\ref{def:11dd} and Proposition~\ref{pr-uusussddd}. 
\qed
\end{proof}

\begin{definition}[scenario of well-formed step sequence of \textsc{csa}-net]
\label{def:13dd}
    Let $\sigma$ be a well-formed step
    sequence  of a \textsc{csa}-net 
    $\csan= (\an_1,\dots,\an_n,Q_\csan,W_\csan)$.
    Moreover, let $T=\bigcup\sigma$ and
    $\on_i=\scenario_{\an_i}(\proj_{T_{\an_i}}(\sigma))$,
    for every $1\leq i\leq n$.
    Then  
\[
    \scenario_\csan(\sigma)\triangleq
       (\on_1,\dots,\on_n,
        \POST{\csan}{T}\cap Q_\csan,W_\csan|_{(Q\times T)\cup(T\times Q)}) 
\]
    is the \emph{scenario} of $\csan$ generated by $\sigma$.
\EOD
\end{definition}
 
Scenarios of well-formed step sequences
are scenarios in the original structural sense.

\begin{proposition}
\label{pr-icicieddd}
    Let $\sigma$  be a well-formed step sequence
    of a \textsc{csa}-net $\csan$.
\begin{enumerate}
\item
    $\scenario_\csan(\sigma)\in\scenarios(\csan)$
    and
    $\sigma\in\maxsseq(\scenario_\csan(\sigma))$.
\item
    $\sigma\in\maxsseq(\csan)$
    implies $\scenario_\csan(\sigma)\in\maxscenarios(\csan)$.
\end{enumerate}
\end{proposition}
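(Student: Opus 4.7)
The plan is to mirror the structure of the proof of Proposition~\ref{pr-icicied}, but lifted to the CSA-net setting by reducing claims on $\scenario_\csan(\sigma)$ to component-wise claims on the acyclic-net scenarios $\on_i=\scenario_{\an_i}(\proj_{T_{\an_i}}(\sigma))$ and then patching in the buffer-place data.

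For part (1), I first verify that each projection $\sigma_i=\proj_{T_{\an_i}}(\sigma)$ is a well-formed step sequence of the component $\an_i$. This uses Proposition~\ref{pr-883} to get $\sigma_i\in\sseq(\an_i)$, and the fact that $\POST{\an_i}{t}\subseteq\POST{\csan}{t}$ together with Definition~\ref{def:10aadd} to transport the no-double-filling property from $\sigma$ to $\sigma_i$. Proposition~\ref{pr-icicied}(1) then gives $\on_i\in\scenarios(\an_i)$ with $\on_i\sqsubseteq\an_i$ and $\sigma_i\in\maxsseq(\on_i)$. I next verify the three remaining CSO-net conditions from Definition~\ref{def:22} for $\scenario_\csan(\sigma)=(\on_1,\dots,\on_n,Q',W')$: the components are occurrence nets by the preceding step; the buffer place multiplicity conditions ($|\PRE{}{q}|=1$ and $|\POST{}{q}|\leq 1$) follow from the fact that well-formedness of $\sigma$ forbids two different transitions from producing into the same place and forbids two distinct consumers of $q\in Q'$ to both appear (Proposition~\ref{pr-jeffdd}(2) applied to pre-places gives the cardinality bound on $\POST{}{q}$, while $q\in Q'$ means there is at least one producer); and the no-cycle condition on non-buffer places is inherited from each $\on_i$ being an occurrence net plus the fact that $W$ arcs only touch buffer places. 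Together with $P^\init_{\scenario_\csan(\sigma)}=P^\init_\csan$, this yields $\scenario_\csan(\sigma)\sqsubseteq\csan$, hence $\scenario_\csan(\sigma)\in\scenarios(\csan)$. That $\sigma\in\sseq(\scenario_\csan(\sigma))$ holds because the scenario contains exactly the places and transitions needed to fire $\sigma$ (and the enabling condition of Definition~\ref{def:6x} is preserved under the inclusion by Definition~\ref{def:4axxyd}(3)). Maximality of $\sigma$ in the scenario follows from $\bigcup\sigma=T_{\scenario_\csan(\sigma)}$ together with the CSO-net analogue Proposition~\ref{pr-ueeudd}(5).

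For part (2), I argue by contradiction as in Proposition~\ref{pr-icicied}(2). Suppose $\sigma\in\maxsseq(\csan)$ but $\scenario_\csan(\sigma)\notin\maxscenarios(\csan)$. Then there is $\cson'\in\scenarios(\csan)$ with $\scenario_\csan(\sigma)\sqsubseteq\cson'\sqsubseteq\csan$ and $T_{\scenario_\csan(\sigma)}\subsetneq T_{\cson'}$. By Proposition~\ref{pr-uusussddd} we have $\sigma\in\sseq(\cson')$, and since $\bigcup\sigma\neq T_{\cson'}$, Proposition~\ref{pr-ueeudd}(5) yields $\sigma\notin\maxsseq(\cson')$. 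Hence there exists a step $U$ with $\sigma U\in\sseq(\cson')\subseteq\sseq(\csan)$ (the latter inclusion again by Proposition~\ref{pr-uusussddd} and $\cson'\sqsubseteq\csan$), contradicting the maximality of $\sigma$ in $\csan$.

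The main obstacle I expect is the verification in part (1) that the triple produced by Definition~\ref{def:13dd} actually satisfies all the CSO-net conditions of Definition~\ref{def:22}: the well-formedness of each projection $\sigma_i$, the buffer-place cardinality conditions, and the absence of cycles through non-buffer places all have to be extracted from the single well-formedness hypothesis on $\sigma$, and each uses a different part of Proposition~\ref{pr-jeffdd}. The rest of the proof is essentially administrative once part (1) is established.
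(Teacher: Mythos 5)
Your overall strategy is exactly what the paper intends: its own proof of this proposition is literally ``similar to the proof of Proposition~\ref{pr-icicied}'', and your plan --- reduce to the components via Proposition~\ref{pr-883} and Proposition~\ref{pr-icicied}(1), check the \textsc{cso}-net conditions of Definition~\ref{def:22} for the assembled tuple, and run the contradiction argument of Proposition~\ref{pr-icicied}(2) for maximality --- is the right instantiation of that plan. The well-formedness of the projections, the buffer-place cardinality bounds via Proposition~\ref{pr-jeffdd}(2), and all of part (2) are sound.

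The one step whose justification does not suffice as written is condition (3) of Definition~\ref{def:22}. A cycle in the graph of $F_\cson\cup W_\cson$ through a non-buffer place need not stay inside one component: it can leave $\on_i$ through a buffer place, pass through a non-buffer place of $\on_j$, and return through another buffer place, so ``each $\on_i$ is an occurrence net and $W$ only touches buffer places'' does not exclude it. The repair is the temporal argument: assign to each $t\in T=\bigcup\sigma$ the index of the step of $\sigma$ in which it occurs. For a path $t\,F\,p\,F\,u$ with $p$ a non-buffer place, well-formedness makes $t$ the unique producer of $p$, $p\notin P^\init_\csan$, and the enabling condition of Definition~\ref{def:6x}(1) allows simultaneous production and consumption only for buffer places, so the index strictly increases from $t$ to $u$; for $t\,W\,q\,W\,u$ the index increases weakly. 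A cycle containing a place of $P_\cson$ would therefore force a strict overall increase of index around a closed walk, a contradiction. With that patch your proof is complete and coincides with the intended argument.
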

\begin{proof}
    Similar to the proof of Proposition~\ref{pr-icicied}.
\qed
\end{proof}

Two well-formed step sequences
generate the same scenario iff
they execute exactly the same transitions.

\begin{proposition}
\label{pr-icicidd}
    Let $\sigma$ and $\sigma'$ be well-formed step sequences
    of  a \textsc{csa}-net $\csan$.
    Then
    $\scenario_\csan(\sigma)=\scenario_\csan(\sigma')$
    iff
    $\bigcup\sigma=\bigcup\sigma'$.
\end{proposition}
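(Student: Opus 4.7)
The plan is to reduce the statement to the corresponding result for a\-cyc\-lic nets, namely Proposition~\ref{pr-icici}, by observing that every constituent of $\scenario_\csan(\sigma)$ in Definition~\ref{def:13dd} depends on $\sigma$ only through $T=\bigcup\sigma$.

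First I would note that, since the component a\-cyc\-lic nets $\an_1,\dots,\an_n$ of $\csan$ have pairwise disjoint transition sets whose union is $T_\csan$, we have
\[
    \bigcup\sigma
    =\bigcup_{i=1}^{n}\bigl(\bigcup\sigma\cap T_{\an_i}\bigr)
    =\bigcup_{i=1}^{n}\bigcup\proj_{\an_i}(\sigma)\;,
\]
by Definition~\ref{de-77}(1) together with Definition~\ref{def:1aa}. An analogous identity holds for $\sigma'$. Moreover, by Proposition~\ref{pr-883}, each $\proj_{\an_i}(\sigma)$ is a (well-formed) step sequence of $\an_i$, so Proposition~\ref{pr-icici} applies to each component.

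For the $(\Longleftarrow)$ direction, suppose $\bigcup\sigma=\bigcup\sigma'$. Intersecting with $T_{\an_i}$ gives $\bigcup\proj_{\an_i}(\sigma)=\bigcup\proj_{\an_i}(\sigma')$ for every $i$, so Proposition~\ref{pr-icici} yields $\scenario_{\an_i}(\proj_{\an_i}(\sigma))=\scenario_{\an_i}(\proj_{\an_i}(\sigma'))$. Since the set of buffer places $\POST{\csan}{T}\cap Q_\csan$ and the restricted arc relation in Definition~\ref{def:13dd} also depend only on $T$, we get $\scenario_\csan(\sigma)=\scenario_\csan(\sigma')$.

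For the $(\Longrightarrow)$ direction, assume $\scenario_\csan(\sigma)=\scenario_\csan(\sigma')$. Comparing the $i$-th components, $\scenario_{\an_i}(\proj_{\an_i}(\sigma))=\scenario_{\an_i}(\proj_{\an_i}(\sigma'))$ for every $i$, and Proposition~\ref{pr-icici} then gives $\bigcup\proj_{\an_i}(\sigma)=\bigcup\proj_{\an_i}(\sigma')$. Taking the union over $i$ and using the displayed identity above yields $\bigcup\sigma=\bigcup\sigma'$.

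The only point that requires care, rather than being pure bookkeeping, is the componentwise reduction: one must verify that $\proj_{\an_i}(\sigma)$ is itself well-formed (so that Proposition~\ref{pr-icici} is applicable). This is immediate, however, since restriction and deletion of empty steps preserve Definition~\ref{def:10aa} once one observes that the relevant pre- and post-sets of a transition $t\in T_{\an_i}$ coincide in $\an_i$ and in $\csan$ on the places of $\an_i$.
\qed
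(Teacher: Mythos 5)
Your proof is correct and follows essentially the same route as the paper, whose entire proof is the one-line remark that the claim ``follows directly from Definition~\ref{def:13}'': you simply unpack that definition componentwise, noting that every constituent of $\scenario_\csan(\sigma)$ depends on $\sigma$ only through $\bigcup\sigma$, and invoke Proposition~\ref{pr-icici} for each $\an_i$. Your additional care in checking that the projections $\proj_{\an_i}(\sigma)$ are well-formed step sequences of the components (via Proposition~\ref{pr-883} and the observation that post-sets restricted to $P_{\an_i}$ agree) fills in a hypothesis the paper leaves implicit, and is a welcome improvement rather than a deviation.
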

\begin{proof}
    It follows directly from Definition~\ref{def:13}.
\qed
\end{proof}

The well-formedness of \textsc{csa}-nets can also be expressed
in terms of scenarios.

\begin{proposition}
\label{prop:10sddd}
    The following statements are equivalent,
    for every \textsc{csa}-net.
\begin{enumerate}
\item
    The \textsc{csa}-net   is well-formed.

\item
    Each transition occurs in at least one
    scenario, and each step sequence is a
    step sequence of at least one scenario. 
\end{enumerate}
\end{proposition}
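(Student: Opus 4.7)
The plan is to mirror the proof of Proposition~\ref{prop:10sd}, adapting each step to the \textsc{csa}-net setting and substituting the analogous \textsc{csa}-net results for their acyclic net counterparts. Since here we have only two equivalent statements rather than three, the argument is more compact: I would just show $(1) \Rightarrow (2)$ and $(2) \Rightarrow (1)$ directly, without routing through a firing-sequence characterisation (which in the \textsc{csa}-net case is blocked by syn-cycles anyway, cf.~Proposition~\ref{prop-uhddegr}).

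For $(1) \Rightarrow (2)$, I would pick any step sequence $\sigma \in \sseq(\csan)$. Since $\csan$ is well-formed, $\sigma$ itself is well-formed, so by Proposition~\ref{pr-icicieddd}(1) the triple $\scenario_\csan(\sigma)$ is a scenario of $\csan$ and $\sigma \in \maxsseq(\scenario_\csan(\sigma)) \subseteq \sseq(\scenario_\csan(\sigma))$. For the transition part, well-formedness of $\csan$ gives a step sequence through each transition $t$, and the scenario it induces witnesses $t$ appearing in a scenario.

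For $(2) \Rightarrow (1)$, suppose condition (2) holds and let $\sigma$ be any step sequence of $\csan$. By (2), there is a scenario $\cson \in \scenarios(\csan)$ with $\sigma \in \sseq(\cson)$. Since scenarios are \textsc{cso}-nets and $\CSON \subset \BDCSAN$ by Proposition~\ref{prop:3x}, Proposition~\ref{prop:3xxddd} gives that all step sequences of $\cson$ are well-formed; in particular $\sigma$ is well-formed in $\cson$. Because the enabling and marking calculations agree on the common nodes (using Definition~\ref{def:4aafffa} and Proposition~\ref{pr-uusussddd}), $\sigma$ is also well-formed as a step sequence of $\csan$. For the transition coverage part, (2) provides, for each $t \in T_\csan$, a scenario $\cson$ with $t \in T_\cson$; by Proposition~\ref{pr-ueeudd}(3) applied to $\cson$, there is $\xi \in \sseq(\cson)$ with $t \in \bigcup\xi$, and by Proposition~\ref{pr-uusussddd} this $\xi$ is a step sequence of $\csan$.

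The only subtle point is verifying that well-formedness of a step sequence inside a scenario $\cson \sqsubseteq \csan$ transfers to $\csan$. Here the key is that Definition~\ref{def:10aadd} speaks about the pre- and post-sets in the ambient net, and Definition~\ref{def:4axxyd}(3) ensures these local environments are preserved when passing between a \textsc{csa}-net and its co-initial subnet. This is the same observation that underpins Proposition~\ref{pr-uusussddd}, so I would invoke it rather than re-prove the equality of local environments from scratch. Beyond that, the argument is routine and purely combinatorial.
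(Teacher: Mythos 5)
Your proof is correct and matches the paper's intended argument: the paper's own proof is only the remark that it is ``similar to the proof of Proposition~\ref{prop:10sd}'', and your adaptation --- reusing Proposition~\ref{pr-icicieddd}(1) for $(1)\Rightarrow(2)$, and deriving $(2)\Rightarrow(1)$ from Propositions~\ref{prop:3xxddd} and~\ref{pr-ueeudd}(3) together with the preservation of local environments under $\sqsubseteq$ (Definition~\ref{def:4axxyd}(3)) --- is exactly the natural instantiation of that remark. Dropping the firing-sequence intermediate step is not a real deviation but a forced simplification, since the \textsc{csa}-net statement has no firing-sequence clause to route through.
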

\begin{proof}
    Similar to the proof of Proposition~\ref{prop:10sd}.
\qed
\end{proof}

Each (maximal) step sequence of a well-formed
\textsc{csa}-net generates a  (maximal)
scenario
and, conversely, each (maximal) scenario  is generated by a
(maximal)
step sequence.

\begin{proposition}
\label{prop:4tredd}
    Let $\csan$ be a well-formed \textsc{csa}-net.
\begin{enumerate}
\item
    $\scenarios(\csan)=\scenario_\csan(\sseq(\csan))$.
\item
    $\maxscenarios(\csan)=\scenario_\csan(\maxsseq(\csan))$.
\end{enumerate}
\end{proposition}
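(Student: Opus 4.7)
The plan is to mirror the structure of the proof of Proposition~\ref{prop:4tre}, adapting each step to handle the extra layer of buffer places and the projection machinery from Definition~\ref{de-77} and Definition~\ref{def:13dd}. The $(\supseteq)$ inclusions in both parts are immediate from Proposition~\ref{pr-icicieddd}, so the work is to establish the $(\subseteq)$ inclusions.

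For part (1), I would start with an arbitrary $\cson\in\scenarios(\csan)$ and pick any $\sigma\in\maxsseq(\cson)$. Since $\cson$ is a \textsc{cso}-net, Proposition~\ref{pr-ueeudd}(5) gives $\bigcup\sigma=T_\cson$, and Proposition~\ref{prop:8dd} applied to $\cson$ (viewed as a well-formed \textsc{csa}-net via Proposition~\ref{prop:3x}) together with $\cson\sqsubseteq\csan$ and Proposition~\ref{pr-uusussddd} gives $\sigma\in\sseq(\csan)$. In particular $\sigma$ is well-formed as a step sequence of $\csan$. The remaining task is to show $\scenario_\csan(\sigma)=\cson$. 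Writing $\cson=(\on_1,\dots,\on_n,Q_\cson,W_\cson)$, I would use Proposition~\ref{pr-883} to see that $\proj_{T_{\an_i}}(\sigma)$ is a step sequence of $\on_i$ whose support is all of $T_{\on_i}$, apply Proposition~\ref{prop:4tre}(1) componentwise to conclude $\scenario_{\an_i}(\proj_{T_{\an_i}}(\sigma))=\on_i$, and then check that the buffer-place and arc components of $\scenario_\csan(\sigma)$ agree with those of $\cson$. For the buffer places, Definition~\ref{def:13dd} gives $\POST{\csan}{T_\cson}\cap Q_\csan$, and by $\cson\sqsubseteq\csan$ together with Definition~\ref{def:20}(4a) this equals $Q_\cson$; for the arcs, the restriction $W_\csan|_{(Q\times T)\cup(T\times Q)}$ coincides with $W_\cson$ by Definition~\ref{def:4axxyd}(3).

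For part (2), I would take $\cson\in\maxscenarios(\csan)$ and, as in the proof of Proposition~\ref{prop:4tre}(2), pick $\sigma\in\maxsseq(\cson)\subseteq\sseq(\csan)$. By part (1), already proved, $\cson=\scenario_\csan(\sigma)$. It remains to verify $\sigma\in\maxsseq(\csan)$. If this failed, there would be a transition $t\in T_\csan$ with $\sigma t\in\sseq(\csan)$, whence $\sigma t$ is well-formed, $\scenario_\csan(\sigma t)\in\scenarios(\csan)$ by Proposition~\ref{pr-icicieddd}(1), and $\cson=\scenario_\csan(\sigma)\sqsubseteq\scenario_\csan(\sigma t)$ with $t\notin T_\cson$, contradicting maximality of $\cson$ via Proposition~\ref{pr-uusugfddd}(2).

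The main obstacle I anticipate is the identification $\scenario_\csan(\sigma)=\cson$ in part (1), specifically the bookkeeping around buffer places: one needs to confirm that every buffer place of $\cson$ is actually filled by some transition of $\sigma$ (so that it appears in $\POST{\csan}{T}\cap Q_\csan$), which in turn relies on Definition~\ref{def:20}(4a) guaranteeing each buffer place has an input transition, together with the fact that $\sigma$ covers all of $T_\cson$. The componentwise application of Proposition~\ref{prop:4tre} is then routine, and the rest of the argument is parallel to the acyclic-net case.
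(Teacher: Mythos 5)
Your proposal is correct and takes essentially the same route as the paper, which simply defers to the proof of Proposition~\ref{prop:4tre}; you additionally work out the \textsc{csa}-specific bookkeeping (projections onto components, buffer places via Definition~\ref{def:20}(4a), arcs via Definition~\ref{def:4axxyd}(3)) that the paper leaves implicit. One small repair for part (2): the witness for non-maximality should be a step $U$ with $\sigma U\in\sseq(\csan)$ rather than a single transition $t$, since synchronous communication through buffer places may prevent any individual $t\in U$ from being executable alone; the contradiction goes through unchanged with $\scenario_\csan(\sigma U)$.
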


\begin{proof}
    Similar to the proof of Proposition~\ref{prop:4tre}. 
\qed
\end{proof}

There is a close correspondence between
scenarios and step sequences of a well-formed
\textsc{csa}-net   $\csan$. In particular,
the behaviour of the \textsc{csa}-net corresponds to the joint behaviour of
its scenarios.

\begin{proposition}
\label{prop:4trefffdd}
    Let $\csan$ be a well-formed \textsc{csa}-net.
\begin{enumerate}
\item
    $f(\csan) =\bigcup f(\scenarios(\csan))$, for
    $f=\sseq,\mixsseq,\reach$.
\item
    $f(\csan) =\bigcup f(\maxscenarios(\csan))$, for
    $f=\maxsseq,\maxmixsseq,\finreach$.
\end{enumerate}
\end{proposition}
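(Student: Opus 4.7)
The plan is to mirror closely the proof of Proposition~\ref{prop:4trefff}, substituting the \textsc{csa}-net analogues of each ingredient. Since $\csan$ is well-formed, every element of $\sseq(\csan)$ and every step appearing in a mixed step sequence is well-formed (Definition~\ref{def:10dggdd}), so Definition~\ref{def:13dd} and Proposition~\ref{pr-icicieddd} are available on every step sequence we will pick up from $\sseq(\csan)$.

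For part (1), all $(\supseteq)$ inclusions follow immediately from Proposition~\ref{prop:8dd}. For $(\subseteq)$ with $f=\sseq$, I would take $\sigma\in\sseq(\csan)$, note it is well-formed, and apply Proposition~\ref{pr-icicieddd}(1) to obtain $\scenario_\csan(\sigma)\in\scenarios(\csan)$ with $\sigma\in\maxsseq(\scenario_\csan(\sigma))\subseteq\sseq(\scenario_\csan(\sigma))$. The $f=\mixsseq$ case lifts this: given $\mu=M_0U_1M_1\dots U_kM_k\in\mixsseq(\csan)$ whose underlying step sequence is $\sigma$, the scenario $\cson=\scenario_\csan(\sigma)$ satisfies $\cson\sqsubseteq\csan$, so $M^{\mathit{init}}_\cson=M^{\mathit{init}}_\csan=M_0$; by Proposition~\ref{pr-uusussddd} the same $\mu$ is a mixed step sequence of $\cson$. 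The $f=\reach$ case follows from the $\mixsseq$ case by reading off the final marking.

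For part (2), it suffices to treat $f=\maxsseq$; the cases $\maxmixsseq$ and $\finreach$ then follow, because a maximal mixed step sequence is exactly a mixed step sequence whose underlying step sequence is maximal (by Definition~\ref{def:7}(3,4)), and a final reachable marking is the final marking of a maximal mixed step sequence. For $(\subseteq)$ I would take $\sigma\in\maxsseq(\csan)$; Proposition~\ref{pr-icicieddd}(2) yields $\scenario_\csan(\sigma)\in\maxscenarios(\csan)$, and part (1) of that same proposition gives $\sigma\in\maxsseq(\scenario_\csan(\sigma))$. For $(\supseteq)$ I would take $\cson\in\maxscenarios(\csan)$ and $\sigma\in\maxsseq(\cson)$; by Proposition~\ref{prop:8dd}, $\sigma\in\sseq(\csan)$. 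If $\sigma\notin\maxsseq(\csan)$, then some step $U$ extends it in $\csan$, so $\sigma U$ is still well-formed (being a prefix of a step sequence of the well-formed $\csan$), and $\scenario_\csan(\sigma U)$ is a scenario strictly larger than $\cson$ in the $\sqsubseteq$ ordering, contradicting maximality.

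The main obstacle will be the strict-inclusion step $\cson\sqsubseteq\scenario_\csan(\sigma U)$ with $\cson\neq\scenario_\csan(\sigma U)$. This requires showing $T_\cson=\bigcup\sigma\subsetneq\bigcup(\sigma U)=T_{\scenario_\csan(\sigma U)}$ (for which I use $U\neq\es$ together with Proposition~\ref{pr-jeffdd}(2), which forbids repetition of transitions in a well-formed step sequence), and then invoking Proposition~\ref{pr-uusugfddd}(2) to conclude that the two scenarios actually differ. The co-initiality $\scenario_\csan(\sigma U)\sqsubseteq\csan$ is given by the $f=\sseq$ direction of part (1) already proved, and $\cson\sqsubseteq\scenario_\csan(\sigma U)$ follows from matching initial markings and the componentwise inclusion $T_{\on_i}\subseteq T_{\scenario_{\an_i}(\proj_{T_{\an_i}}(\sigma U))}$ given by $\sigma\in\sseq(\cson)$ and Proposition~\ref{pr-883}.
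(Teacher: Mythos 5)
Your proposal is correct and takes essentially the same route as the paper, whose entire proof of this proposition reads ``Similar to the proof of Proposition~\ref{prop:4trefff}'': you reproduce that argument faithfully in the \textsc{csa}-net setting, with Proposition~\ref{prop:8dd} giving the easy inclusions, Proposition~\ref{pr-icicieddd} the hard ones, and the extend-by-one-step contradiction handling maximality. One small mis-citation worth fixing: in the $\mixsseq$ case of part (1), Proposition~\ref{pr-uusussddd} goes the wrong way (it maps behaviours of the subnet into the supernet); what you actually need there is that $\sigma\in\sseq(\scenario_\csan(\sigma))$ together with the preservation of transitions' local environments and of the initial marking (Definitions~\ref{def:4axxyd} and~\ref{def:4aafffa}) forces the intermediate markings, and hence $\mu$ itself, to coincide in both nets --- which is exactly what the paper invokes via Definition~\ref{def:4aaa} in the acyclic-net case.
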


\begin{proof}
    Similar to the proof of Proposition~\ref{prop:4trefff}. 
\qed
\end{proof}

\begin{example}
\label{ex-3d3}
Consider the \textsc{csa}-net $\csan_1$ in Figure~\ref{fig-2x}($a$).
Then
$\scenario_{\csan_1}(\{a,e\})=\cson_1$,
$\scenario_{\csan_1}(eab)=\cson_2$, and
$\scenario_{\csan_1}(\{c,e\}\{d,f\})=\cson_3$.
\EOD
\end{example}
 
\begin{definition}[syn-cycle of \textsc{csa}-net]
\label{def-syncycledd}
    The \emph{syn-cycles} of a well-formed 
    \textsc{csa}-net $\csan$ are  given by
    $\syncycles(\csan)=\bigcup\{\syncycles(\cson)\mid\cson\in\scenarios(\csan)\}$.
    \\
    Notation:
    $\syncycles(\csan)$ is the set of all sync-cycles of $\csan$.
\EOD
\end{definition}
 
\begin{example}
\label{ex-33dddd}
For $\csan_1$ in Figure~\ref{fig-2x}($a$),
$\syncycles(\cson_3)=\{\{a\},\{b\},\{c\},\{e\},\{d,f\}\}$.
\EOD
\end{example} 

Each step occurring in a step sequence of a \textsc{csa}-net 
can be partitioned into syn-cycles (in a unique way). 

\begin{proposition}
\label{prop-uhddegrdd}
     Let $M$ be a reachable marking of a well-formed \textsc{csa}-net $\csan$
     and $M\STEP{U}{\csan}M'$.
     Then there are syn-cycles
     $S_1,\dots, S_k\in\syncycles(\csan)$ 
     such that $U=S_1\uplus\dots\uplus S_k$ and 
     $M\STEP{S_1\dots S_k}{\csan}M'$.
\end{proposition}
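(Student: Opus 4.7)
The approach is to transport the problem into a scenario of $\csan$, apply the already-established \textsc{cso}-net version (Proposition~\ref{prop-uhddegr}), and then lift the syn-cycle decomposition back to $\csan$. Since $M$ is reachable, fix $\sigma' \in \sseq(\csan)$ with $M_\csan^\init \STEP{\sigma'}{\csan} M$, and let $\sigma = \sigma' U$. Because $\csan$ is well-formed (Definition~\ref{def:10dggdd}), $\sigma$ is a well-formed step sequence of $\csan$.

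Next, I would form the scenario $\cson = \scenario_\csan(\sigma)$. By Proposition~\ref{pr-icicieddd}(1), $\cson\in\scenarios(\csan)\subseteq\CSON$ and $\sigma\in\sseq(\cson)$; in particular $\sigma'\in\sseq(\cson)$. Using Proposition~\ref{pr-jeffdd}(3) in both $\csan$ and $\cson$ (which agree on the flow of the transitions in $\bigcup\sigma$, by Definitions~\ref{def:4axxyd}(3) and~\ref{def:13dd}), the marking reached from $M_\cson^\init = M_\csan^\init$ by $\sigma'$ in $\cson$ coincides with $M$, and the one reached by $\sigma$ coincides with $M'$. Thus $M\STEP{U}{\cson}M'$, and Proposition~\ref{prop-uhddegr} yields syn-cycles $S_1,\dots,S_k\in\syncycles(\cson)$ with $U=S_1\uplus\dots\uplus S_k$ and $M\STEP{S_1\dots S_k}{\cson}M'$. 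By Definition~\ref{def-syncycledd} we immediately have $S_1,\dots,S_k\in\syncycles(\csan)$.

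It remains to lift $M\STEP{S_1\dots S_k}{\cson}M'$ to $M\STEP{S_1\dots S_k}{\csan}M'$. Since $\cson\sqsubseteq\csan$, Proposition~\ref{pr-uusussddd} (or a direct appeal to Definitions~\ref{def:4axxyd}(3) and~\ref{def:6x}, which show that the enabling test and the marking update formulas on any $S_i$ depend only on $\PRE{\cdot}{S_i}$ and $\POST{\cdot}{S_i}$, and these are preserved between $\cson$ and $\csan$) transports the step sequence to $\csan$ with the same resulting marking. This closes the argument.

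\textbf{Expected main obstacle.} The only delicate point is verifying that the $\csan$-marking $M$ and the $\cson$-marking reached in the scenario genuinely coincide, so that the execution of $S_1\dots S_k$ transfers verbatim. This hinges on the fact that $P_\cson$ contains $P_\csan^\init$ together with every post-place of a transition occurring in $\sigma$, and that $Q_\cson = \POST{\csan}{\bigcup\sigma}\cap Q_\csan$, so Proposition~\ref{pr-jeffdd}(3) forces every token present in $M$ to sit on a place of $\cson$. Once this bookkeeping is discharged, the rest of the argument is routine.
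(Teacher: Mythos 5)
Your proposal is correct and follows essentially the same route as the paper, which simply combines Proposition~\ref{prop-uhddegr} (the \textsc{cso}-net decomposition) with Proposition~\ref{prop:4trefffdd} (behaviours of a well-formed \textsc{csa}-net are the union of the behaviours of its scenarios); your explicit construction of the scenario as $\scenario_\csan(\sigma'U)$ via Proposition~\ref{pr-icicieddd} is just the unpacked form of that second ingredient. The bookkeeping you flag (agreement of the markings reached in $\cson$ and $\csan$, and the transfer of the execution back to $\csan$) is handled exactly as you describe, via the preservation of pre- and post-sets under $\sqsubseteq$.
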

\begin{proof}
     If follows directly from Propositions~\ref{prop-uhddegr} and~\ref{prop:4trefffdd}. 
\qed
\end{proof}

This means, for example, that all reachable markings of a 
well-formed \textsc{csa}-net
can be generated by executing syn-cycles rather than all the potential steps.

The next result can be interpreted as stating that
no part of a well-formed \textsc{csa}-net is semantically irrelevant.

\begin{definition}[coverability and non-redundancy]
\label{def:14} 
    A well-formed \textsc{csa}-net is \emph{covered by scenarios}
    if
    $X_\csan= \bigcup \{X_\cson \mid \cson\in\maxscenarios(\csan)\}$,
    for    
    $X=P,T,F,Q,W$. 
    Moreover,
    $\csan$ has \emph{non-redundant transitions}
    if each transition of $\csan$ occurs in at least one step
    sequence.
\EOD
\end{definition}

\begin{proposition}
\label{prop:14}
    Let $\csan$
    be a well-formed \textsc{csa}-net.
\begin{enumerate}
\item
    $\csan$ is  covered by scenarios
    iff
    it has non-redundant transitions.
\item
    \textsc{cso}-nets have non-redundant transitions.
\end{enumerate}
\end{proposition}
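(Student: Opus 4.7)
The plan is to prove part (1) by a two-direction argument and part (2) as an immediate one-line application of Proposition~\ref{pr-ueeudd}(3). I would handle part (1) first and then derive part (2) as a consequence of the existing characterisations. Well-formedness of $\csan$ is needed throughout to invoke Proposition~\ref{pr-icicieddd} and the correspondence between well-formed step sequences and scenarios.

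For the ($\Rightarrow$) direction of part (1), I fix $t \in T_\csan$. The coverage hypothesis supplies $\cson \in \maxscenarios(\csan)$ with $t \in T_\cson$, and by Definition~\ref{def:11dd} such a $\cson$ is a \textsc{cso}-net. Proposition~\ref{pr-ueeudd}(3) applied to $\cson$ produces a step sequence $\xi \in \sseq(\cson)$ with $t \in \bigcup\xi$, and since $\cson \sqsubseteq \csan$, Proposition~\ref{pr-uusussddd} promotes $\xi$ to $\sseq(\csan)$, giving non-redundancy of $t$.

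For the ($\Leftarrow$) direction, fix $t \in T_\csan$ and, using non-redundancy, pick $\sigma \in \sseq(\csan)$ with $t \in \bigcup\sigma$. Well-formedness forces $\sigma$ to be well-formed, and Proposition~\ref{pr-jeffdd}(2) forbids repeated transitions; finiteness of $\csan$ then allows a greedy extension to some $\sigma^* \in \maxsseq(\csan)$ still containing $t$. Proposition~\ref{pr-icicieddd}(2) yields the maximal scenario $\cson := \scenario_\csan(\sigma^*)$, in which $t$ appears, establishing $T$-coverage. Coverage of $P$, $F$, $Q$, $W$ follows because every place in $P_\csan$ is either initial (inherited via co-initialness, Definition~\ref{def:4aafffa}) or lies in $\POST{\csan}{T_\csan}$ by Proposition~\ref{prop-vxxx}, every buffer place is in $\POST{\csan}{T_\csan}$ by Definition~\ref{def:20}(4a), and every arc of $F_\csan \cup W_\csan$ is incident to a transition; in each case, inclusion of the relevant transition in a scenario drags in its full local environment by Definitions~\ref{def:4axxyd}(3) and~\ref{def:13dd}.

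Part (2) is immediate: any \textsc{cso}-net $\cson$ is a well-formed \textsc{csa}-net by Proposition~\ref{prop:3x}, and Proposition~\ref{pr-ueeudd}(3) states $T_\cson = \bigcup\{\bigcup\xi \mid \xi \in \sseq(\cson)\}$, so every transition of $\cson$ lies in some step sequence. The main obstacle is the structural bookkeeping in the ($\Leftarrow$) direction of part~(1): one has to trace how Definition~\ref{def:13dd} builds the scenario's buffer place set as $\POST{\csan}{T}\cap Q_\csan$ and restricts the $W$-arcs accordingly, and then align this with Definition~\ref{def:4axxyd}(3), to be sure that no place, buffer place, or arc attached to an executed transition is missed once all transitions are covered.
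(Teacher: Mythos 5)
Your proof is correct, but note that the paper itself states Proposition~\ref{prop:14} without any proof at all, so there is nothing to compare against; your argument is the natural one and assembles exactly the results the surrounding text makes available. The ($\Rightarrow$) direction via Proposition~\ref{pr-ueeudd}(3) and Proposition~\ref{pr-uusussddd}, and the ($\Leftarrow$) direction via extension to a maximal well-formed step sequence (terminating because Proposition~\ref{pr-jeffdd}(2) bounds the number of steps by $|T_\csan|$) followed by Proposition~\ref{pr-icicieddd}(2), are both sound. The only place I would be slightly more explicit is the coverage of a $W$-arc of the form $(q,u)$ with $q$ a buffer place: its membership in the scenario is not immediate from $u$'s inclusion alone under Definition~\ref{def:13dd} (whose buffer-place set is $\POST{\csan}{T}\cap Q_\csan$), but it does follow the way you say, because the scenario is a co-initial subnet and Definition~\ref{def:4axxyd}(3) forces $\PRE{\cson}{u}=\PRE{\csan}{u}$, so $q$ and the arc are dragged in; alternatively one can argue that $q$ received its token from some $t'\in\bigcup\sigma^*$ with $t'\,W_\csan\,q$, placing $q$ in $\POST{\csan}{T}$ directly. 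Part (2) via Proposition~\ref{pr-ueeudd}(3) is exactly right.
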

The two ways of capturing the idea that
there are no irrelevant components in $\csan$ are equivalent,
and \textsc{cso}-nets have no redundant components.

The next result shows that an occurrence net has exactly 
one maximal scenario (itself),
which can be interpreted as saying it is a precise representation 
of a single execution history.
\begin{proposition}
\label{prop:9dd}
    $\maxscenarios(\cson)=\{\cson\}$, for every \textsc{cso}-net $\cson$.
\end{proposition}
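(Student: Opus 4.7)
The plan is to mimic the proof of Proposition~\ref{prop:9} almost verbatim, since Definition~\ref{def:11dd} relative to \textsc{cso}-nets has exactly the same shape as Definition~\ref{def:11} relative to occurrence nets. The two ingredients needed are: (i) reflexivity of $\sqsubseteq$ on \textsc{csa}-nets, so that any \textsc{cso}-net trivially qualifies as a scenario of itself; and (ii) antisymmetry of $\sqsubseteq$, so that no distinct scenario can contain $\cson$.

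First I would verify the membership $\cson\in\maxscenarios(\cson)$. Since a \textsc{cso}-net is in particular a \textsc{csa}-net, and every \textsc{csa}-net is a co-initial subnet of itself (all clauses of Definitions~\ref{def:4axxyd} and~\ref{def:4aafffa} hold trivially with equality), we obtain $\cson\sqsubseteq\cson$ and hence $\cson\in\scenarios(\cson)$ by Definition~\ref{def:11dd}. For maximality, I would suppose there were a scenario $\cson''\neq\cson$ of $\cson$ with $\cson\sqsubseteq\cson''$. By definition of scenario also $\cson''\sqsubseteq\cson$, so in particular $T_{\cson''}\subseteq T_\cson$ and $T_\cson\subseteq T_{\cson''}$, giving $T_\cson=T_{\cson''}$. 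Proposition~\ref{pr-uusugfddd}(2) then forces $\cson''=\cson$, a contradiction.

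Next I would rule out any other maximal scenario. Suppose $\cson'\in\maxscenarios(\cson)\setminus\{\cson\}$. Then $\cson'\sqsubseteq\cson$ and $\cson'\neq\cson$, while by the previous paragraph $\cson\in\scenarios(\cson)$. This is a scenario strictly above $\cson'$ in the $\sqsubseteq$ order, contradicting the maximality clause of Definition~\ref{def:11dd} for $\cson'$. Hence $\maxscenarios(\cson)=\{\cson\}$.

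No real obstacle is expected; the only point that deserves a line of justification is the antisymmetry of $\sqsubseteq$ on \textsc{csa}-nets, which is immediate from Proposition~\ref{pr-uusugfddd}(2) (and ultimately from Proposition~\ref{pr-uusu}(2) on each component a\-cyc\-lic net). Everything else is a direct transliteration of the argument used for Proposition~\ref{prop:9}.
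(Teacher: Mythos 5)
Your proposal is correct and follows essentially the same route as the paper's own proof: establish $\cson\in\maxscenarios(\cson)$ via reflexivity of $\sqsubseteq$, and derive a contradiction from any supposed $\cson'\in\maxscenarios(\cson)\setminus\{\cson\}$ using $\cson'\sqsubseteq\cson$. The only difference is that you spell out, via the antisymmetry argument through Proposition~\ref{pr-uusugfddd}(2), why no scenario $\cson''\neq\cson$ can satisfy $\cson\sqsubseteq\cson''$, a step the paper leaves implicit in its appeal to Definition~\ref{def:11dd}(2).
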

\begin{proof}
    By Definitions~\ref{def:4aaa}(1,2) and~\ref{def:11dd}(1), we have
    $\cson\in\scenarios(\cson)$. Hence, by Definition~\ref{def:11dd}(2),
    $\cson\in\maxscenarios(\cson)$.

    Suppose now that $\cson'\in\maxscenarios(\on)\setminus\{\cson\}$. Then,
    we have $\cson'\sqsubseteq\cson$ and $\cson'\neq\cson$.
    This yields a contradiction with Definition~\ref{def:11dd}(2).
\qed
\end{proof}

\section{Behavioural structured a\-cyc\-lic nets}

Behavioural Structured Acyclic Nets (\textsc{bsa}-nets)
extend the model of \textsc{bso}-nets introduced and discussed in~\cite{c1,c4,c20,c11,c21,c23}. 
They
can be seen as a way of capturing the evolution of \textsc{csa}-nets, by  
providing a mechanism to abstract parts of a complex activity by another (simpler) system. 
The behaviour in this case is expressed at two levels, namely the upper-level and lower-level. 
The former provides a simple view and hides unimportant details of the behaviour, while 
the latter shows the full details of behaviour during different evolution stages.

\begin{figure}[ht!]
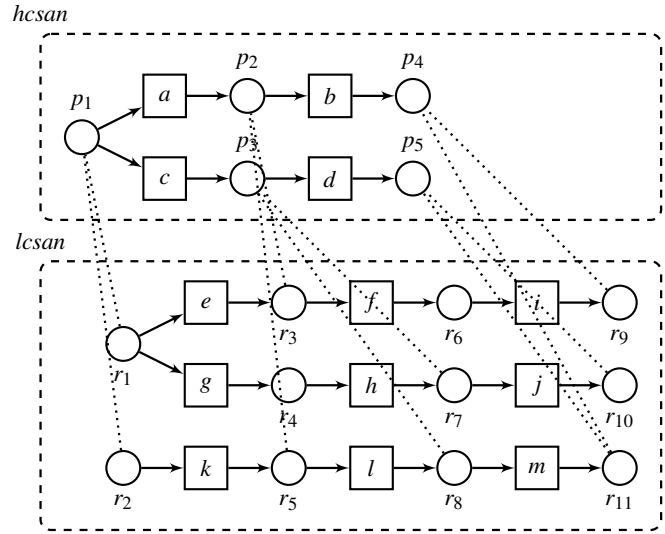

\begin{center}
\StandardNet[0.55]

\Put{-2}{11}{$\hcsan$}
\draw [dashed,rounded corners]
   (-2,10.5) --
   (13,10.5) --
   (13,6) --
   (-2,6) -- cycle;

    \PlacN{P1}{-1}{8}{}{p_1}
    \PlacN{P2}{ 3}{9}{}{p_2}
    \PlacN{P3}{ 3}{7}{}{p_3}
    \PlacN{P4}{ 7}{9}{}{p_4}
    \PlacN{P5}{ 7}{7}{}{p_5}

    \Whitetran{A}{ 1}{9}{a}
    \Whitetran{B}{ 5}{9}{b}
    \Whitetran{C}{ 1}{7}{c}
    \Whitetran{D}{ 5}{7}{d}

    \diredge{P1}{A}
    \diredge{P1}{C}
    \diredge{P2}{B}
    \diredge{P3}{D}

    \diredge{A}{P2}
    \diredge{B}{P4}
    \diredge{D}{P5}
    \diredge{C}{P3}

\Put{-2}{5.5}{$\lcsan$}
\draw [dashed,rounded corners]
   (-2,5) --
   (13,5) --
   (13,-1.5) --
   (-2,-1.5) -- cycle;

    \PlacS{lP1}{ 0}{3}{}{r_1}
    \PlacS{lP2}{ 4}{4}{}{r_3}
    \PlacS{lP3}{ 4}{2}{}{r_4}
    \PlacS{lP4}{ 8}{4}{}{r_6}
    \PlacS{lP5}{ 8}{2}{}{r_7}
    \PlacS{lP6}{12}{4}{}{r_9}
    \PlacS{lP7}{12}{2}{}{r_{10}}
    \PlacS{lQ1}{ 0}{0}{}{r_2}
    \PlacS{lQ2}{ 4}{0}{}{r_5}
    \PlacS{lQ3}{ 8}{0}{}{r_8}
    \PlacS{lQ4}{12}{0}{}{r_{11}}

    \Whitetran{lA}{ 2}{4}{e}
    \Whitetran{lB}{ 6}{4}{f}
    \Whitetran{lC}{ 2}{2}{g}
    \Whitetran{lD}{ 6}{2}{h}
    \Whitetran{lE}{10}{4}{i}
    \Whitetran{lF}{10}{2}{j}
    \Whitetran{lK}{ 2}{0}{k}
    \Whitetran{lL}{ 6}{0}{l}
    \Whitetran{lM}{10}{0}{m}

    \diredge{lP1}{lA}
    \diredge{lP1}{lC}
    \diredge{lP2}{lB}
    \diredge{lP3}{lD}
    \diredge{lP4}{lE}
    \diredge{lP5}{lF}
    \diredge{lQ1}{lK}
    \diredge{lQ2}{lL}
    \diredge{lQ3}{lM}

    \diredge{lA}{lP2}
    \diredge{lB}{lP4}
    \diredge{lD}{lP5}
    \diredge{lC}{lP3}
    \diredge{lE}{lP6}
    \diredge{lF}{lP7}
    \diredge{lK}{lQ2}
    \diredge{lL}{lQ3}
    \diredge{lM}{lQ4}

    \dirDOTS{lP1}{P1}
    \dirDOTS{lQ1}{P1}
    \dirDOTS{lP2}{P2}
    \dirDOTS{lQ2}{P2}
    \dirDOTS{lP5}{P3}
    \dirDOTS{lQ3}{P3}
    \dirDOTS{lP6}{P4}
    \dirDOTS{lQ4}{P4}
    \dirDOTS{lP7}{P5}
    \dirDOTS{lQ4}{P5}

\end{tikzpicture}

\end{center}
\caption{Behavioural structured a\-cyc\-lic net $\bsan_0$ with $n=1$
(i.e., there is one lower-level a\-cyc\-lic net and one upper-level
a\-cyc\-lic net and so graphically there is no difference between $\lcsan$ and
$\lan_1$ as well as between $\hcsan$ and
$\han_1$).}
\label{beh-fig-2x}
\end{figure}

\begin{definition}[\textsc{bsa}-net]
\label{beh-def:20}
    A \emph{be\-ha\-vioural structured a\-cyc\-lic net (or \textsc{bsa}-net)}
    is a triple
    $\bsan\triangleq (\lcsan,\hcsan,\beta)$
    such that:
\[
    \lcsan=(\lan_1,\dots,\lan_n,lQ,lW)
    ~~~\textit{and}~~~
    \hcsan=(\han_1,\dots,\han_n,hQ,hW)
\]
    are well-formed
    \textsc{csa}-nets ($n\geq 1$) with disjoint sets of nodes,
    and
\[
    \beta\subseteq \bigcup_{i=1}^n P_{\lan_i}\times P_{\han_i}
\]
    is a relation  
    satisfying the following, for all $1\leq i\leq n$ and $t\in T_{\han_i}$
    (below $\beta_p=\beta_{\{p\}}\triangleq\{r\mid r\beta p\}$,
    for every $p\in P_\hcsan$):
\begin{enumerate}
\item
    $|M^\init_{\han_i}|=1$ and
    $|\PRE{\han_i}{t}|=|\POST{\han_i}{t}|=1$.

\item
    $\beta_{M^\init_{\han_i}}=M^\init_{\lan_i}$ and
    $\beta_{\PRE {\han_i}{t}}\STEP{}{\lan_i}
     \beta_{\POST{\han_i}{t}}$.
\end{enumerate}
    Notation: $\BSAN$ is the set of all \textsc{bsa}-nets.
\EOD
\end{definition}
Intuitively, $\lcsan$ provides a `lower-level' view
and $\hcsan$ provides a `upper-level' of a record of system behaviour.
The role of $\beta$ is to identify in the lower-level view the
divisions of behaviours into `phases', and each $\beta_p$
indicates a `boundary' between two consecutive phases.

\begin{proposition}
\label{pr-uud}
    Let $\bsan$ be a \textsc{bsa}-net as in Definition~\ref{beh-def:20},
    and let $1\leq i\leq n$.
\begin{enumerate}
\item
    $\reach(\han_i)=\{\{p\}\mid p\in P_{\han_i}\}$.

\item
    $|M\cap P_{\han_i}|=1$, for every $M\in\reach(\hcsan)$.

\item
    $\beta_p\in \reach(\lan_i)$, for every $p\in P_{\han_i}$.
\end{enumerate}
\end{proposition}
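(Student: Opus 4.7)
The plan is to prove the three parts in order, bootstrapping Parts~(2) and~(3) from Part~(1).

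For Part~(1), I first observe that Definition~\ref{beh-def:20}(1) forces $\han_i$ to have a unique initial place $p^*$, since $M^\init_{\han_i}=P^\init_{\han_i}$ is a singleton. The inclusion $\reach(\han_i)\subseteq\{\{p\}\mid p\in P_{\han_i}\}$ is then proved by induction on the length of a step sequence reaching $M$. The base case is trivial. For the inductive step, suppose $\{p\}\STEP{U}{\han_i}M'$; every $t\in U$ has $\PRE{\han_i}{t}\subseteq\{p\}$ with $|\PRE{\han_i}{t}|=1$, so $\PRE{\han_i}{t}=\{p\}$, and since Definition~\ref{def:5}(1) forbids two distinct transitions in a step from sharing a pre-place, $U=\{t\}$ is itself a singleton; then $M'=\POST{\han_i}{t}$ is a singleton by the post-place cardinality hypothesis. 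For the reverse inclusion, $\{p^*\}=M^\init_{\han_i}$ is reachable, and if $p\neq p^*$ then by Proposition~\ref{prop-vvcc} there exists $t\in T_{\han_i}$ with $p\in\POST{\han_i}{t}$; well-formedness of $\han_i$ (inherited from $\hcsan$) supplies a step sequence containing $t$, and executing its prefix up to and including $t$'s step reaches a singleton marking which must equal $\{p\}$ by the first inclusion.

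Part~(2) follows by projection. Given $M\in\reach(\hcsan)$, pick a mixed step sequence $\mu$ of $\hcsan$ ending at $M$. By Proposition~\ref{pr-883}, $\proj_{\han_i}(\mu)\in\mixsseq(\han_i)$, and its final marking is $M\cap P_{\han_i}$; hence $M\cap P_{\han_i}\in\reach(\han_i)$, and Part~(1) yields $|M\cap P_{\han_i}|=1$. For Part~(3), fix $p\in P_{\han_i}$; by Part~(1), choose a step sequence $U_1\dots U_k$ in $\han_i$ from $\{p^*\}$ to $\{p\}$. The argument in Part~(1) shows each $U_j=\{t_j\}$, with intermediate singletons $\{p_0\},\dots,\{p_k\}$ satisfying $p_0=p^*$, $p_k=p$, $\PRE{\han_i}{t_j}=\{p_{j-1}\}$, and $\POST{\han_i}{t_j}=\{p_j\}$. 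Definition~\ref{beh-def:20}(2) then gives $\beta_{p^*}=M^\init_{\lan_i}$ and $\beta_{p_{j-1}}\STEP{}{\lan_i}\beta_{p_j}$ for each $j$, so concatenating these sub-executions yields $\beta_p\in\reach(\lan_i)$.

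The main obstacle is the inductive step in Part~(1), where one must combine the cardinality constraints $|\PRE{\han_i}{t}|=|\POST{\han_i}{t}|=1$ from Definition~\ref{beh-def:20}(1) with the conflict-freeness condition of Definition~\ref{def:5}(1) to conclude that any step enabled at a singleton marking is itself a singleton. Once this is established, Parts~(2) and~(3) follow quite mechanically from Proposition~\ref{pr-883} and the $\beta$-compatibility condition, respectively.
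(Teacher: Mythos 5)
Your proof is correct. Note that the paper itself states Proposition~\ref{pr-uud} without supplying any proof, so there is nothing to compare against; your argument fills that gap and uses exactly the ingredients one would expect: the cardinality constraints of Definition~\ref{beh-def:20}(1) together with Definition~\ref{def:5}(1) to show that every step enabled at a singleton marking of $\han_i$ is a singleton with a singleton successor marking, Proposition~\ref{prop-vvcc} plus well-formedness of the components (guaranteed by Definition~\ref{def:20}(1)) for the reverse inclusion, Proposition~\ref{pr-883} for part~(2), and the $\beta$-compatibility condition of Definition~\ref{beh-def:20}(2) chained along a firing sequence for part~(3). The only step left implicit is in the inductive case of part~(1): from $M=\{p\}=\PRE{\han_i}{t}$ and $\POST{\han_i}{t}=\{q\}$ one gets $M'=(\{p\}\cup\{q\})\setminus\{p\}$, which is the singleton $\{q\}$ only because $q\neq p$; this follows from the acyclicity of $F_{\han_i}$ (a loop $p\,F\,t\,F\,p$ is excluded), and it is worth saying so explicitly since otherwise $M'$ could be empty.
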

That is, a reachable marking of the upper-level \textsc{csa}-net
includes exactly one place from each of its component a\-cyc\-lic nets.
Moreover, all the boundaries between different phases are reachable markings
of the lower-level a\-cyc\-lic nets.

A \textsc{bsa}-net is underpinned by a \textsc{csa}-net
combining the two different views of a database.

\begin{definition}[\textsc{csa}-net underlying \textsc{bsa}-net]
\label{beh-def:20aa}
    Let $\bsan$ be a \textsc{bsa}-net as in Definition~\ref{beh-def:20},
\[
    \Csan(\bsan)
    \triangleq
    (\lan_1,\dots,
    \lan_n,\han_1,\dots,\han_n,lQ\cup hQ,lW\cup hW)\;.
\]
    is the \textsc{csa}-net \emph{underlying} $\bsan$.
\EOD
\end{definition}

\begin{proposition}
\label{pr-uuddd}
    $\Csan(\bsan)$ is a well-formed \textsc{csa}-net.
\end{proposition}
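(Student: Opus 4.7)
The plan is to verify the two clauses of Definition~\ref{def:20} for $\Csan(\bsan)$ and then the two clauses of Definition~\ref{def:10dggdd}. The decisive observation is that in $\Csan(\bsan)$ the lower-level fragment (the components $\lan_1,\dots,\lan_n$, the buffer places $lQ$, and the arcs $lW$) and the upper-level fragment (built from $\han_1,\dots,\han_n$, $hQ$, $hW$) share no nodes and no arcs; the relation $\beta$ linking them contributes neither flow nor buffer arcs. Consequently the Petri-net behaviour of $\Csan(\bsan)$ is nothing but an interleaving of the independent behaviours of $\lcsan$ and $\hcsan$.

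For the \textsc{csa}-net clauses, the $2n$ listed components are well-formed acyclic nets because $\lcsan,\hcsan\in\WFCSAN$ already demand this, and their node sets are pairwise disjoint since the components inside $\lcsan$ are disjoint by Definition~\ref{def:20}(1), the components inside $\hcsan$ likewise, and $\lcsan,\hcsan$ themselves have disjoint node sets by the hypothesis of Definition~\ref{beh-def:20}. The same reasoning shows $lQ\cup hQ$ is disjoint from all places and transitions of the component acyclic nets. Clause~(4a) of Definition~\ref{def:20} for each $q\in lQ\cup hQ$ is inherited from whichever of $\lcsan,\hcsan$ contains $q$; clause~(4b) is inherited as well, because the two transitions adjacent to $q$ via $lW\cup hW$ lie in the same one of $\lcsan,\hcsan$ and belong there to different components, which remain distinct components in $\Csan(\bsan)$.

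For well-formedness, I would first establish a decomposition lemma: a step $U\in\STEPS(\Csan(\bsan))$ splits as $U=U^l\uplus U^h$ with $U^l=U\cap T_\lcsan$ and $U^h=U\cap T_\hcsan$; and $U$ is enabled at $M$ in $\Csan(\bsan)$ iff $U^l$ is either empty or a step enabled in $\lcsan$ at $M\cap(P_\lcsan\cup lQ)$, and symmetrically for $U^h$, with the two firings acting independently on the disjoint half-markings. This uses essentially that no arc crosses the lower/upper boundary, so the enabledness condition of Definition~\ref{def:6x} and the token-update rule decompose additively. Iterating the lemma yields, for any $\sigma\in\sseq(\Csan(\bsan))$, that $\sigma\upharpoonright_{T_\lcsan}\in\sseq(\lcsan)$ and $\sigma\upharpoonright_{T_\hcsan}\in\sseq(\hcsan)$. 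The well-formedness of these two projections, guaranteed by $\lcsan,\hcsan\in\WFCSAN$, transfers to $\sigma$ because the post-sets of lower-level transitions and of upper-level transitions lie in disjoint sets, so both clauses of Definition~\ref{def:10aadd} can be checked fragment by fragment. Conversely, every transition of $\Csan(\bsan)$ occurs in a step sequence of whichever of $\lcsan,\hcsan$ contains it, and such a step sequence lifts verbatim to $\Csan(\bsan)$ by letting the other fragment stay idle.

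The main delicate step is the decomposition lemma: one must confirm that the lower and upper halves really do not interfere at the level of markings, enabledness, and firing. Once that is in place, the well-formedness clauses and the non-redundancy clause follow by direct transfer from $\lcsan$ and $\hcsan$, and no further combinatorics is required.
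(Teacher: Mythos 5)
Your proof is correct and follows the same route the paper intends: the paper's own proof is just the one-line remark that the claim ``follows directly from the definitions'', and your argument supplies exactly the omitted details, the decisive point being that $\beta$ contributes no arcs to $\Csan(\bsan)$, so the underlying \textsc{csa}-net is a disjoint, non-interacting union of $\lcsan$ and $\hcsan$ whose behaviour is the shuffle of their behaviours. Both the structural clauses of Definition~\ref{def:20} and the two well-formedness clauses of Definition~\ref{def:10dggdd} then transfer componentwise, as you show.
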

\begin{proof}
    It follows directly from the definitions.
\qed
\end{proof}

\section{Semantics of \textsc{bsa}-nets}
 
The notions concerning the
structure and semantics of \textsc{bsa}-nets introduced below
are based on similar notions defined their underlying 
\textsc{csa}-nets.

\begin{definition}[notation for \textsc{bsa}-net]
\label{defs:d5x}
Let $\bsan$ be a \textsc{bsa}-net as in Definition~\ref{beh-def:20}.
\begin{enumerate}
\item
    $X_\bsan\triangleq X_{\Csan(\bsan)}$, for $X=P,T,F,Q,W$.
\item
    $\PRE{\bsan}{}\triangleq\PRE{\Csan(\bsan)}{}$ and
    $\POST{\bsan}{}\triangleq\POST{\Csan(\bsan)}{}$.
\item
    $\STEPS(\bsan)\triangleq \STEPS(\Csan(\bsan))$.
\item
    $\MARK(\bsan)\triangleq \MARK(\Csan(\bsan))$.
\item
    $M_\bsan^\init\triangleq M_{\Csan(\bsan)}^\init$
    and
    $M_\bsan^\fin\triangleq M_{\Csan(\bsan)}^\fin$.
\EOD
\end{enumerate}
\end{definition}
The notion of enabled step is different.
It is based on the phases of lower-level a\-cyc\-lic nets
induced by the phase boundaries
iduced by the relation~$\beta$.

\begin{definition}[phase and phase-consistent marking of \textsc{bsa}-net]
\label{beh-def:20da}
    Let $\bsan$ be a \textsc{bsa}-net as in Definition~\ref{beh-def:20}.
\begin{enumerate}
\item
    The \emph{phase} of a place $p\in P_{\han_i}$  ($1\leq i\leq n$) is given by:
\[ 
    \phase(p)
    \triangleq 
    \{\beta_p\}\cup
    \{M\mid \exists t\in\POST{\han_i}{p}: 
    \beta_p\STEP{}{\lan_i}M\STEP{}{\lan_i}\beta_{\POST{\han_i}{t}}\}\;. 
\] 

\item
    A marking $M\in\reach(\Csan(\bsan))$ is \emph{phase-consistent}
    if, for every $1\leq i\leq n$:
\[
    M\cap P_{\lan_i}\in\phase(\beta_{M\cap P_{\han_i}})\;.
\]
\end{enumerate}
Notation: $\phcMARK(\bsan)$ is the set of all phase-consistent markings.
\EOD
\end{definition}
Intuitively,  $\phase(p)$ is a contiguous `chunk' of $\lan_i$  delimited
by the marking corresponding to $p$ (start) and all markings (ends)
corresponding to the places obtained by executing one output transition
of $p$ (such a transition indicates a `phase change'). All markings
in-between belong to the delimited phase.
Moreover, phase-consistency means that the markings of the lower-level
a\-cyc\-lic nets belong to the phases corresponding to the markings of the
upper-level a\-cyc\-lic nets.
 
\begin{example}
\label{ex-44r}
    Figure~\ref{beh-fig-2x} depicts a \textsc{bsa}-net $\bsan_0$.
    We have:
\[
\begin{array}{lcl}
    \phase(p_1)
    & =
    & \{ \{r_1,r_2\}, \{r_1,r_5\}, \{r_3,r_2\}, \{r_3,r_5\}, 
         \{r_1,r_8\}, 
         \{r_4,r_2\}, \{r_4,r_5\}, 
         \\
         &&
         ~~\{r_4,r_8\},
         \{r_7,r_2\}, \{r_7,r_5\}, \{r_7,r_8\} \}
\\
    \phase(p_2)
    & =
    &  \{\{r_3,r_5\}, \{r_3,r_8\}, \{r_3,r_{11}\},
    \{r_6,r_5\}, \{r_6,r_8\}, \{r_6,r_{11}\},
    \\
    &&
    ~~\{r_9,r_5\}, \{r_9,r_8\}, \{r_9,r_{11}\}\}
\\
    \phase(p_3)
    & =
    &  \{ \{r_7,r_8\}, \{r_7,r_{11}\},\{r_{10},r_8\}, \{r_{10},r_{11}\} \}
\\
    \phase(p_4)
    & =
    &  \{\{r_9,r_{11}\}\}
\\
    \phase(p_4)
    & =
    &  \{\{r_{10},r_{11}\}\}.
\end{array}
\]
The   marking $\{r_1,r_{11}\}\in\reach(\lan_1)$ does not belong to any phase
of the places of $\han_1$.
The marking
$\{p_1, r_9,r_{11}\}\in\reach(\Csan(\bsan))$ is not phase-consistent, and
the marking
$\{p_4, r_9,r_{11}\}\in\reach(\Csan(\bsan))$ is phase-consistent.
\EOD
\end{example}

The execution semantics of \textsc{bsa}-nets is restricted
to the phase-consistent markings.

\begin{definition}[enabled and executed step of \textsc{bsa}-net]
\label{def:6wwx}
    Let $M,M'\in\MARK(\bsan)$ be markings of a \textsc{bsa}-net $\bsan$,
    and $U\in\STEPS(\bsan)$ be a step.
    Then
\[
    U\in\enabled_\bsan(M)
    ~~~\textit{and}~~~M\STEP{U}{\bsan}M'
\]
    if $M,M'\in\phcMARK(\bsan)$ and
    $M\STEP{U}{\Csan(\bsan)}M'$.
\EOD
\end{definition}
Intuitively, $\bsan$ is executed in exactly the same way as its
underlying \textsc{csa}-net provided that the
the markings involved
are phase-consistent.

\begin{definition}[mixed step sequence and step sequence of \textsc{bsa}-net]
\label{def:7aff}
    Let $\bsan$ be a \textsc{bsa}-net and
    $\mu=M_0U_1M_1\dots M_{k-1}U_k M_k$ ($k\geq 0$)
    be a sequence such that $M_0,\dots,M_k\in\MARK(\bsan)$ and
    $U_1,\dots,U_k\in\in\STEPS(\bsan)$.
\begin{enumerate}
\item
    $\mu$ is a \emph{mixed step sequence from $M_0$ to $M_k$}
    if $M_{i-1}\STEP{U_i}{\bsan}M_i$,
    for every $1\leq i\leq k$.

\item
    If $\mu$ is a mixed step sequence from $M_0$ to $M_k$,
    then $\sigma=U_1\dots U_k$ is
    a \emph{step sequence from $M_0$ to $M_k$}.
\end{enumerate}
    This is denoted by $M_0\mixSTEP{\mu}{\bsan}M_k$
    and $M_0\STEP{\sigma}{\bsan}M_k$, respectively.
    Also, $M_0\STEP{}{\bsan}M_k$
    denotes that $M_k$ is reachable from $M_0$.
\EOD
\end{definition}

In the last definition, the starting point
is an arbitrary marking.
The next definition assumes that the starting point is
the default initial marking.

\begin{definition}[behaviour of \textsc{bsa}-net]
\label{def:fff7}
    The following sets capture various be\-ha\-vioural notions
    related to step sequences and reachable markings
    of a \textsc{bsa}-net $\bsan$.
\begin{enumerate}
\item
    $\sseq(\bsan)\triangleq\{\sigma\mid  M_\bsan^\init\STEP{\sigma}{\bsan}M\}$
    \hfill \emph{step sequences}.

\item
    $\mixsseq(\bsan)\triangleq\{\mu\mid  M_\bsan^\init\mixSTEP{\mu}{\bsan}M\}$
    \hfill   \emph{mixed step sequences}.

\item
    $\maxsseq(\bsan)\triangleq\{\sigma\in\sseq(\bsan)\mid
         \neg\exists U: \sigma U\in\sseq(\bsan)\}$
    \\\hspace*{\fill}   \emph{maximal step sequences}.

\item
    $\maxmixsseq(\bsan)\triangleq\{\mu\in\mixsseq(\bsan)\mid
         \neg\exists U,M: \mu UM\in\mixsseq(\bsan)\}$
    \\\hspace*{\fill} \emph{maximal mixed step sequences}.

\item
    $\reach(\bsan)\triangleq \{M\mid
               M_\bsan^\init\STEP{}{\bsan}M\}$
    \hfill   \emph{reachable markings}.

\item
    $\finreach(\bsan)\triangleq\{M\mid \exists \sigma\in\maxsseq(\bsan):
               M_\bsan^\init\STEP{\sigma}{\bsan}M\}$
    \\\hspace*{\fill}  \emph{final reachable markings}.
\end{enumerate}
\EOD
\end{definition}

\begin{example}
\label{ex-oodo}
A maximal mixed step sequence of
the \textsc{bsa}-net depicted in Figure~\ref{beh-fig-2x} is:
\[
\begin{array}{lclllllllll}
    \mu
    & =
    &
    \{p_1,r_1,r_2\}&\{g,k\}&
    \{p_1,r_4,r_5\}&\{h,l\}&
    \{p_1,r_7,r_8\}&\{c,m\}
\\
   &&
    \{p_3,r_7,r_{11}\}&\{j\}&
    \{p_3,r_{10},r_{11}\}&\{d\}&
    \{p_5,r_{10},r_{11}\}\;,
\end{array}
\]
and the corresponding maximal
step sequence is $\sigma=\{g,k\} \{h,l\} \{c,m\} \{j\} \{d\}$.
\EOD
\end{example}

As before, we single out nets  with forward and
backward de\-ter\-mi\-nism.

\begin{definition}[\textsc{bso}-net]
\label{ddef:22}
    A \textsc{bsa}-net $\bson=(\lcson,\hcson,\beta)$  
    is a
    \emph{be\-ha\-vioural structured occurrence net (or \textsc{bso}-net)}
    if $\lcson,\hcson\in\CSON$ and there is 
    $\sigma\in\sseq(\bson)$ such that $\bigcup\sigma=T_\bson$.
    \\
    Notation:
    $\BSON$ is the set of all \textsc{bso}-nets.
\EOD
\end{definition}
Note that in \textsc{bso}-nets, the upper-level a\-cyc\-lic nets
are `line-like' occurrence nets.

As before, what really matters is to identify in a \textsc{bsa}-net all the de\-ter\-mi\-nistic
behaviours (scenarios).

\begin{definition}[scenario and maximal scenario of \textsc{bsa}-net]
\label{def:11dx}
    Let $\bsan$ be a \textsc{bsa}-net as in Definition~\ref{beh-def:20}.
\begin{enumerate}
\item
    A \emph{scenario} of  $\bsan$ is a \textsc{bso}-net
    $\bson=(\lcson,\hcson,\beta')$ such that:
\begin{enumerate}
\item
    $\lcson\in\scenarios(\lcsan)$ and $\hcson\in\scenarios(\hcsan)$.
\item
    $\beta'=\beta\cap(P_\lcson\times P_\hcson)$.
\end{enumerate}
\item
    A \emph{maximal scenario} of $\bsan$ is a scenario $\bson$
    such that there is no scenario $\bson'$ satisfying
    $T_\bson\subset T_{\bson'}$.
\end{enumerate}
    Notation:
    $\scenarios(\bsan)$ and $\maxscenarios(\bsan)$ are respectively
    the sets of all scenarios and maximal scenarios of $\bsan$.
\EOD
\end{definition}

\begin{example}
    Figure~\ref{beh-fig-fddfff2x}
    depicts the only two maximal scenarios of the \textsc{bsa}-net in
    Figure~\ref{beh-fig-2x}.
\end{example}

\begin{figure}[ht!]
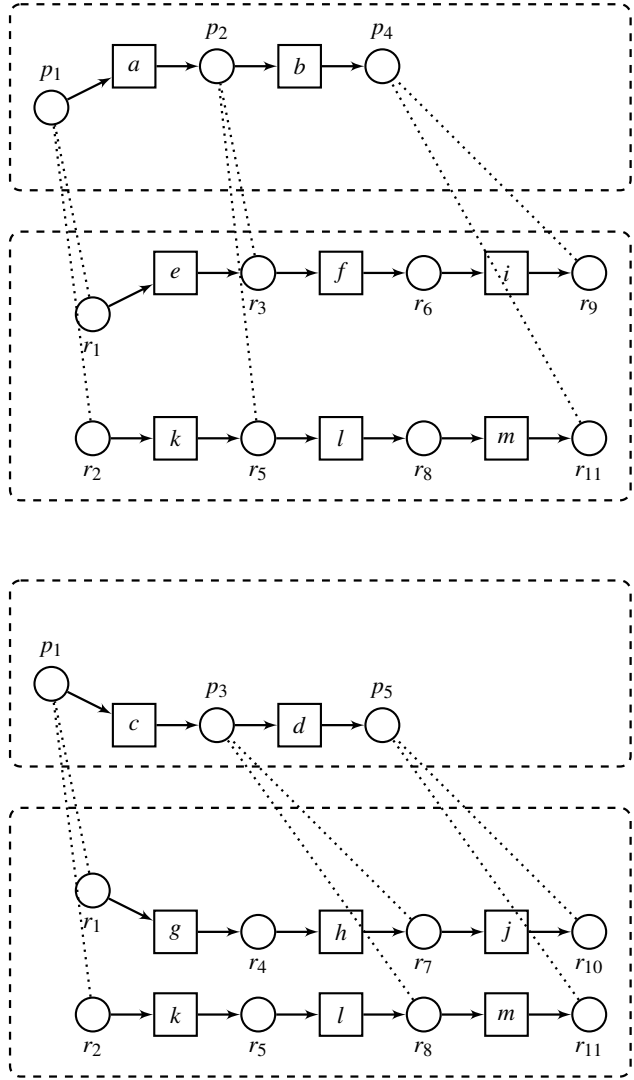

\begin{center}
\StandardNet[0.55]

\draw [dashed,rounded corners]
   (-2,10.5) --
   (13,10.5) --
   (13,6) --
   (-2,6) -- cycle;

    \PlacN{P1}{-1}{8}{}{p_1}
    \PlacN{P2}{ 3}{9}{}{p_2}
    \PlacN{P4}{ 7}{9}{}{p_4}

    \Whitetran{A}{ 1}{9}{a}
    \Whitetran{B}{ 5}{9}{b}

    \diredge{P1}{A}
    \diredge{P2}{B}

    \diredge{A}{P2}
    \diredge{B}{P4}

\draw [dashed,rounded corners]
   (-2,5) --
   (13,5) --
   (13,-1.5) --
   (-2,-1.5) -- cycle;

    \PlacS{lP1}{ 0}{3}{}{r_1}
    \PlacS{lP2}{ 4}{4}{}{r_3}
    \PlacS{lP4}{ 8}{4}{}{r_6}
    \PlacS{lP6}{12}{4}{}{r_9}
    \PlacS{lQ1}{ 0}{0}{}{r_2}
    \PlacS{lQ2}{ 4}{0}{}{r_5}
    \PlacS{lQ3}{ 8}{0}{}{r_8}
    \PlacS{lQ4}{12}{0}{}{r_{11}}

    \Whitetran{lA}{ 2}{4}{e}
    \Whitetran{lB}{ 6}{4}{f}
    \Whitetran{lE}{10}{4}{i}
    \Whitetran{lK}{ 2}{0}{k}
    \Whitetran{lL}{ 6}{0}{l}
    \Whitetran{lM}{10}{0}{m}

    \diredge{lP1}{lA}
    \diredge{lP2}{lB}
    \diredge{lP4}{lE}
    \diredge{lQ1}{lK}
    \diredge{lQ2}{lL}
    \diredge{lQ3}{lM}

    \diredge{lA}{lP2}
    \diredge{lB}{lP4}
    \diredge{lE}{lP6}
    \diredge{lK}{lQ2}
    \diredge{lL}{lQ3}
    \diredge{lM}{lQ4}

    \dirDOTS{lP1}{P1}
    \dirDOTS{lQ1}{P1}
    \dirDOTS{lP2}{P2}
    \dirDOTS{lQ2}{P2}
    \dirDOTS{lP6}{P4}
    \dirDOTS{lQ4}{P4}

\end{tikzpicture}
\\[1cm]

\StandardNet[0.55]

\draw [dashed,rounded corners]
   (-2,10.5) --
   (13,10.5) --
   (13,6) --
   (-2,6) -- cycle;

    \PlacN{P1}{-1}{8}{}{p_1}
    \PlacN{P3}{ 3}{7}{}{p_3}
    \PlacN{P5}{ 7}{7}{}{p_5}
    \Whitetran{C}{ 1}{7}{c}
    \Whitetran{D}{ 5}{7}{d}

    \diredge{P1}{C}
    \diredge{P3}{D}

    \diredge{D}{P5}
    \diredge{C}{P3}

\draw [dashed,rounded corners]
   (-2,5) --
   (13,5) --
   (13,-1.5) --
   (-2,-1.5) -- cycle;

    \PlacS{lP1}{ 0}{3}{}{r_1}
    \PlacS{lP3}{ 4}{2}{}{r_4}
    \PlacS{lP5}{ 8}{2}{}{r_7}
    \PlacS{lP7}{12}{2}{}{r_{10}}
    \PlacS{lQ1}{ 0}{0}{}{r_2}
    \PlacS{lQ2}{ 4}{0}{}{r_5}
    \PlacS{lQ3}{ 8}{0}{}{r_8}
    \PlacS{lQ4}{12}{0}{}{r_{11}}

    \Whitetran{lC}{ 2}{2}{g}
    \Whitetran{lD}{ 6}{2}{h}
    \Whitetran{lF}{10}{2}{j}
    \Whitetran{lK}{ 2}{0}{k}
    \Whitetran{lL}{ 6}{0}{l}
    \Whitetran{lM}{10}{0}{m}

    \diredge{lP1}{lC}
    \diredge{lP3}{lD}
    \diredge{lP5}{lF}
    \diredge{lQ1}{lK}
    \diredge{lQ2}{lL}
    \diredge{lQ3}{lM}

    \diredge{lD}{lP5}
    \diredge{lC}{lP3}
    \diredge{lF}{lP7}
    \diredge{lK}{lQ2}
    \diredge{lL}{lQ3}
    \diredge{lM}{lQ4}

    \dirDOTS{lP1}{P1}
    \dirDOTS{lQ1}{P1}
    \dirDOTS{lP5}{P3}
    \dirDOTS{lQ3}{P3}
    \dirDOTS{lP7}{P5}
    \dirDOTS{lQ4}{P5}

\end{tikzpicture}

\end{center}
\caption{Maximal scenarios for the \textsc{bsa}-net in Figure~\ref{beh-fig-2x}.}
\label{beh-fig-fddfff2x}
\end{figure}

\section{Well-formed  \textsc{bsa}-nets}

The general definition of \textsc{bsa}-net given above does not
guarantee that the step sequences of $\bsan$ cover all possible scenarios
of
the lower-level \textsc{csa}-net. Indeed, in the extreme case,
we can take $\hcsan$ which contains no transitions
at all, and the resulting \textsc{bsa}-net generates then only the
empty step sequence.
It is therefore crucial to identify cases where
$\bsan$  generates at least one step sequence
for every scenario of $\lcsan$.
As a result, the definition of a well-formed
\textsc{bsa}-net is more demanding than the definition of
a well-formed
\textsc{csa}-net.

\begin{definition}[well-formed \textsc{bsa}-net]
\label{def:10wdeeeedx}
    A \textsc{bsa}-net $\bsan$
    is \emph{well-formed}
    if:
\begin{enumerate}
\item
    $\sseq(\bsan)=\bigcup\sseq(\scenarios(\bsan))$.
\item
    For every scenario $\bson\in\maxscenarios(\bsan)$, there is
    $\sigma\in\maxsseq(\bsan)$ such that
    $\sigma\upharpoonright_{T_\bson}\in\maxsseq(\bson)$.
\end{enumerate}
    Notation: $\WFBSAN$ is the set of all well-formed \textsc{bsa}-nets.
\EOD
\end{definition}

\section{Concluding remarks}

This paper provides formalisation and basic properties for the nine classes of nets
listed in the diagram below.

\begin{center}
\begin{tabular}{r@{~}|@{~}l@{~}|@{~}l@{~}|@{~}l@{~}|}
\emph{backward \& forward det}
& occurrence net
& \textsc{cso}-net
& \textsc{bso}-net
\\
\emph{backward  det}
& backward det a\-cyc\-lic net
& \textsc{bdcsa}-net
& \textsc{bdbsa}-net
\\
\emph{no restriction}
& a\-cyc\-lic net
& \textsc{csa}-net
& \textsc{bsa}-net
\end{tabular}
\end{center}

The following are some 
published works on the topics concerned with or related to the material presented above
\cite{c1,c4,c10,c20,c23,c34,c35,c37,c51,c24,c127,c21,c11,c125,c126,c54,DBLP:conf/apn/Alharbi23,DBLP:conf/apn/Alshammari23,DBLP:conf/apn/Almutairi23,DBLP:conf/apn/Alahmadi23,DBLP:journals/topnoc/BhattacharyyaK23}.

\end{document}